\newcommand\reallywidecheck[1]{%
\savestack{\tmpbox}{\stretchto{%
\scaleto{%
\scalerel*[\widthof{\ensuremath{#1}}]{\kern-.6pt\bigwedge\kern-.6pt}%
{\rule[-\textheight/2]{1ex}{\textheight}}
}{\textheight}%
}{0.5ex}}%
\stackon[1pt]{#1}{\scalebox{-1}{\tmpbox}}%
}
\newcommand{\fl}[1][3em]{\hspace{#1}&\hspace{-#1}} 
\newcommand{\nodate}{\date{\vspace{-1.5em}}}
\newcommand{\I}{\mathrm i} 
\newcommand{\Ii}{\mathrm i} 
\newcommand{\Ee}{\mathrm e} 
\newcommand{\Alg}{\mathfrak A}
\newcommand{\term}[1]{{\it #1}\/}
\newcommand{\Reg}{\mathcal O}
\newcommand{\DoubleCone}{{\mathscr C}}
\newcommand{\PoincareGroup}{\mathcal P}
\newcommand{\LorentzGroup}{\mathcal L}
\newcommand{\RealNum}{\mathbb R}
\newcommand{\ComplexNum}{\mathbb C}
\newcommand{\NaturalNum}{\mathbb N}
\newcommand{\SchwartzSpace}{\mathscr S}
\newcommand{\DInt}[2][\,]{{\mathrm d}^{\hspace{-0.20ex}#1}\hspace{-0.25ex}#2}
\newcommand{\norm}[1]{\left \|#1 \right\|}
\newcommand{\normm}[1]{\|#1\|}
\newcommand{\abs}[1]{\left |#1 \right|}
\newcommand{\abss}[1]{|#1|}
\newcommand{\HilbertSpace}{\mathscr H}
\newcommand{\BoundedOps}{\mathrm B}
\newcommand{\LSpace}{L}
\newcommand{\Id}{\mathds 1}
\newcommand{\ContinuousFuncs}{\mathnormal C}
\newcommand{\IndicatorFunction}{\mathds 1}
\newcommand{\CharFct}{\IndicatorFunction}
\renewcommand{\vec}[1]{{\bf #1}} 
\newcommand{\closure}[1]{\overline{#1}}
\DeclareMathOperator*{\support}{supp}
\renewcommand{\Re}{\operatorname{Re}}
\DeclareMathOperator{\Dom}{D} 
\newcommand\numberthis{\addtocounter{equation}{1}\tag{\theequation}}
\newcommand{\standardThms}{\newcounter{thm}

\newtheorem{Prop}[thm]{Proposition}

\newtheorem{Lem}[thm]{Lemma}
\newtheorem{Cor}[thm]{Corollary}
\newtheorem*{Cor*}{Corollary}
\newtheorem{Thm}[thm]{Theorem}
\newtheorem*{Thm*}{Theorem}

\newtheorem{Def}[thm]{Definition}

\theoremstyle{remark}
\newtheorem{Rem}[thm]{Remark}

\newcommand{\solution}{\proof[Solution.]}
}
\def\WO{\mathbf W}
\DeclareMathOperator{\Span}{span}
\def\Pp{{\bf P}}
\def\Ww{\mathcal W} 
\def\WO{\mathbb W} 
\def\WwR{{\Ww_{\mathrm r}}} 
\def\Wwr{\WwR} 
\def\T{\tau}
\def\Pp{\boldsymbol P}
\def\s{s}
\newcommand{\HCol}[1]{{#1}}
\newcommand{\Aa}{\HCol A}
\newcommand{\Bb}{\HCol B}
\newcommand{\BB}{\HCol{\mathcal B}}
\def\itref#1{(\ref{#1})}
\def\FockSpace{\Gamma}
\def\VelocitySupport{{\mathcal V}}
\def\FS{\FockSpace}
\def\VS{\VelocitySupport}
\def\vVS{\vec V}
\def\MAlg{{\mathfrak M}}
\def\C{{\mathrm c}}
\def\fin{{\mathrm f}}
\def\ini{{\mathrm i}}
\def\dd{d} 
\def\FSu{\FS^{\mathrm u}}
\def\BB{B}
\def\SO{\mathrm{SO}}
\def\OO{\mathrm{O}}
\title{
  \(N\)-Particle Scattering in Relativistic
  Wedge-Local 
\\
  Quantum Field Theory
  }
\def\mail{duell@ma.tum.de}
\def\affil{Zentrum Mathematik, Technische Universität München}
\author{Maximilian Duell\footnote{E-mail: \mail}\\[0.5em]\affil}
\begin{document}
\maketitle

 \begin{abstract}
  Multi-particle scattering states are constructed for massive Wigner particles
  in the general operator-algebraic setting of wedge-local quantum field theory.
  The apparent geometrical restriction of the conventional wedge-local
  Haag-Ruelle argument to two-particle scattering states is overcome with a
  swapping symmetry argument based on wedge duality.
 \end{abstract}

\section{Introduction}
Wedge locality has become an increasingly prominent concept in mathematical
physics ever since wedge duality was established in the Wightman framework by
Bisognano and Wichmann \cite{BW75}. In particular, while interacting local
quantum field theories (QFT) in four dimensions are still missing, non-trivial
wedge-local QFT have emerged in recent years \cite{GL07,BLS10}. This provides
strong motivation to develop \(N\)-particle scattering theory in the wedge-local
setting, which is the goal of the present paper.

 The classical Wigner particle concept can still be consistently formulated in
 wedge-local theories as it does not depend on 
 any notion of localization in configuration space.
 Accordingly we may define massive single particle states
 \(\Psi_1 \in \HilbertSpace\) as eigenvectors corresponding to positive eigenvalues of the
 relativistic mass operator~\(M := \sqrt{H^2 - \Pp^2}\). 
 Two-particle scattering states were then constructed in \cite{GL07,BS08} along
 the lines of Haag-Ruelle, using that two particles can be separated by two
 wedge regions \cite{BBS01}.
 Scattering states with a larger number of particles however appeared
 inaccessible or even unnatural in the wedge-local setting as a result of a
 simple geometric consideration: it is impossible to write down three or more
 wedge-local operators whose localization regions are space-like separated.

 In this paper we give a construction of scattering states for an arbitrary
 number of massive Wigner particles in the general wedge-local setting.
 Underlying our arguments is a simple {\em swapping} symmetry, which follows
 from wedge duality and augments cyclicity
 of the vacuum~\(\Omega\) for wedge
 algebras. It states that for 
 a given wedge-local bounded operator~\(A\in \Alg(\Ww)
 \subset \BoundedOps(\HilbertSpace)\) localized in a wedge~\(\Ww \subset
 \RealNum^d\)
 there exists\footnote{up to technical points discussed in \Cref{sec:swap}}
   \(A^\perp\in \Alg(\Ww^\perp)\) such that
 \[ 
   A\Omega = A^\perp \Omega, 
   \label{eq:introSwap} \numberthis
 \]
 where \(A^\perp\) is localized in a translate~\(\Ww^\perp := \Ww' + x\), \(x
 \in \RealNum^\dd\), of the causal complement~\( \Ww' \) in
 Minkowski space of dimension~\(d=s+1\).
 The symmetry~\eqref{eq:introSwap} itself has been known for some time in the
 context of integrable models\footnote{Swapping relations are
   mentioned e.g.\ in \cite{BS08} above Thm. 3.2 for bounded operators, in 
   \cite{Le03} below (3.13) for wedge-local fields, and indirectly
   in even earlier works of Schroer. The general connection
   to wedge-duality has been investigated in depth by Borchers \cite{Bor95},
   Rem.~1.1 and subsequent comments.
 },
 but its utility for the construction of scattering states seems to have so far
 escaped the attention of the experts. In fact, its application 
 in scattering theory appears very natural from the perspective of the causal
 geometry of wedge regions.


Let us now explain the role of the swapping relation~\eqref{eq:introSwap}
for scattering theory by sketching the convergence argument as an example. Let
us recall the standard definitions of Haag-Ruelle theory by
selecting~\(A_k\in\Alg(\Ww)\) (\(1 \leq k \leq n\)) with non-vanishing
projection~\(\Psi_1 = E_{\{M = m\}} A_k \Omega \) onto one-particle space of
mass \(m > 0\) and smear their space-time 
translates~\(\alpha_x(A_k) := U(x) A_k U(x)^*\) first with an auxiliary Schwartz
function~\(\chi \in \SchwartzSpace(\RealNum^{\dd})\) and afterwards with a
positive-energy Klein-Gordon solutions \(f_k\) (also for mass \(m\)) to obtain
{\em creation-operator approximants}
\[ 
  \Bb_k := \Aa_k(\chi) := \int \DInt[\dd] x \; \chi(x) \alpha_x(\Aa_k),
  \label{eq:defCreat}\numberthis
\]
\[
  \BB_{k\T}(f_k) := \int \DInt[\s] x\; f_k(\T, \vec x) \alpha_{(\T, \vec x)}(\Bb_k), \quad
  (\T \in \RealNum).
  \label{eq:defHR}\numberthis
\]
The smearing operation~\eqref{eq:defCreat} suitably restricts the
energy-momentum transfer, while \eqref{eq:defHR} may be understood as a
comparison dynamics in the sense of scattering theory. 
More precisely due to mass gaps we may arrange 
\(\Bb_k\Omega \in E_{\{M = m\}}  \HilbertSpace\) 
for suitable~\(\chi\) (supported in a sufficiently small neighbourhood of the mass shell)
and then \(\BB_{k\T}(f)\Omega = \tilde f_k(\Pp)\Bb_k\Omega\) is
a one-particle state created from the vacuum, which is independent of the
parameter~\(\T \in \RealNum\).  Scattering states are now to be constructed as
limits 
  \[
    \Psi^+ := \lim_{\T \to \infty} \Psi_\T, \quad
    \Psi_\T := \BB_{1\T}(f_1) \BB_{2\T}(f_2) \ldots \BB_{n\T}(f_n)\Omega,
      \label{eq:scattApprox}\numberthis
  \]
whose existence can be reduced to the one-particle convergence if the
norm of pairwise commutators is sufficiently decaying in \(\T\).
However, even if the Klein-Gordon solutions~\(f_k\) describe wave packets which
separate for large enough \(\T\to \pm \infty\), we should not expect
\(\BB_{k\T}(f_k)\) to commute in a general wedge-local model.

Here the swapping relation~\eqref{eq:introSwap} enters and
yields a second family of creation operators defined analogously in terms of
\(\Aa_k^\perp\)  which satisfy
\[\BB_{k\T}^\perp(f_k)\Omega =
  \BB_{k\T}(f_k)\Omega.
  \label{eq:swapCreat}\numberthis
\]
Across the two operator families we now obtain for suitably propagating wave
packets~\(f_k\) an asymptotic decay
\[
  \norm{\left[ \BB_{j\T}(f_j), \BB_{k\T}^\perp(f_k) \right]}
     \leq C_N  (1+\T)^{-N} \quad
  \text{for \(1\leq j < k \leq n\), \(\T > 0\)}.
  \label{eq:commutIntro}\numberthis
\]
To establish convergence of \eqref{eq:scattApprox} we estimate via Cook's method (\(0 < \T_1 < \T_2\))
\begin{align*}
  \norm{\Psi_{\T_2} - \Psi_{\T_1}} &= \norm{ \int_{\T_1}^{\T_2} \DInt \T \, \partial_\T \Psi_\T }
  \leq   \int_{\T_1}^{\T_2} \DInt \T \, \norm{ \partial_\T \Psi_\T },
  \label{eq:introCook}\numberthis
\end{align*}
where the integrand on the right hand side is expanded using the product rule.
To estimate the resulting terms  we make use of \eqref{eq:swapCreat}
 to write
\begin{align*}
  \fl
  \BB_{1\T}(f_1)  \ldots (\partial_\T \BB_{k\T}(f_k)) 
  \ldots \BB_{n\T}(f_n)\Omega
  \\
  &=
  \BB_{1\T}(f_1)  \ldots (\partial_\T \BB_{k\T}(f_k))
  \ldots 
  \BB_{n-1\,\T}(f_{n-1})
  \BB_{n\T}^\perp(f_n)\Omega
  \\&=
  \BB_{n\T}^\perp(f_n)
  \BB_{1\T}(f_1)  \ldots (\partial_\T \BB_{k\T}(f_k)) 
  \ldots 
  \BB_{n-1\,\T}(f_{n-1}) \Omega
  \\&\qquad \qquad
  + \text{(commutators)},
\end{align*}
where commutator terms vanish rapidly  as \(\T \to \infty\) by
\eqref{eq:commutIntro}, \(\norm{\BB_{j\T}(f_j)} \leq C (1+\abs{\T}^{s/2})\)
and \(\normm{\BB_{j\T}^\perp(f_j)} \leq C (1+\abs{\T}^{s/2})\). Iterating a total of \(n-k\) times,
the derivative term will act directly on the vacuum so that we can make use of 
\(\partial_\T \BB_{k\T}(f_k) \Omega = 0\) as in standard Haag-Ruelle theory.  
Altogether 
\eqref{eq:commutIntro} and polynomial norm growth of \(\BB_{j\T}(f_j)\),
\(\BB_{j\T}^\perp(f_j)\)
yield for \(\T > 0 \) the rapid decay 
 \begin{align*}
   \norm{
  \BB_{1\T}(f_1)  \ldots (\partial_\T \BB_{k\T}(f_k)) 
   \ldots \BB_{n\T}(f_n)\Omega} \leq C_N' (1+ \T)^{-N}.
 \end{align*}
 Summing up these terms, we obtain convergence of outgoing scattering
 states~\(\Psi^+\) from Cook's method~\eqref{eq:introCook}. A similar swapping
 argument yields the Fock structure of these scattering states for any number of
 particles~\(n \geq 0\). 
 For \(n\leq 2\) swapping is strictly speaking not necessary, as scattering states can be directly
 constructed via 
 \(
   \lim_{\T \to \infty} \BB_{\T}(f) \BB_{\T}^\perp(f^\perp)\Omega  
 \)
 as in \cite{BBS01,GL07}.
 Lastly it is important to point out that beyond swapping, it is also necessary
 that all operators~\(A_k\) entering in \eqref{eq:scattApprox} are 
 localizable in a common wedge \(\Ww\). Further, the propagation velocities of
 \(f_k\) must be suitably restricted to match the wedge geometry and be in
 correspondence with the fixed ordering of creation-operator approximants
 in~\eqref{eq:scattApprox}, as will be made precise in Sections~\ref{sec:scatt}
 and~\ref{sec:genW}.

Our construction applies in particular to the model
of Grosse and Lechner~\cite{GL07}. This model originated 
from a proposed quantum field theory on a non-commutative space-time, which may
be motivated from gravitational considerations~\cite{DFR95}. Only later a
reinterpretation as wedge-local quantum field theory on ordinary Minkowski
space-time was discovered and it was shown that this model exhibits non-trivial
2-particle scattering~\cite{GL07}.
The curious message of \cite{GL07} was that the model itself is
Poincaré-covariant, while Lorentz symmetry is broken at the level of scattering
states.  To clarify this effect, which is impossible in local quantum field
theories, we give a general analysis of Poincaré covariance of the scattering
states in \Cref{sec:wave}. We intend to apply these results to extend the
pioneering analysis of Grosse and Lechner to the multi-particle scattering data
in a subsequent publication.

This paper is structured as follows.
In \Cref{sec:framework} we introduce the wedge-local variant of the Haag-Kastler
framework providing the standing assumptions of our construction. The
wedge-local Haag-Ruelle theorem is established in \Cref{sec:scatt} under
certain geometrical restrictions allowing for a streamlined proof.
These restrictions are lifted in \Cref{sec:genW}, where we also obtain residual
Lorentz covariance properties and pave the ground for a general discussion of
wave operators and S-matrices in \Cref{sec:wave}.

\section*{Acknowledgements}
I am deeply indebted to Wojciech Dybalski for many valuable suggestions
and his continuous support. Further I would like thank Detlev Buchholz for
comments and communicating \Cref{lem:bu}, and Daniela Cadamuro for helpful
discussions.  I also gratefully acknowledge funding by the DFG within grant
DY107/2-1.

\section{Wedge-Local Quantum Field Theories} 
\label{sec:framework}
Our results are valid for Quantum Field Theory models defined on general
Minkowski space-time~\(\RealNum^{d}\), whose metric we take in the mainly-minus
convention and whose spatial dimension we denote by~\(s:= d-1\).
The family of wedge regions is defined as the
orbit~\(\PoincareGroup \Wwr:=\{\lambda \Wwr = \Lambda \Wwr + x, \lambda  =
    (x,\Lambda) \in \PoincareGroup \} \) of the conventional \term{right
    wedge}~\(\Wwr := \{ (t, \vec x) \in \RealNum^{d}: \abs{t} < x^1 \} \) 
    under the action of the Poincaré group~\(\PoincareGroup\) \cite{BW75}.

A wedge-local quantum field theory model 
in operator-algebraic formulation 
is specified by mathematical objects~\((\Alg, \alpha, \HilbertSpace,
\Omega)\), where \(\HilbertSpace\) is the Hilbert space of pure states
containing the \term{vacuum} as a distinguished unit vector \(\Omega \in
\HilbertSpace\).
The {\em wedge-local net} \(\Alg\) is a mapping from the family wedge regions \(\PoincareGroup\Wwr
\ni \Ww\) to von Neumann algebras~\(\Alg(\Ww) \subset
\BoundedOps(\HilbertSpace)\), which serves to describe Einstein causality at the quantum
mechanical level.  Poincaré symmetry acts on the wedge-local net~\(\Alg\) by a given group of
isomorphisms\footnote{The formulation of our main results requires only
  space-time translations. With some abuse of notation we denote translation
  automorphisms by the same letter \(\alpha\), or \(\alpha_x\), where \(x \in
  \RealNum^d\) is identified with \(\lambda_x = (x, \Id) \in
\PoincareGroup_+^\uparrow\). 
  In particular the basic version of the framework given by
\eqref{eq:HK1}--\eqref{eq:HK6} suffices for multi-particle scattering provided a
suitable swapping assumption holds, and we will state explicitly when the
strengthened variants~\eqref{eq:HK2s} or \eqref{eq:HK3s} are required.
}
\(\alpha_\lambda\) and we denote by 
\(\lambda = (x,\Lambda) \in
  \PoincareGroup_+^\uparrow = \RealNum^d \rtimes \LorentzGroup_+^\uparrow\)
  the elements of the proper orthochronous Poincaré group.

Guided by physical intuition we ask that these objects satisfy wedge-local
variants of the Haag-Kastler postulates, which are concerned with the algebraic
and representation-theoretic properties of \(\Alg\).
Firstly, for any choice of wedge regions~\(\Ww, \Ww_1, \Ww_2 \) we have
\newcommand{\lt}[1]{\text{\bf #1}}%
\begin{align*}
  \lt{Isotony} \quad 
    & \Alg(\Ww_1) \subset \Alg(\Ww_2) \text{ for } \Ww_1 \subset \Ww_2,
  \tag{HK1} \label{eq:HK1}\\
  \lt{Locality} \quad 
  & \Alg(\Ww_1) \subset \Alg(\Ww_2)' 
\text{ for } \Ww_1 \subset \Ww_2',
\tag{HK2} \label{eq:HK2}\\
\lt{Wedge-Duality}  \quad 
  & \Alg(\Ww') = \Alg(\Ww)',
\tag{HK2$^\sharp$} \label{eq:HK2s}\\
  \lt{Translation-Covariance} \quad   
  & \alpha_x(\Alg(\Ww)) = \Alg(\Ww +x), \quad x \in \RealNum^{d},\tag{HK3} \label{eq:HK3}\\
  \lt{Poincaré-Covariance} \quad   
  & \alpha_\lambda(\Alg(\Ww)) = \Alg(\lambda\Ww), \quad \lambda 
  \in \PoincareGroup_+^\uparrow. \tag{HK3$^\sharp$} \label{eq:HK3s}
\end{align*}
Here the Minkowski causal complement  \(\Ww' = - \Lambda \WwR + x\) of 
\(\Ww\) is also a wedge region and \(\Alg(\Ww)'\) denotes the commutant of
\(\Alg(\Ww)\) relative to \(\BoundedOps(\HilbertSpace)\).

On the representation-theoretic side we further assume that translations are
unitarily implemented on the vacuum Hilbert space~\(\HilbertSpace\) by a
strongly continuous \(\s\!+\!1\)-parameter group, \(\alpha_x(A) = U(x) A
  U(x)^*\). The representing unitaries are generated 
  by the \term{energy-momentum operators}
  via~\(U(x) = U(t, \vec x) = \Ee^{\Ii t H - \Ii \vec x \cdot \Pp}\),
  whose joint spectral resolution in terms of projection-operator-valued measures
  will be denoted by \(\Delta \longmapsto E(\Delta)\).
Focusing  also in particular on the analysis of scattering theory it will be convenient
to further impose the following standard assumptions concerned with the vacuum
representation and its one-particle spectrum,
\begin{align*} 
  \lt{ Uniqueness of \(\boldsymbol \Omega\)}
      \quad  & E(\{0\}) \HilbertSpace = \ComplexNum \Omega,
        \tag{HK4} \label{eq:HK4}\\
    \lt{Cyclicity of \(\boldsymbol \Omega\)} 
      \quad & \closure{\Alg(\Ww)\Omega} = \HilbertSpace, 
        \tag{HK5} \label{eq:HK5} \\
    \lt{Mass Gap}
      \quad & H_m \subset \support E \subset \{0\} \cup H_m \cup \bar H_M \subset \bar V^+,
        \tag{HK6} \label{eq:HK6}
\end{align*}
for some \(M > m > 0\), where \(H_m := \{ (\omega_m(\vec p) , \vec p),\, \vec p
\in \RealNum^\s \}\),
\( \omega_m(\vec p):= \sqrt{\vec p^2 + m^2}\), is the (positive) hyperboloid of mass~\(m >
0\) and \(\bar H_M := \{ (\omega, \vec p),\, \vec p
\in \RealNum^\s, \omega \geq \omega_M(\vec p) \}\) denotes the convex hull of
\(H_M\). Note that \eqref{eq:HK6} implies in particular that the one-particle subspace
\(\HilbertSpace_1\) and the corresponding orthogonal projection~\(E_m
:= E(H_m)\) are non-trivial.
We may extended any given wedge-local net also to
regions obtained as sum of a given wedge and any open bounded region~\(\Reg \subset
\RealNum^{\s+1}\) 
by setting \(\Alg(\Reg+\Ww) := (\cup_{x \in \Reg} \Alg(\Ww + x))''\).

For later convenience we will also introduce some refined terminology for wedge
regions concerning their geometry in the case of more than two dimensions. 
Recalling that any wedge region can be written as \(\Ww = \Lambda \WwR + x\), 
we may define the corresponding centered wedge as \(\Ww_\C := \Lambda \WwR\). 
\(\Ww_\C\) is uniquely characterized by the coordinate origin being contained in
its edge, and we will call such wedges \term{centered}.
This concept may be motivated heuristically by noting that scattering situations
are concerned with phenomena at very large distances, making finite translation
by \(x \in \RealNum^d\) in a sense negligible. 
Centered wedges~\(\Ww\) are convex cones in the sense that \(\Ww + \Ww \subset
\Ww\). This assures that the causal ordering given via the \term{precursor}
relation \cite{BBS01}
\[
  \Reg_1 \prec_\Ww \Reg_2 \Longleftrightarrow \Reg_2 - \Reg_1 \subset \Ww_\C
  \label{eq:defPrec}\numberthis
\]
for regions \(\Reg_1,\Reg_2 \subset \RealNum^d\) is transitive and anti-symmetric
(in the stronger sense that \(\Reg_1 \prec_\Ww \Reg_2\) and \(\Reg_2 \prec_\Ww
\Reg_1\) imply \(\Reg_1 = \Reg_2 = \emptyset\)). Thus the precursor
relation is a partial order, which is in fact Poincaré covariant. 
\begin{Prop}
  \label{prop:ordCov}
  For any \(\lambda = (x,\Lambda) \in \PoincareGroup\), any wedge \(\Ww\) and
  any sets \(\Reg_1, \Reg_2 \subset\RealNum^{\s+1}\) we have
  \[
    \Reg_2 \prec_\Ww \Reg_1 \Longleftrightarrow 
    \lambda \Reg_2 \prec_{\Lambda \Ww} \lambda \Reg_1.
    \label{eq:ordCov}\numberthis
  \]
  \proof Follows from the elementary computation
  \begin{align*}
    \Reg_2 \prec_\Ww \Reg_1 & \Longleftrightarrow
    \Reg_1-\Reg_2 \subset \Ww_\C 
    \Longleftrightarrow
    \Lambda\Reg_1 - \Lambda\Reg_2 \subset \Lambda\Ww_\C
    \\&\Longleftrightarrow
    \Lambda\Reg_1 + x - \Lambda\Reg_2 - x \subset (\Lambda\Ww)_c
    \Longleftrightarrow
    \lambda\Reg_2 \prec_{\Lambda\Ww} \lambda\Reg_1. \qedhere
  \end{align*}
\end{Prop}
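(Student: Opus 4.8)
\medskip
\noindent\emph{Proof strategy.}
The plan is to reduce the equivalence~\eqref{eq:ordCov} to the defining
formula~\eqref{eq:defPrec} by invoking two elementary facts: that the centered
wedge transforms covariantly, i.e.\ $(\lambda\Ww)_\C = \Lambda\,\Ww_\C$ for
$\lambda = (x,\Lambda)$, and that the affine action $\lambda\Reg = \Lambda\Reg +
x$ on subsets of $\RealNum^d$ induces a purely linear action on difference sets.
Granting these, the claimed equivalence becomes a one-line bookkeeping
computation, and indeed the whole argument will be a chain of reversible
implications.

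First I would establish the transformation law for centered wedges. Writing
$\Ww = \Lambda_0\WwR + x_0$ with $(x_0,\Lambda_0)\in\PoincareGroup$, one has
$\Ww_\C = \Lambda_0\WwR$, because $\Lambda_0\WwR$ is centered (its edge is the
image under the linear map $\Lambda_0$ of the edge of $\WwR$, hence a subspace
through the origin) and is a translate of $\Ww$; by the uniqueness of the
centered representative in a translation class this pins down $\Ww_\C$.
Applying $\lambda$ gives $\lambda\Ww = \Lambda\Lambda_0\WwR + (\Lambda x_0 + x)$,
and the same reasoning identifies $(\lambda\Ww)_\C = \Lambda\Lambda_0\WwR =
\Lambda\,\Ww_\C$. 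In particular the translation part $x$ drops out, which is
exactly why only the Lorentz part $\Lambda$ appears on the right-hand side
of~\eqref{eq:ordCov}.

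Next I would unfold both sides of the equivalence. By~\eqref{eq:defPrec}, the
statement $\Reg_2 \prec_\Ww \Reg_1$ reads $\Reg_1 - \Reg_2 \subset \Ww_\C$.
Since $\Lambda$ is a linear bijection of $\RealNum^d$ it preserves inclusions and
commutes with the formation of difference sets, so this is equivalent to
$\Lambda\Reg_1 - \Lambda\Reg_2 \subset \Lambda\,\Ww_\C$; adding the common
translation $x$ to minuend and subtrahend leaves the difference set unchanged,
so $\Lambda\Reg_1 - \Lambda\Reg_2 = \lambda\Reg_1 - \lambda\Reg_2$. Combining
with the transformation law of the previous step, the inclusion becomes
$\lambda\Reg_1 - \lambda\Reg_2 \subset (\Lambda\Ww)_\C$, which
by~\eqref{eq:defPrec} is precisely $\lambda\Reg_2 \prec_{\Lambda\Ww}
\lambda\Reg_1$. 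Every step is an equivalence, so~\eqref{eq:ordCov} follows.
There is no genuine obstacle here; the only points deserving a moment's care are
the covariance of the centered wedge and the passage from the affine action on
sets to the linear action on difference sets, and since the argument uses
nothing about $\PoincareGroup$ beyond its semidirect-product structure it holds
verbatim for $\PoincareGroup$, $\PoincareGroup_+^\uparrow$, or any subgroup.
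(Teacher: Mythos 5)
Your proof is correct and follows essentially the same route as the paper's one-line computation: unfold \eqref{eq:defPrec}, use linearity of $\Lambda$ on difference sets, note that the translation $x$ cancels, and identify $\Lambda\Ww_\C$ with $(\Lambda\Ww)_\C$. You merely make explicit the centered-wedge covariance step that the paper leaves implicit, which is a harmless (and arguably welcome) elaboration.
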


It is clear that the causal complement \(\Ww'\) of any wedge region
\(\Ww\) is also a wedge-region, and that \((\Ww_{\C})'= (\Ww')_{\C}\). 
We say that \(\Ww'\) is the \term{complementary} wedge to \(\Ww\).
More generally we will say that a wedge \(\Ww^\perp\) is {\em opposite} to a given wedge
\(\Ww\) if \(\Ww^\perp\) can be translated into the complement of \(\Ww\), i.e.\ if for
some \(x \in \RealNum^{d}\) we have \(\Ww^\perp + x \subset \Ww'\).
Lastly we will see that the construction of scattering states is most convenient for
the geometrical situation of a given wedge whose edge is
parallel to the time-zero hyperplane.
This is equivalent to \(\Ww = R\WwR + x\) for \(x \in \RealNum^d\) and
some spatial rotation \(R \in \SO(s) \subset \LorentzGroup_+^\uparrow\),
and we will call such wedges~\(\Ww\) \term{upright} or \term{non-tilted}.
This is relevant as for upright~\(\Ww\) the restriction of \(\prec_\Ww\) to
certain hyperplanes behaves almost like a total relation, which will be
helpful for establishing the Fock structure of scattering states
in \Cref{sec:proofhrw}.
\begin{Lem}[``quasi-totality'' of \(\prec_\Ww\) for velocity supports]
  \label{lem:uprightW}
  Let \(\Ww\) be an upright wedge and let \(\VS_k, \VS_k' \subset \RealNum^{\s+1}\), (\(k=1, 2\)), be
  sets of the form (``velocity supports'')
  \[
    \VS_k = \{1\} \times \vVS_k, \quad \vVS_k \subset
    \RealNum^{\s}, \quad (\text{similarly for }\VS_k')
    \label{eq:defVeloLike}\numberthis
  \]
  satisfying
  \[
    \VS_2 \prec_\Ww \VS_1, \quad\text{and}\quad
    \VS_2'\prec_\Ww \VS_1'.
    \label{eq:assumOrd} \numberthis
  \]
  Then necessarily at least one of the two relations
  \[
    \VS_2' \prec_\Ww \VS_1, \quad\text{or}\quad
    \VS_2 \prec_\Ww \VS_1' 
    \label{eq:autoOrd} \numberthis
  \]
  must be satisfied as well.

   \proof
   Let \(\Lambda\) be s.t.\ \(\Lambda\Ww = \WwR\) and note that as
  \(\Ww\) is upright we can choose \(\Lambda\) as a spatial rotation. We
  obtain by \Cref{prop:ordCov} that
  \[
    \Lambda \VS_2 \prec_{\WwR} \Lambda \VS_1, 
    \text{ and }
    \Lambda \VS_2' \prec_{\WwR} \Lambda \VS_1'. 
  \]
  Due to the choice as spatial
  rotation, 
  the sets \(\bar \VS_k:= \Lambda \VS_k \)
  are still of the form~\eqref{eq:defVeloLike}, and analogously for \(\bar \VS_k'\).
  Dropping bars, the two assumptions~\eqref{eq:assumOrd} for \(\Ww = \WwR\)
  translate to inequalities
  \[
    \vec e_1 \cdot (\vec g_1 - \vec g_2) > 0
    \text{ and }
    \vec e_1 \cdot (\vec g_1' - \vec g_2') > 0
    \; \forall \; \vec g_k \in \vVS_k, \;
  \vec g_k' \in \vVS_k', \;(k=1,2),
  \label{eq:ineqAss}\numberthis
  \]
  where \(\vec e_1 \in \RealNum^\s\) denotes the spatial unit-vector in \(1\)-direction.
  Assuming that \(\VS_2' \prec_{\WwR} \VS_1\) is false, 
  there must be \(g_2'^* \in \VS_2'\), \(g_1^* \in \VS_1\) forming an ordering
  ``obstruction''. 
  Namely,
  \begin{align*}
    \neg (\VS_2' \prec_{\WwR} \VS_1) 
    & \Longleftrightarrow 
    \neg (\forall \; \vec g_1 \in \vVS_1\;
       \forall \; \vec g_2' \in \vVS_2':\; 
     \vec e_1 \cdot (\vec g_1 - \vec g_2') > 0)
     \\
& \Longleftrightarrow 
\exists \; \vec g_1^* \in \vVS_1\;
 \exists \; \vec g_2'^* \in \vVS_2':\; 
\vec e_1 \cdot (\vec g_1^* - \vec g_2'^*) \leq 0.
  \label{eq:ineqNAss}\numberthis
  \end{align*}
  For any given \(\vec g_2 \in \vVS_2\) 
  and \(\vec g_1' \in \vVS_1'\) we can now estimate by transitivity
  \[
    \vec e_1 \cdot (\vec g_1' - \vec g_2)
    =
    \vec e_1 \cdot (\vec g_1' - \vec g_2'^*) +
    \vec e_1 \cdot (\vec g_2'^* - \vec g_1^*) +
    \vec e_1 \cdot (\vec g_1^* - \vec g_2) > 0,
  \]
  where we used that the first and last term on the right are strictly
  positive for any \(\vec g_2 \in \vVS_2\) 
  and \(\vec g_1' \in \vVS_1'\) as particular instances of
  \eqref{eq:ineqAss} and the middle term is non-negative due to \eqref{eq:ineqNAss}.
  By definition this implies \(\VS_2 \prec_{\WwR} \VS_1'\).
  \qed \end{Lem}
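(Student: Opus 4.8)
The plan is to reduce the statement, via the Poincaré covariance of the precursor relation (\Cref{prop:ordCov}), to the model case $\Ww = \WwR$ and then argue purely combinatorially about a single real coordinate. Concretely, I would first pick $\Lambda \in \LorentzGroup_+^\uparrow$ with $\Lambda\Ww = \WwR$; since $\Ww$ is upright, $\Lambda$ may be taken to be a spatial rotation $R \in \SO(s)$. Applying \Cref{prop:ordCov} transports the two hypotheses $\VS_2 \prec_\Ww \VS_1$ and $\VS_2' \prec_\Ww \VS_1'$ to the corresponding relations for $\WwR$ between the rotated sets $R\VS_k$, $R\VS_k'$; and because a spatial rotation fixes the time component, these rotated sets are still of the velocity-support form~\eqref{eq:defVeloLike}. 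So without loss of generality $\Ww = \WwR$, $\Ww_\C = \WwR$, and everything is governed by the first spatial coordinate.

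Next I would unpack the definition $\Reg_2 - \Reg_1 \subset (\WwR)_\C = \WwR$. Since all the $\VS$-sets lie in the hyperplane $\{1\}\times\RealNum^s$, a difference $\VS_j - \VS_i$ lies in $\{0\}\times\RealNum^s$, and membership in $\WwR = \{(t,\vec x) : |t| < x^1\}$ is equivalent to $x^1 > 0$ on every such difference vector. Hence $\VS_2 \prec_{\WwR} \VS_1$ is exactly the family of strict inequalities $\vec e_1\cdot(\vec g_1 - \vec g_2) > 0$ for all $\vec g_1 \in \vVS_1$, $\vec g_2 \in \vVS_2$, and similarly for the primed sets — this is display~\eqref{eq:ineqAss}. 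The whole problem has now collapsed to the following one-dimensional assertion: if a real-valued function (here $\vec g \mapsto \vec e_1\cdot\vec g$) separates $\vVS_1$ strictly above $\vVS_2$ and $\vVS_1'$ strictly above $\vVS_2'$, then it also separates one of the mixed pairs.

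For that I would argue by contradiction, following the natural dichotomy. Suppose both mixed relations in~\eqref{eq:autoOrd} fail. Failure of $\VS_2' \prec_{\WwR} \VS_1$ produces, by negating the universally quantified inequality, an ``obstruction'' pair $\vec g_1^* \in \vVS_1$, $\vec g_2'^* \in \vVS_2'$ with $\vec e_1\cdot(\vec g_1^* - \vec g_2'^*) \le 0$ (this is~\eqref{eq:ineqNAss}). Then for arbitrary $\vec g_2 \in \vVS_2$ and $\vec g_1' \in \vVS_1'$ I would telescope
\[
  \vec e_1 \cdot (\vec g_1' - \vec g_2)
  = \vec e_1\cdot(\vec g_1' - \vec g_2'^*) + \vec e_1\cdot(\vec g_2'^* - \vec g_1^*) + \vec e_1\cdot(\vec g_1^* - \vec g_2),
\]
where the first summand is $>0$ by the primed instance of~\eqref{eq:ineqAss} (applied to $\vec g_1'$ and $\vec g_2'^*$), the third is $>0$ by the unprimed instance (applied to $\vec g_1^*$ and $\vec g_2$), and the middle one is $\ge 0$ by the obstruction inequality. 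Summing, $\vec e_1\cdot(\vec g_1' - \vec g_2) > 0$ for all such $\vec g_2, \vec g_1'$, which is precisely $\VS_2 \prec_{\WwR} \VS_1'$ — contradicting the other assumed failure. Undoing the rotation via~\eqref{eq:ordCov} gives the claim for general upright $\Ww$.

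I do not expect any genuine obstacle here: the content is a short convexity/linear-separation fact dressed up in wedge geometry, and the only things needing a little care are checking that an upright wedge really admits $\Lambda \in \SO(s)$ (so that velocity-support form is preserved — a tilted wedge would mix time and space and break this), and correctly handling the edge/degenerate cases where some $\vVS$ may be empty (in which case several of the relations hold vacuously and there is nothing to prove). The genuinely load-bearing step is the telescoping identity, but it is a one-line computation once the right obstruction pair is isolated.
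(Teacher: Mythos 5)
Your proposal is correct and follows essentially the same route as the paper's proof: reduction to $\WwR$ via a spatial rotation (using \Cref{prop:ordCov} and preservation of the form~\eqref{eq:defVeloLike}), translation of $\prec_{\WwR}$ into strict inequalities on the first spatial coordinate, and the same telescoping estimate with the obstruction pair from the negation of $\VS_2' \prec_{\WwR} \VS_1$. The only cosmetic difference is that you phrase it as a contradiction from the failure of both mixed relations, whereas the paper derives $\VS_2 \prec_{\WwR} \VS_1'$ directly from the failure of the other one — the same argument.
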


Finally, let us remark that given any Haag-Kastler net of von Neumann algebras
\(\Reg \mapsto \Alg(\Reg)\) defined for open bounded regions \(\Reg\subset
\RealNum^d\), there exists a canonical associated wedge-local net.
On the other hand starting from a wedge-local net the question of existence or
non-existence of local observables can be highly non-trivial, as explained in
the recent review of Lechner~\cite{Le15}.
While previously the existence of suitable localized operators was always
regarded as essential for going beyond two-particle scattering states,
cf.~\cite{BBS01} or \cite{Le06t} Section~6,
we will see in the following that scattering theory in 
most wedge-local models can be studied  in reasonable generality
without any reference to local observables.

  \section{Construction of Scattering States}
 \label{sec:scatt}
\subsection{Swapping Relations for Opposite Wedge Algebras}
\label{sec:swap}

At the core of our subsequent arguments to establish convergence and Fock
structure of scattering states will be certain swapping identities, such as
\eqref{eq:introSwap}. Due to the
mass gap assumption and with our desired application to the construction of
scattering states it will be in fact sufficient to impose \eqref{eq:introSwap} only
after projection to the one-particle subspace.
\begin{Def}[swapping symmetry of single-particle states] \label{def:swap}
  We say that a single-particle vector \(\Psi_1 \in E_m\HilbertSpace\) of
  mass \(m > 0\) is {\em swappable} with respect to a given wedge~\(\Ww\) if there exist
  operators  \(A \in \Alg(\Ww)\), \(A^\perp \in \Alg(\Ww^\perp)\) localized in
  \(\Ww\) and an opposite wedge~\(\Ww^\perp = \Ww' + x\)
  such that \[
      \Psi_1 = E_m A \Omega = E_m A^\perp \Omega. \label{eq:swap}\numberthis
    \]
\end{Def}

As a matter of fact, swapping relations~\eqref{eq:introSwap}, \eqref{eq:swap} can be obtained as a consequence wedge
duality~\eqref{eq:HK2s}, which is a basic and
well-established structural property in quantum field theory.

\begin{Lem}[D.\ Buchholz, private communications (2017)]
    \label{lem:bu}
  In the vacuum representation of a net 
  \(\Alg\) satisfying wedge-duality~\eqref{eq:HK2s} there exist
  nontrivial vectors satisfying the swapping relation, 
  i.e.\ for suitable \(A \in
  \Alg(\Ww)\) and \(A^\perp \in \Alg(\Ww')\) we have
  \[
    \Psi =  A\Omega = A^\perp\Omega. \label{eq:introSwapen}\numberthis
  \]
  Moreover under the same assumptions, the subspaces \(\HilbertSpace^\Ww \subset
  \HilbertSpace\) of swappable vectors~\(\Psi\) associated to each wedge \(\Ww\) are dense.
\end{Lem}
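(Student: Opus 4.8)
The plan is to derive everything from Tomita--Takesaki modular theory applied to the standard pair $(\Alg(\Ww),\Omega)$. First I would fix a wedge $\Ww$ and write $\MAlg:=\Alg(\Ww)$. Cyclicity~\eqref{eq:HK5} gives that $\Omega$ is cyclic for $\MAlg$; applying~\eqref{eq:HK5} to the wedge $\Ww'$ and then wedge duality~\eqref{eq:HK2s} shows $\Omega$ is cyclic for $\Alg(\Ww')=\MAlg'$, hence separating for $\MAlg$. So $(\MAlg,\Omega)$ is in standard form and Tomita--Takesaki theory supplies the modular operator $\Delta$ and conjugation $J$, with $\Delta\Omega=\Omega$, $J\Omega=\Omega$, the Tomita operator $S=J\Delta^{1/2}$ equal to the closure of $A\Omega\mapsto A^*\Omega$ $(A\in\MAlg)$, and --- the decisive input --- $J\MAlg J=\MAlg'$, which by~\eqref{eq:HK2s} is exactly $\Alg(\Ww')$.

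For the existence of swappable vectors I would work with the $\sigma$-weakly dense $*$-subalgebra $\MAlg_\sigma\subset\MAlg$ of elements analytic for the modular group $\sigma_t=\mathrm{Ad}\,\Delta^{it}$, using the standard fact that $\sigma_z$ maps $\MAlg_\sigma$ into $\MAlg$ for every $z\in\ComplexNum$; in particular $\sigma_{-i/2}(B)\in\MAlg$ for $B\in\MAlg_\sigma$. Given $A\in\MAlg_\sigma$ I would set $A^\perp:=J\,\sigma_{-i/2}(A^*)\,J\in J\MAlg J=\MAlg'=\Alg(\Ww')$ and then carry out the short computation: $\sigma_{-i/2}(A^*)\Omega=\Delta^{1/2}A^*\Delta^{-1/2}\Omega=\Delta^{1/2}A^*\Omega$ by $\Delta^{-1/2}\Omega=\Omega$; next $SA^*\Omega=A\Omega$ together with $S=J\Delta^{1/2}$ gives $\Delta^{1/2}A^*\Omega=JA\Omega$; hence $A^\perp\Omega=J\sigma_{-i/2}(A^*)J\Omega=J\sigma_{-i/2}(A^*)\Omega=J(JA\Omega)=A\Omega$. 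Thus $\Psi:=A\Omega=A^\perp\Omega$ with $A\in\Alg(\Ww)$ and $A^\perp\in\Alg(\Ww')$, which is~\eqref{eq:introSwapen}; choosing $A\in\MAlg_\sigma$ with $A\Omega\notin\ComplexNum\Omega$ --- possible since $\dim\HilbertSpace>1$ by~\eqref{eq:HK6} and $\MAlg_\sigma\Omega$ is dense --- gives nontrivial $\Psi$.

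For density I would recall the Gaussian regularization: for any $A\in\MAlg$ the operators $A_n:=\sqrt{n/\pi}\int_\RealNum e^{-nt^2}\sigma_t(A)\,dt$ lie in $\MAlg_\sigma$, satisfy $\|A_n\|\le\|A\|$, and converge $\sigma$-strongly to $A$, so $A_n\Omega\to A\Omega$ in norm. Since the previous paragraph shows $\{A\Omega:A\in\MAlg_\sigma\}$ is contained in the space $\HilbertSpace^\Ww$ of swappable vectors, and $\overline{\MAlg\Omega}=\HilbertSpace$ by~\eqref{eq:HK5}, it follows that $\HilbertSpace^\Ww$ is dense. (In passing, applying $E_m$ to $\Psi=A\Omega=A^\perp\Omega$ also produces single-particle vectors swappable in the sense of \Cref{def:swap}, with $\Ww^\perp=\Ww'$.)

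The main obstacle is not a hard estimate but getting the modular-theoretic bookkeeping exactly right. The genuinely used structural fact is wedge duality~\eqref{eq:HK2s}: it is what turns the modular flip $J\MAlg J=\MAlg'$ into the statement $A^\perp\in\Alg(\Ww')$, so that $A^\perp$ is a localized operator in the sense of the net. Secondly one must use that complex modular times keep analytic elements inside $\MAlg$ (not merely affiliated), which is precisely what makes $A^\perp$ a bona fide bounded element of $\Alg(\Ww')$ rather than an unbounded operator. I would emphasize that no geometric identification of the modular flow (Bisognano--Wichmann) is needed: only the abstract modular theory of the standard pair $(\Alg(\Ww),\Omega)$ enters.
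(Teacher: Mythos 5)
Your proposal is correct and follows essentially the same route as the paper: Tomita--Takesaki theory for the standard pair \((\Alg(\Ww),\Omega)\), Gaussian regularization to produce modular-analytic elements, the swapping partner \(A^\perp = J\,\sigma_{-i/2}(A^{*})\,J\) landing in \(\Alg(\Ww')\) via \(J\MAlg J=\MAlg'\) together with wedge duality~\eqref{eq:HK2s}, and density from cyclicity~\eqref{eq:HK5} plus norm convergence \(A_n\Omega\to A\Omega\). The only differences are cosmetic: you invoke the standard facts about entire analytic elements (boundedness and membership of \(\sigma_{-i/2}(A)\) in \(\MAlg\)) as known, where the paper verifies them by hand through the shifted Gaussian contour, and you treat general \(A\) directly via \(SA^{*}\Omega=A\Omega\) instead of the paper's decomposition into self-adjoint parts.
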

\noindent
Let us briefly motivate and introduce the mathematical preliminaries necessary
for the proof of \Cref{lem:bu} by noting that in Wightman quantum field
theories there is a natural relation between oppositely localizable operators
provided by the TCP-Operator. In the general operator-algebraic framework, a
similar mapping is accessible abstractly by invoking Tomita-Takesaki theory. In
the following we will only give the basic definitions and refer to \cite[Sec.\
  2.5]{BR87} for proofs and further details.

The Tomita-Takesaki construction starts from the observation that the
adjoint operation \(A \mapsto A^*\) can encode non-trivial
information about the structure of a general
von Neumann algebra~\(\MAlg\). In particular if \(\MAlg\) has a cyclic and separating
vector \(\Omega\) one may define the closable {\em Tomita operator} 
\(
S : \MAlg\Omega \longrightarrow \MAlg\Omega\) by setting
\[
      SA\Omega := A^*\Omega, \quad
  S = J\Delta^{1/2},
    \label{eq:deftom}\numberthis
\]
where the positive self-adjoint {\em modular operator} \(\Delta\) and the
anti-unitary {\em modular conjugation} \(J\) are obtained by 
polar decomposition.
The Tomita-Takesaki theorem
\cite[Thm.\ 2.5.14]{BR87} states that
\[
  J\MAlg J = \MAlg', \quad \text{and} \quad \Delta^{\Ii \tau} \MAlg \Delta^{-\Ii
  \tau} = \MAlg, \; (\tau \in \RealNum).
    \label{eq:ttt}\numberthis
\]

In our case we take \(\MAlg = \Alg(\Ww)\), so that the modular objects
\(S_\Ww\), \(J_\Ww\) and \(\Delta_\Ww\) will depend on the wedge \(\Ww\).
It is clear that \(S_\Ww \Omega = S_\Ww \Id \Omega 
  = \Omega = S_{\Ww'}\Omega\) and one has further \cite[Prop.\ 2.5.11]{BR87}
    \[ \Delta_\Ww \Omega =\Omega, \; J_\Ww \Omega = \Omega.
      \label{eq:ttvac}\numberthis
  \]
In this notation the basic idea for the proof of \Cref{lem:bu} is that for
given self-adjoint \(A = A^* \in \Alg(\Ww)\), \(S_\Ww\) acts trivially on
\(A\Omega\) so that \eqref{eq:deftom} and \eqref{eq:ttvac} yield
 \[
  A\Omega = A^*\Omega = S_\Ww A\Omega = J_\Ww \Delta^{1/2}_\Ww A\Omega =
  \underbrace{
  J_\Ww\Delta^{1/2}_\Ww A\Delta^{-1/2}_\Ww J_\Ww }_{=:A^\perp }\Omega. 
  \label{eq:swapproof}\numberthis
\]
Here a candidate for~\(A^\perp\) can be extracted 
up to domain questions, assuming that~\eqref{eq:ttt} applies here also for imaginary
\(\tau = - \Ii /2 \). 
Technically it remains to show that
\(\bar A := \Delta^{1/2}_\Ww  A \Delta^{-1/2}_\Ww \) 
makes sense as {\em bounded}\/ operator which is in \(\Alg(\Ww)\),
because then \eqref{eq:ttt} and wedge-duality give 
\(A^\perp :=  J_\Ww  \bar A  J_\Ww \in \Alg(\Ww)' = \Alg(\Ww')\).

Before proceeding to the proof of \Cref{lem:bu}, we should point out that the
swapping relation~\eqref{eq:introSwapen} is established exactly for 
``touching'' wedges~\(\Ww^\perp = \Ww'\). Then \(\Ww^\perp \cap \Ww\)
is empty, so that in this case \eqref{eq:introSwapen} is non-trivial also for local theories.
Let us recall that wedge duality~\eqref{eq:HK2s} can be proven in the Wightman
framework via the Bisognano-Wichmann property~\cite{BW75} 
\[ 
  U(\Lambda_\Ww(2\pi\tau)) = \Delta_{\Ww}^{-\Ii \tau}.
  \label{eq:BW}\numberthis
\]
Here \(U\) denotes the unitary implementation of 
Poincaré symmetry, and \(\Lambda_\Ww(2\pi\tau)\) is the
oriented one-parameter group of boosts preserving the given wedge~\(\Ww\).
In the setting of standard subspaces,
wedge-duality \eqref{eq:HK2s} and \eqref{eq:BW} have
recently been established for any finite or infinite multiplets of massive
or massless scalar irreducible unitary representations of the Poincaré group
\cite{Mo17}.

\begin{proof}[Proof of \Cref{lem:bu}] We follow the argument of Buchholz
  \cite{Bu17pc}. As we keep \(\Ww\) fixed, we drop wedge indices on the modular
  objects. To establish existence we consider vectors of the form~\(\Psi = A \Omega\)
  with \(A^* = A \in\Alg(\Ww)\).  
  Rigorous control over \eqref{eq:swapproof} is then obtained by passing to 
  operators~\(A_\delta\), (\(\delta > 0\)), which are ``regularized'' with
  respect to the adjoint action of the modular group by setting
  \[
    A_\delta :=
    \int \frac{ \DInt \tau}{ \sqrt{2\pi\delta}} \;
    \Ee^{-\frac{\tau^2} {2\delta}}
    \;
     \Delta^{\Ii \T} A \Delta^{-\Ii \T}.
     \label{eq:defAdelta}\numberthis
  \]
  From the Tomita-Takesaki theorem~\eqref{eq:ttt} we see that the integrand is
  pointwise in \(\Alg(\Ww)\) so that \(A_\delta \in \Alg(\Ww)\) as 
  wedge-algebras are weakly closed.
  Secondly we obtain from strong continuity of \(\Delta^{\Ii \tau}\) that
  \(A_\delta\rightharpoonup A\) in the weak operator topology, so that by modular
  invariance of \(\Omega\) we have \( A_\delta\Omega \to A\Omega\) in norm as
  \(\delta \to 0\).
  Further due to \eqref{eq:defAdelta} the adjoint action of the modular group on
  \(A_\delta\)
  may be computed explicitly as
  \[
    \Delta^{\I t} A_\delta \Delta^{-\I t} 
    =  
    \int \frac{\DInt \tau}{\sqrt{2\pi\delta}} \;
    \Ee^{-\frac{\tau^2} {2\delta}}
    \Delta^{\Ii \T + \Ii t} A \Delta^{-\Ii \T - \Ii t}
  =  
    \int \frac{\DInt \tau'}{\sqrt{2\pi\delta}} \;
    \Ee^{-\frac{(\tau'-t)^2} {2\delta}}
    \Delta^{\Ii \tau'} A \Delta^{-\Ii \tau'}.
     \label{eq:adjAct}\numberthis
  \]

  Returning to \eqref{eq:swapproof} we now define 
  \(\bar A_\delta : = \Delta^{1/2} A_\delta \Delta^{-1/2}\)
  as a quadratic form on a suitable domain.  It will be convenient to restrict
  to \( \Dom_\omega(\Delta^\pm)\ := \{ E_\Delta([k,K])\Psi, \Psi \in
    \HilbertSpace, 0 < k < K \}\), which is dense in \(\HilbertSpace\) by
  spectral calculus.  For \(\Psi_1, \Psi_2 \in\Dom_\omega(\Delta^\pm)\) the
  function \(t \longmapsto \langle \Psi_1, \Delta^{\I t} A_\delta \Delta^{-\I
      t} \Psi_2 \rangle \) is entire analytic.
  It further coincides for \(t \in \RealNum\) with the entire function
  defined by the right hand side of \eqref{eq:adjAct}. By analyticity these
  two entire functions coincide for all \(t \in \ComplexNum\) so that
  \begin{align*} \langle \Psi_1, \Delta^{1/2}
    A_\delta \Delta^{-1/2} \Psi_2 \rangle
    & =
   \int \frac{ \DInt \tau}{ \sqrt{2\pi\delta}} \;
    \Ee^{-\frac{\tau^2} {2\delta}}
    \;
    \langle \Psi_1, \Delta^{\Ii \T + 1/2} A \Delta^{-\Ii \T - 1/2} \Psi_2 \rangle
    \label{eq:cint1}\numberthis
     \\&
     = 
 \int \frac{ \DInt \tau'}{ \sqrt{2\pi\delta}} \;
 \Ee^{\left.-\frac{(\tau'+  \Ii/ 2)^2} {2\delta}\right.}
    \;
    \langle \Psi_1, 
    \Delta^{\Ii \T'} A \Delta^{-\Ii \T'}
    \Psi_2 \rangle. \label{eq:cint2}\numberthis
  \end{align*}

  From~\eqref{eq:cint2} we see firstly that \eqref{eq:cint1} in fact defines a bounded
  bilinear form, so that \(\bar A_\delta\) extends to
  a bounded operator on all of \(\HilbertSpace\), and secondly that \(\bar
    A_\delta \in \Alg(\Ww)\) by repeating the argument below
    \eqref{eq:defAdelta}.
  Thus the swapping partner may be obtained as in \eqref{eq:swapproof} by
  setting \(A^\perp_\delta:= J\bar A_\delta  J\), and noting that~\(A^\perp_\delta
    \in \Alg(\Ww')\) due to \eqref{eq:ttt} and wedge duality \eqref{eq:HK2s}. 

  To establish density of swappable vectors let \(\Psi \in \HilbertSpace \)
  and \(\epsilon > 0\).
  By cyclicity of \(\Omega\) there exists \(A \in \Alg(\Ww)\) such that 
  \(\norm{\Psi - A\Omega} \leq \epsilon/2\). We may then decompose \(A = \frac 1
  2(A + A^*) + \frac 1 2 (A-A^*) =: A_1 + \Ii A_2\) such that the above argument
  applies to \(A_k \Omega\), \(k=1,2\), and the swapping partner of
  \(A_\delta\) is then given by \(A_\delta^\perp := (A_1)_\delta^\perp +
  \Ii (A_2)_\delta^\perp\).
  Choosing \(\delta > 0\) sufficiently small yields
  \(\norm{\Psi - A_{\delta} \Omega} \leq \norm{\Psi - A\Omega} +
    \norm{A_1\Omega - (A_1)_\delta\Omega} + \norm{A_2\Omega -
  (A_2)_\delta\Omega} \leq  \epsilon\) so that we obtain density.
\end{proof}

\begin{Cor}
  Assuming \eqref{eq:HK2s}, single-particle vectors
 satisfying the swapping relation~\eqref{eq:swap} w.r.t.\ any given 
  wedge \(\Ww\) are dense in the single-particle space~\(\HilbertSpace_1 :=
  E_m\HilbertSpace\).
\end{Cor}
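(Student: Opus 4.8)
The plan is to obtain the Corollary directly from the density statement in \Cref{lem:bu} by projecting onto the one-particle space with the bounded operator~\(E_m\). Since \eqref{eq:HK6} guarantees \(\HilbertSpace_1 = E_m\HilbertSpace \neq \{0\}\) with \(m > 0\), it suffices to show that the set \(E_m\HilbertSpace^\Ww\) consists of single-particle vectors satisfying \eqref{eq:swap} and is dense in \(\HilbertSpace_1\).

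Concretely, I would fix \(\Psi_1 \in \HilbertSpace_1\) and \(\epsilon > 0\) and invoke \Cref{lem:bu} to choose \(\Psi \in \HilbertSpace^\Ww\) with \(\norm{\Psi_1 - \Psi} < \epsilon\), together with the witnessing operators \(A \in \Alg(\Ww)\) and \(A^\perp \in \Alg(\Ww')\) such that \(\Psi = A\Omega = A^\perp\Omega\). Applying \(E_m\) and using \(E_m\Psi_1 = \Psi_1\) together with \(\norm{E_m} \leq 1\) yields \(\norm{\Psi_1 - E_m\Psi} \leq \norm{\Psi_1 - \Psi} < \epsilon\). It then remains to observe that \(\Phi := E_m\Psi = E_m A\Omega = E_m A^\perp\Omega\) is exactly of the form required by \Cref{def:swap}: choosing the opposite wedge \(\Ww^\perp := \Ww' = \Ww' + 0\), we have \(A \in \Alg(\Ww)\), \(A^\perp \in \Alg(\Ww^\perp)\), and \(\Phi = E_m A\Omega = E_m A^\perp\Omega \in E_m\HilbertSpace\), so \(\Phi\) is a swappable single-particle vector of mass~\(m\). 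As \(\epsilon > 0\) was arbitrary, the swappable single-particle vectors are dense in \(\HilbertSpace_1\).

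I do not expect a genuine obstacle here; the argument is essentially bookkeeping. The only points to check are that \(\Ww'\) is itself an opposite wedge to \(\Ww\) (realized with translation \(x = 0\)), and that \(E_m\) is a contraction fixing \(\HilbertSpace_1\) pointwise, so that projecting the swapping identity \(A\Omega = A^\perp\Omega\) lands within the scope of \Cref{def:swap} without disturbing the approximation. Equivalently, one may simply note that \(E_m\) maps the dense subspace \(\HilbertSpace^\Ww \subset \HilbertSpace\) of \Cref{lem:bu} onto a dense subset of \(E_m\HilbertSpace = \HilbertSpace_1\), every element of which satisfies \eqref{eq:swap} with \(\Ww^\perp = \Ww'\).
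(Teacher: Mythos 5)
Your argument is correct and is exactly the route the paper intends: the Corollary is stated as an immediate consequence of the density part of \Cref{lem:bu}, obtained by applying the contraction \(E_m\) (which fixes \(\HilbertSpace_1\) pointwise) to the dense set of swappable vectors \(\Psi = A\Omega = A^\perp\Omega\), with \(\Ww^\perp = \Ww'\) serving as the opposite wedge in \Cref{def:swap}. No gap; your bookkeeping matches the paper's (unwritten) proof.
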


We note that for space-time
dimension \(d \geq 2+1\) the dense sets of swappable vectors constructed in
\Cref{lem:bu} in general have a non-trivial dependence on~\(\Ww\).
Interestingly certain wedge-local models also admit a dense subspace of vectors 
which are swappable in the sense of \Cref{def:swap} for all wedges
simultaneously, as can be seen from the results of \cite{BLS10} in the class of
deformed local theories.

A simple and immediate consequence of the swapping relation is the consistency
of our definition of scattering states~\eqref{eq:scattApprox} with previous
discussions of two-particle scattering in wedge-local models \cite{GL07, BS08},
where the physically obvious opposite-localization prescription~\(\Psi^+ :=
  \lim_{\T \to \infty} \BB_{\T}(f) \BB_{\T}^\perp(f^\perp)\Omega \) has been
  used. 
With the swapping relation as main technical tool at hand, we may in fact swap 
\[\Psi^+ = \lim_{\T \to \infty} \BB_{\T}(f) \BB_{\T}^\perp(f^\perp)\Omega
 =  \lim_{\T \to \infty} \BB_{\T}(f) \bar\BB_{\T}(f^\perp)\Omega,
 \label{eq:ans2}\numberthis
\] 
where \(\bar\BB_{\T}(f^\perp) \) is defined in terms of 
\(\bar A \in \Alg(\Ww)\) with \(\bar A \Omega = A^\perp \Omega\).
The new prescription from \eqref{eq:ans2} with all operators localized in the
same wedge~\(\Ww\) now generalizes to \(N\)-particle scattering theory, as 
will be seen in the next section.

\subsection{Wedge-local Haag-Ruelle Theorem} 
\label{sec:proofhrw}

As comparison dynamics for the construction of scattering states we may restrict
to {\em regular positive-energy}\/ Klein-Gordon solutions~\(f_k\), which are
of the form
\begin{align*}
  f_k(t, \vec x) &= \int \frac{\DInt[s] k}{(2\pi)^\s}
  \Ee^{\Ii \vec k \cdot \vec x - \Ii \omega_m(\vec k) t} \;
  \tilde f_k(\vec k), 
  \\
  \omega_m(\vec k) &:= \sqrt{\vec k^2 + m^2},\;
  \tilde f_k \in \ContinuousFuncs^\infty_c(\RealNum^{\s}).
  \label{eq:defKG} \numberthis
\end{align*}

\begin{Def}[Haag-Ruelle creation operator approximants]
  For \(A \in \Alg(\Ww)\), \(\chi \in \SchwartzSpace(\RealNum^{\s+1})\),
  and \(f\) a regular positive-energy Klein-Gordon solution we set for \(\T \in
  \RealNum\)
  \begin{align*} 
    \Bb &:= A(\chi) = \int \DInt[\s+1] x \; \chi(x) \alpha_x(A),
      \label{eq:defCreat2}\numberthis \\
    \BB_\T(f) &:= \int \DInt[\s] x\; f(\T, \vec x) \alpha_{(\T, \vec x)}(A).
      \label{eq:defHR2}\numberthis
  \end{align*}
\end{Def}
For our main result (\Cref{thm:hrw})
and in the following we will always assume \(\chi\) 
to be chosen as in \Cref{lem:HRW} below, in accordance with the mass gap~\eqref{eq:HK6}.
The restrictions on propagation of wave packets mentioned in the introduction
are made precise using the precursor relation~\eqref{eq:defPrec},
to constrain the {\term{velocity supports}}
\[
  \VS_{f_k} := \{
    (1 , \vec k/\omega_m(\vec k)),\;
    \vec k \in \support \tilde f_k
  \}.
  \label{eq:defVelo}\numberthis
\]
Basic intuition for handling localizations of creation-operator approximants
comes from the fact that regular~\(f_k\) are rapidly decreasing outside the
cone~\(\Upsilon_{f_k}^\delta := \RealNum \VS_{f_k}^\delta\) generated by any
\(\delta\)-neighbourhood \(\VS_{f_k}^\delta \supset \VS_{f_k}\),
as seen from standard non-stationary phase estimates\footnote{The velocity support estimates for
  regular Klein-Gordon solutions are due to Ruelle \cite{Ru62}, for details see
e.g.\ \cite[Thm.\ 5.3]{ArQFT99}. Via such estimates,  disjointness
  \(\VS_k \cap \VS_j = \emptyset\), (\(k \not = j\))  is sufficient for local QFT
  to control equal-time commutators \cite{Hep65}, and to some limited extent
also non-equal time-commutators~\cite{MD16}.}.

\begin{Thm}
  \label{thm:hrw}
  Fix a wedge \(\Ww\) and let \(\Psi_k \in \HilbertSpace_1\) \((1\leq k \leq
  n)\) be single-particle vectors isolated from the remaining energy-momentum
  spectrum which satisfy the swapping relation
  \(\Psi_k = E_m A_k \Omega = E_m A_k^\perp\Omega\), \(A_k \in \Alg(\Ww)\), \(A_k^\perp
    \in \Alg(\Ww^\perp)\).
  \begin{enumerate}[(i)]
  \item \label{it:hrwconv} For any family of regular positive-energy
    Klein-Gordon solutions \(f_k\) satisfying
    \begin{align*}
      &
      \VS_{f_n} \prec_\Ww \VS_{f_{n-1}} \prec_\Ww \ldots \prec_\Ww \VS_{f_1},
      \label{eq:ordPrescN}\numberthis
      \\
      \Psi_\T &:= \BB_{1\T}(f_1) \BB_{2\T}(f_2) \ldots \BB_{n\T}(f_n)\Omega \quad
      (\T \in \RealNum)
      \label{eq:defOut}\numberthis
    \end{align*}
    converges in norm for \(\T\to \infty\). 

  \item \label{it:hrwfock}
    Let \(\Psi^+ := \lim_{\T \to \infty} \Psi_\T\),
    \(\Psi'^+ := \lim_{\T \to \infty} \Psi_\T'\) be scattering states as
    in \itref{it:hrwconv}, constructed
    from operators localizable with respect to the same wedge \(\Ww\).
    Then for upright~\(\Ww\) their
    scalar products can be computed using the Fock
    prescription
    \[
    \left\langle \Psi^+, \Psi'^+\right \rangle
    = \delta_{nn'} \prod_{k=1}^n \left\langle \BB_{k\T}(f_k)\Omega, \BB_{k\T}'(f_k')\Omega \right\rangle,
    \label{eq:fock}\numberthis
    \]
    where the right-hand side is independent of \(\T\).
  \end{enumerate}
  Analogous statements hold for the convergence and Fock structure of any two incoming
  scattering states (\(\T \to - \infty\)) defined using the reversed 
  ordering of wave packets
  \[
    \VS_{f_n} \succ_\Ww \VS_{f_{n-1}} \succ_\Ww \ldots \succ_\Ww \VS_{f_1}.
    \label{eq:ordPrescNSucc}\numberthis
  \] 
\end{Thm}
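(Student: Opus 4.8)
The plan is to follow the Haag--Ruelle strategy sketched in the introduction, with the swapping relation~\eqref{eq:swap} as the central device to circumvent the impossibility of placing $n\geq 3$ wedge-local operators at mutual space-like separation. First I would set up the standard analytic preliminaries: invoking the mass gap~\eqref{eq:HK6} I would fix $\chi$ (via the forthcoming \Cref{lem:HRW}) so that $\Bb_k\Omega = A_k(\chi)\Omega \in E_m\HilbertSpace$ and moreover $\BB_{k\T}(f_k)\Omega = \tilde f_k(\Pp)\Bb_k\Omega$ is a $\T$-independent one-particle state, together with the polynomial norm bounds $\norm{\BB_{k\T}(f_k)}, \norm{\BB_{k\T}^\perp(f_k)} \leq C(1+\abs\T^{s/2})$ from the standard counting of overlapping velocity cones. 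Using the swapping data $A_k^\perp\in\Alg(\Ww^\perp)$ I would form the second family $\BB_{k\T}^\perp(f_k)$ and record $\BB_{k\T}^\perp(f_k)\Omega = \BB_{k\T}(f_k)\Omega$, which is exactly~\eqref{eq:swapCreat}; equivalently, by \eqref{eq:ans2} one may replace $A_k^\perp$ by $\bar A_k\in\Alg(\Ww)$ with $\bar A_k\Omega = A_k^\perp\Omega$, so that in fact all operators live in the single wedge~$\Ww$.

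For part~\itref{it:hrwconv} I would run Cook's method as in~\eqref{eq:introCook}: differentiate $\Psi_\T$ by the product rule, and for the term where $\partial_\T$ hits the $k$-th factor, repeatedly use~\eqref{eq:swapCreat} to convert factors $\BB_{j\T}(f_j)$ with $j>k$ into their swapped partners $\BB_{j\T}^\perp(f_j)$, then commute those past the derivative factor to the left until the derivative acts directly on $\Omega$, where $\partial_\T\BB_{k\T}(f_k)\Omega = 0$. The commutator corrections produced along the way are controlled by the key decay estimate $\norm{[\BB_{j\T}(f_j),\BB_{k\T}^\perp(f_k)]} \leq C_N(1+\T)^{-N}$ for $j<k$, $\T>0$. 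This estimate is where the geometry enters: one splits each wave packet into pieces supported in small velocity neighbourhoods $\VS_{f_j}^\delta$, uses the ordering hypothesis~\eqref{eq:ordPrescN} (i.e.\ $\VS_{f_j}\prec_\Ww\VS_{f_k}$ for $j<k$) together with the precursor relation to conclude that for small $\delta$ and large $\T$ the corresponding double-cone-smeared localization regions $\Upsilon$ of the $j$-factor and the $\perp$-translated region of the $k$-factor become space-like separated — so the commutator vanishes identically for $\T$ large — and bounds the off-cone remainders by the non-stationary phase (Ruelle) estimates, which decay faster than any polynomial. Combining this with the polynomial norm growth and summing the $O(n)$ product-rule terms yields $\norm{\partial_\T\Psi_\T}\leq C_N'(1+\T)^{-N}$, hence norm-convergence of $\Psi_\T$ via~\eqref{eq:introCook}.

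For part~\itref{it:hrwfock}, assuming $\Ww$ upright, I would compute $\langle\Psi^+,\Psi'^+\rangle$ by again pulling the inner product back to finite $\T$ (legitimate since both approximants converge and one can interchange limit and inner product) and then moving creation operators of one family across the other. Here the quasi-totality \Cref{lem:uprightW} is the crucial combinatorial input: given the two chains of velocity-support orderings~\eqref{eq:ordPrescN} for $f$ and $f'$, it guarantees that any two factors from the two families are themselves comparable in $\prec_\Ww$ (after swapping one of them), so that one can inductively commute a factor $\BB_{k\T}'(f_k')$ to the far side, producing vanishing commutators plus, when it meets its ``partner'' factor, a one-particle overlap $\langle\BB_{k\T}(f_k)\Omega,\BB_{k\T}'(f_k')\Omega\rangle$; when $n\neq n'$ some factor is left to annihilate the vacuum (using $\BB_{j\T}^\perp(f_j)^{*}\Omega$ carries energy-momentum off the spectrum by the mass gap, or the standard argument that a single creation approximant applied to $\Omega$ lands in $\HilbertSpace_1$ and is orthogonal to $\Omega$), giving the Kronecker $\delta_{nn'}$. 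The resulting product is manifestly $\T$-independent since each factor $\BB_{k\T}(f_k)\Omega$ is. The incoming case $\T\to-\infty$ is handled identically after reversing all orderings as in~\eqref{eq:ordPrescNSucc}, which simply swaps the roles of the two ends of the wedge.

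The main obstacle is the commutator decay estimate~\eqref{eq:commutIntro}: one must show that after the swap the $j$-factor (localized near $\Ww+\Upsilon_{f_j}^\delta\,\T$) and the $k$-factor's swapped partner (localized near $\Ww^\perp+\Upsilon_{f_k}^\delta\,\T$, i.e.\ near $\Ww'+x+\Upsilon_{f_k}^\delta\,\T$) have space-like separated localization regions for $\T$ large, which requires translating the precursor ordering $\VS_{f_j}\prec_\Ww\VS_{f_k}$ into an honest geometric separation of the $\T$-dilated double cones relative to the \emph{complementary} wedge — keeping careful track of the fixed translation $x$ in $\Ww^\perp=\Ww'+x$ (negligible against the $O(\T)$ separation) and of the $\delta$-fattening. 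Establishing \Cref{lem:uprightW}-type comparability of \emph{all} cross pairs in part~\itref{it:hrwfock}, and making the ``iterate $n-k$ times'' bookkeeping in part~\itref{it:hrwconv} uniform in the derivative position $k$, are the remaining points demanding care but no new ideas.
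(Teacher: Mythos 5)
Your proposal is correct and takes essentially the same route as the paper: Cook's method plus iterated swapping and the rapid commutator decay coming from wedge geometry and Ruelle-type estimates for part (i), and for part (ii) the quasi-totality of \(\prec_\Ww\) for upright wedges to decide on which side to swap, moving the swapped rear factor across, clustering out one-particle overlaps, and \(B^*\Omega=0\) for the \(\delta_{nn'}\); the paper merely organizes both parts as inductions (on \(n\), resp.\ on \(\min(n,n')\)) instead of iterating swaps inside a single term. One minor imprecision: \Cref{lem:uprightW} does not make \emph{all} cross pairs of velocity supports comparable, only the rear adjacent pairs (at least one of the two orderings), which together with transitivity within each family is exactly what your commutation scheme needs and what the paper encodes in its condition~\eqref{eq:techOrd}.
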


We should point out that the ordering prescription~\eqref{eq:ordPrescN} 
is not new. Such relations are well known in the form-factor
programme and related constructive work, see e.g.\ \cite[p.\
8]{Smi92} and references therein, or \cite[Sec.\ 6]{Le06t}. 
However in contrast to the results from \cite{BBS01, Le06t}, we note that our
arguments require neither the existence of local observables, nor temperateness
of suitable polarization-free generators.

\begin{Lem}[Haag-Ruelle Lemma, wedge-local version] \label{lem:HRW}
  Let \(A \in \Alg(\Ww)\) and 
  \(K \subset K' \subset H_m\) be compact subsets of the mass shell, such that
  \(K\) can be separated from \(H_m \setminus K'\) by a smooth function. 
  Then there exists a suitable \(\chi \in \SchwartzSpace(\RealNum^{\s+1})\) 
  (with \(\hat \chi\) supported in a sufficiently small neighbourhood of the mass shell as dictated
  by the mass gaps~\eqref{eq:HK6})
  such that \(B := A(\chi)\) satisfies
  \begin{enumerate}[(i)] 
    \item \(B \Omega \in E(K' )\HilbertSpace\subset E(H_m)\HilbertSpace\),
      \label{it:one}
  \item \(E(K) B \Omega = E(K)A \Omega\),
  \item \(B^*\Omega = 0\), \label{it:annihil}
  \item \(B^* \Psi_1 = E_\Omega B^* \Psi_1 \) 
    for all \(\Psi_1 \in E(K' \cap H_m)\HilbertSpace\),
    where \(E_\Omega := |\Omega \rangle\langle \Omega|\). 
    \label{it:clu}
  \item \label{it:awl} \(B\) is {\em almost
    wedge-local} (w.r.t.\ \(\Ww\)), i.e.\ for any \(r > 0 \) there exists \(B_r \in
    \Alg(\Ww + \DoubleCone_r)\) 
    so that for any \(N \in \NaturalNum\) we have for a suitable \(C_N >0\) that
    \[ \norm{B - B_r} \leq \frac{C_N}{1+r^N}. \label{eq:awlDecay} \numberthis \]
    Here 
    \(\DoubleCone_r := \{ x \in \RealNum^{\s+1}: \abss{x^0}+\abss{\vec x}<r\}\)
    denotes the double cone of radius \(r\).  
  \end{enumerate}
\end{Lem}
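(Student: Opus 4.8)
The plan is to construct $\chi$ as a product of two factors, following the standard Haag-Ruelle recipe but paying attention to wedge-localization. First I would pick a smooth function $g$ on the energy-momentum space with $g \equiv 1$ on a neighbourhood of $K$, $\support g$ contained in a neighbourhood of $K'$, and $\support g$ disjoint from $\{0\}$ and from $\bar H_M$; this uses the mass gap \eqref{eq:HK6} and the separation hypothesis on $K, K'$. I would then set $\hat\chi := g \cdot h$, where $h$ is a fixed bump ensuring $\hat\chi$ is supported in a small neighbourhood of the mass shell. With $B := A(\chi) = \int \DInt[\s+1]x\,\chi(x)\alpha_x(A)$, the energy-momentum transfer of $B$ is contained in $\support\hat\chi$, which gives $B\Omega = \hat\chi(H,\Pp)A\Omega = E(\support\hat\chi)A\Omega \in E(K')\HilbertSpace \subset E(H_m)\HilbertSpace$, proving \itref{it:one}; and $E(K)B\Omega = E(K)\hat\chi(H,\Pp)A\Omega = E(K)A\Omega$ since $g\equiv 1$ near $K$, proving (ii). For \itref{it:annihil}, the adjoint $B^*$ has energy-momentum transfer in $-\support\hat\chi \subset -\bar V^+ \setminus\{0\}$, which by positivity of the spectrum \eqref{eq:HK6} is disjoint from $\support E$; hence $B^*\Omega = 0$. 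For \itref{it:clu}, $B^*$ applied to a one-particle state of mass $m$ produces a vector with energy-momentum in $(H_m - \support\hat\chi)$, which (for $\support\hat\chi$ a small enough mass-shell neighbourhood and using the mass gap to rule out re-entering $H_m \cup \bar H_M$) meets $\support E$ only at $\{0\}$, so the result lies in $E(\{0\})\HilbertSpace = \ComplexNum\Omega$ by \eqref{eq:HK4}; this is exactly $E_\Omega B^*\Psi_1$.

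The substantive new point is the almost-wedge-locality \itref{it:awl}. Here I would exploit that $\chi \in \SchwartzSpace(\RealNum^{\s+1})$ is rapidly decreasing: split the integral defining $B$ as $B_r := \int_{\DoubleCone_r}\chi(x)\alpha_x(A)\,\DInt[\s+1]x$ plus a remainder $\int_{\RealNum^{\s+1}\setminus\DoubleCone_r}$. By translation covariance \eqref{eq:HK3}, $\alpha_x(A) \in \Alg(\Ww + x) \subset \Alg(\Ww + \DoubleCone_r)$ for $x \in \DoubleCone_r$, and since $\Alg(\Ww + \DoubleCone_r) = (\bigcup_{x\in\DoubleCone_r}\Alg(\Ww+x))''$ is a von Neumann algebra, the norm-convergent integral $B_r$ lies in $\Alg(\Ww + \DoubleCone_r)$. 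The remainder is estimated by $\norm{B - B_r} \leq \norm{A}\int_{\abss{x^0}+\abss{\vec x}\geq r}\abss{\chi(x)}\,\DInt[\s+1]x$, which decays faster than any inverse power of $r$ because $\chi$ is Schwartz; this gives \eqref{eq:awlDecay}.

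The main obstacle I anticipate is purely a matter of bookkeeping rather than genuine difficulty: one must verify that a single choice of $\chi$ (equivalently a single small mass-shell neighbourhood for $\support\hat\chi$) simultaneously makes all the spectral-support arguments in (i)--\itref{it:clu} work, and in particular that for \itref{it:clu} the set $H_m - \support\hat\chi$ really does avoid all of $\support E \setminus \{0\}$ — this requires quantifying how small the neighbourhood must be in terms of the gaps $m$ and $M - m$ from \eqref{eq:HK6} (for instance, $\support\hat\chi \subset \{p : \dist(p, H_m) < \varepsilon\}$ with $\varepsilon$ chosen smaller than, say, $\tfrac{1}{3}\min(m, M-m)$). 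Once $\chi$ is fixed with this property, all five items follow by the arguments above, and I would simply remark that \itref{it:awl} is independent of the spectral choices and holds for any Schwartz $\chi$ whatsoever.
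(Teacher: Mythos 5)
Your proposal is correct and follows essentially the route the paper intends: the paper omits the proof, pointing to the standard local Haag--Ruelle lemma via the observation that smearing with \(\chi\) restricts the Arveson spectrum of \(B\) to \(\support\hat\chi\) --- exactly your energy-momentum-transfer arguments for (i)--(iv) --- and noting that only item (v) requires adaptation, which you handle in the natural way by cutting the smearing integral off to \(\DoubleCone_r\) and using Schwartz decay of \(\chi\). The only cosmetic points are that the smearing integrals should be read in the weak sense (so that \(B_r\in\Alg(\Ww+\DoubleCone_r)\) follows from weak closedness of the von Neumann algebra, rather than from norm convergence) and that one should arrange \(\support\hat\chi\cap H_m\subset K'\) itself, not merely a neighbourhood of \(K'\); both are routine.
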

\Cref{lem:HRW} has a well-known counterpart in strictly local theories
\cite{Ha58} \cite{Ru62}, which allows us to skip the proof. 
In particular the main spectral statements \itref{it:one},
\itref{it:annihil} and \itref{it:clu} may be
understood by noting that the smearing operation \(\Bb:=\Aa(\chi)\) restricts the
Arveson spectrum\footnote{See e.g.\ \cite{Arv80}  or \cite{BDN15}, Sec.\ 3.}
\({\rm Sp}_\Bb \alpha \subset \support \hat \chi\).  The only modification
appears in \itref{it:awl}, where
the statement of the lemma needs to be adapted for the wedge-local case.
From \Cref{lem:HRW} we immediately obtain analogous properties for the
creation-operator approximants~\(\BB_\T(f)\) defined by the standard LSZ prescription~\eqref{eq:defHR2}.
\begin{Prop}[elementary properties of \(\Bb\) and \(\BB_\T\)] 
  \label{prop:otherProp}
  ~
  \begin{enumerate}[(i)]
    \item 
      \(\BB_\T(f)\Omega = \tilde f(\vec P) \Bb \Omega\)
      for all \(\T \in \RealNum\).
    \item If \(A\Omega = A^\perp\Omega\), the corresponding Haag-Ruelle
      operators satisfy \(\BB_\T(f)\Omega = \BB^\perp_\T(f)\Omega\).
    \item \(\partial_\T \BB_\T(f) \Omega = 0\). \label{it:diffBT}
    \item \(\norm{\BB_\T(f)} 
      \leq C (1+ \abs{\T}^{\s/2})\).
      \label{it:nest}
    \item  \label{it:diff}
      \(\partial_\T \BB_\T(f)\) exists in norm and
      \(\norm{\partial_\T \BB_\T(f)} \leq C' (1+ \abs{\T}^{\s/2})\).
    \item \( \BB_{1\T}(f_1)^* \, \BB_{2\T}(f_2) \Omega 
      = E_\Omega \BB_{1\T}(f_1)^* \, \BB_{2\T}(f_2) \Omega \),
      \label{it:cluBT}
      independently of velocity supports and operators possibly associated to
      different wedges~\(\Ww_1\), \(\Ww_2\), 
      where \(E_\Omega := |\Omega \rangle \langle \Omega |\).
  \end{enumerate}
\end{Prop}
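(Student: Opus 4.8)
The plan is to read off all six items from \Cref{lem:HRW}, just as one does from its strictly local counterpart; no ingredient beyond the wedge-local bookkeeping already contained in that lemma is required. Items (i)--(iii) come from the standard one-particle computation: since $\alpha_{(\T,\vec x)}(\Bb)\Omega=U(\T,\vec x)\Bb\Omega$ by translation invariance of $\Omega$ (from \eqref{eq:HK4}), I would write $\BB_\T(f)\Omega=\int\DInt[\s]x\,f(\T,\vec x)\,U(\T,\vec x)\Bb\Omega$, and by \Cref{lem:HRW}\itref{it:one} the vector $\Bb\Omega$ lies in $E(K')\HilbertSpace\subset E(H_m)\HilbertSpace$, on which $H=\omega_m(\vec P)$, so $U(\T,\vec x)\Bb\Omega=\Ee^{\I\T\omega_m(\vec P)-\I\vec x\cdot\vec P}\Bb\Omega$. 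Inserting the momentum representation \eqref{eq:defKG} of $f$ and carrying out the $\vec x$-integration against the joint spectral measure of $\vec P$ collapses the $\vec k$-integral onto $\vec k=\vec P$ and cancels the $\T$-dependence, leaving $\BB_\T(f)\Omega=\tilde f(\vec P)\Bb\Omega$; this is (i), and (iii) follows at once since the right-hand side no longer depends on $\T$. For (ii) I would note $\Bb^\perp\Omega=\int\DInt[\s+1]x\,\chi(x)\,U(x)A^\perp\Omega=\int\DInt[\s+1]x\,\chi(x)\,U(x)A\Omega=\Bb\Omega$ by the swapping hypothesis $A\Omega=A^\perp\Omega$, and apply (i) to $\Bb$ and to $\Bb^\perp$ (its hypothesis holds for $\Bb^\perp$ since $\Bb^\perp\Omega=\Bb\Omega$ already lies on the mass shell).

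For (iv) I would estimate crudely, pulling the norm inside the Bochner integral \eqref{eq:defHR2}: as $\norm{\alpha_{(\T,\vec x)}(\Bb)}=\norm\Bb$, one gets $\norm{\BB_\T(f)}\leq\norm\Bb\,\norm{f(\T,\cdot\,)}_{\LSpace^1(\RealNum^\s)}$, and the claimed growth is the standard dispersive bound $\norm{f(\T,\cdot\,)}_{\LSpace^1}\leq C(1+\abs\T)^{\s/2}$ for regular Klein--Gordon solutions (sup-norm $\leq C(1+\abs\T)^{-\s/2}$ by stationary phase, times the $\sim\abs\T^{\s}$ volume of the region outside which $f$ is rapidly decreasing; cf.\ the references quoted for \eqref{eq:defVelo}). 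For (v) the one additional point is that $\Bb=A(\chi)$ is a norm-smooth element for the translation action, because $\chi\in\SchwartzSpace$: from $\alpha_x(\Bb)=\int\DInt[\s+1]y\,\chi(y-x)\,\alpha_y(A)$ one has $\partial_{x^0}\alpha_x(\Bb)=-\int\DInt[\s+1]y\,(\partial_0\chi)(y-x)\,\alpha_y(A)$, converging in operator norm with $\norm{\partial_{x^0}\alpha_x(\Bb)}\leq\norm A\,\norm{\partial_0\chi}_{\LSpace^1}$ uniformly in $x$. Differentiating \eqref{eq:defHR2} under the integral sign---justified by dominated convergence using these uniform bounds and the Schwartz decay in $\vec x$ of $f(\T,\cdot\,)$ and $\partial_t f(\T,\cdot\,)$, locally uniform in $\T$---expresses $\partial_\T\BB_\T(f)$ as $\int\DInt[\s]x\,(\partial_t f)(\T,\vec x)\,\alpha_{(\T,\vec x)}(\Bb)+\int\DInt[\s]x\,f(\T,\vec x)\,\partial_\T[\alpha_{(\T,\vec x)}(\Bb)]$; since $\partial_t f$ is again a regular Klein--Gordon solution (with $\widetilde{\partial_t f}(\vec k)=-\I\omega_m(\vec k)\tilde f(\vec k)\in\ContinuousFuncs^\infty_c$), the $\LSpace^1$-bound applies to both summands and gives $\norm{\partial_\T\BB_\T(f)}\leq C'(1+\abs\T)^{\s/2}$.

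For (vi) I would argue exactly as in \Cref{lem:HRW}\itref{it:clu}, the only mechanism being the mass gap. Writing $\BB_{1\T}(f_1)^*=\int\DInt[\s]x\,\overline{f_1(\T,\vec x)}\,\alpha_{(\T,\vec x)}(\Bb_1^*)$ and recalling by (i) and \Cref{lem:HRW}\itref{it:one} that $\BB_{2\T}(f_2)\Omega=\tilde f_2(\vec P)\Bb_2\Omega\in E(H_m)\HilbertSpace$, each translate $\alpha_{(\T,\vec x)}(\Bb_1^*)$ has energy--momentum transfer contained in $-\support\hat\chi$, a small neighbourhood of $-H_m$ supplied by \Cref{lem:HRW} (translations do not change the Arveson spectrum), so it maps $E(H_m)\HilbertSpace$ into $E(\overline{H_m-\support\hat\chi})\HilbertSpace$. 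Since $(H_m-H_m)\cap\bar V^+=\{0\}$---because $(p-q)^2=2m^2-2p\cdot q\leq 0$ with equality only for $p=q$---and $\support\hat\chi$ lies in a sufficiently small neighbourhood of the mass shell, the mass gap \eqref{eq:HK6} forces this subspace to equal $\ComplexNum\Omega$. Integrating over $\vec x$ then gives $\BB_{1\T}(f_1)^*\,\BB_{2\T}(f_2)\Omega\in\ComplexNum\Omega$, i.e.\ it is left invariant by $E_\Omega$; notice that neither the velocity supports nor any relation between the wedges $\Ww_1$ and $\Ww_2$ entered.

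The step demanding the most care is the norm-differentiability in (v): one must verify that $\partial_\T\BB_\T(f)$ exists in the \emph{operator} norm rather than merely strongly, which is precisely what the smoothness of $\Bb=A(\chi)$ under translations---inherited from $\chi\in\SchwartzSpace$, with derivatives controlled by $\norm A$---provides. Everything else is a routine transcription of the local Haag--Ruelle estimates combined with \Cref{lem:HRW}.
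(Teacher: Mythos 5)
Your proposal is correct and coincides with what the paper intends: the paper explicitly skips the proof of \Cref{prop:otherProp}, stating that it carries over literally from standard Haag--Ruelle theory (citing \cite[Sec.~5]{ArQFT99}, \cite{MD16}), and your items (i)--(vi) are exactly those standard arguments, with the dispersive $L^1$-bound you invoke being the paper's \Cref{lem:KGdecay}(iii). Nothing essential is missing or different.
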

Having adapted the statements of \Cref{lem:HRW} and \Cref{prop:otherProp}
as required by wedge-locality, we will skip the proofs which carry over
literally  from standard Haag-Ruelle theory (up to weakened localization)
and refer to \cite[Sec.~5]{ArQFT99} or \cite{MD16} for further details. 
Still the most significant consequence of wedge-locality for Haag-Ruelle
theory is contained in the following localization and commutator estimates, whose
proofs will be sketched for the convenience of the reader in \Cref{app:proofs}.

  \begin{Lem}
    \label{lem:wedgeLocalBT}
    Let \(\Aa \in \Alg(\Ww)\). 
    For any \(\T \in \RealNum\) and \(\delta > 0\) the corresponding 
    \(\BB_\T :=  \BB_{\T}(f)\) can be approximated 
    by \(\BB_{\T}^{(\delta)} \in \Alg(\T \VS_f + \DoubleCone_{\delta\abs{\T}} + \Ww) \),
  \((\delta > 0)\), such that
    for any \(N \in \NaturalNum\)
    \[
      \norm{\BB_{\T}^{(\delta)} - \BB_\T} \leq \frac{ C_N^\delta}{1+\abs{\T}^N},
      \label{eq:asymptDec}\numberthis
    \]
    where the constants \(C_N^\delta\) depend 
    on \(f\), \(\Aa\) and \(\chi\), but are independent of \(\T\). 

    For later use in \Cref{sec:genW} we note that 
      analogous 
      approximants~\(\bar \BB_{\T}^{(\delta)}\) exist if \(f\) is replaced by
      the pointwise product~\(\bar f := f h\)  with a polynomially bounded
      measurable function~\(h:\RealNum^d \to \ComplexNum\). 
  \end{Lem}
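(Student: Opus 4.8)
The plan is to reduce the statement to the almost-wedge-locality property of $\Bb$ from \Cref{lem:HRW}\itref{it:awl} by exploiting translation covariance and the rapid spatial decay of the regular Klein-Gordon solution $f$ outside its velocity cone. First I would fix $\T$ (say $\T > 0$; the case $\T \leq 0$ is analogous and $\T = 0$ is covered by \Cref{lem:HRW}\itref{it:awl} directly) and split the defining integral $\BB_\T(f) = \int \DInt[\s] x\, f(\T, \vec x)\, \alpha_{(\T,\vec x)}(\Bb)$ into a ``bulk'' piece over $\vec x$ with $(\T, \vec x) \in \T\VS_f^{\delta/2}$ (equivalently $\vec x / \T$ close to the velocity support) and a ``tail'' piece over the complement. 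By the standard non-stationary phase estimate quoted in the footnote after \eqref{eq:defVelo}, the function $\vec x \mapsto f(\T, \vec x)$ is rapidly decreasing outside the cone $\Upsilon_f^{\delta/2}$, uniformly enough that $\int_{\text{tail}} \abs{f(\T,\vec x)}\,\DInt[\s] x \le C_N' (1+\abs{\T})^{-N}$ for every $N$; since $\normm{\alpha_{(\T,\vec x)}(\Bb)} = \normm{\Bb}$, the tail contributes a norm-$\BigOh(\abs{\T}^{-N})$ error and may be discarded.

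For the bulk piece I would further approximate each translate $\alpha_{(\T,\vec x)}(\Bb)$, using \Cref{lem:HRW}\itref{it:awl} with radius $r := \delta\abs{\T}/2$, by an operator $\alpha_{(\T,\vec x)}(\Bb_r) \in \Alg((\T,\vec x) + \Ww + \DoubleCone_r)$ incurring an error $\normm{\Bb - \Bb_r} \le C_N (1+\delta\abs{\T}/2)^{-N}$, which after integrating against the $L^1$-function $f(\T,\cdot)$ (whose $L^1$-norm grows at most polynomially in $\T$) again yields a norm-$\BigOh(\abs{\T}^{-N})$ error. The resulting operator $\BB_\T^{(\delta)} := \int_{\text{bulk}} \DInt[\s] x\, f(\T,\vec x)\, \alpha_{(\T,\vec x)}(\Bb_r)$ is a norm-convergent integral (in fact, by weak closedness of von Neumann algebras, a $\sigma$-weak integral suffices) of operators each localized in $(\T, \vec x) + \Ww + \DoubleCone_r$ with $(\T,\vec x) \in \T\VS_f^{\delta/2}$ and $r = \delta\abs{\T}/2$; hence each summand lies in $\Alg(\T\VS_f^{\delta/2} + \DoubleCone_{\delta\abs{\T}/2} + \Ww) \subset \Alg(\T\VS_f + \DoubleCone_{\delta\abs{\T}} + \Ww)$ by isotony \eqref{eq:HK1}, and therefore so does $\BB_\T^{(\delta)}$. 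Collecting the two error estimates gives \eqref{eq:asymptDec} with constants depending only on $f$, $\Aa$, $\chi$ and $\delta$.

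The main obstacle, and the point requiring the most care, is the bookkeeping of the region arithmetic: one must check that the geometric sum $\T\VS_f^{\delta/2} + \DoubleCone_{\delta\abs{\T}/2}$, coming from translating the $r$-enlarged wedge localization by points in the slightly-fattened velocity cone, is genuinely contained in $\T\VS_f + \DoubleCone_{\delta\abs{\T}}$, and that the convention $\Alg(\Reg + \Ww) := (\cup_{x\in\Reg}\Alg(\Ww+x))''$ from \Cref{sec:framework} makes $\BB_\T^{(\delta)}$ land in the claimed algebra — this is where the choice of splitting at $\delta/2$ rather than $\delta$ is used. For the final remark, exactly the same argument applies with $f$ replaced by $\bar f = f h$: polynomial boundedness of $h$ preserves both the non-stationary phase tail estimate (the rapid decay of $f$ dominates the polynomial growth of $h$) and the at-most-polynomial growth of $\norm{\bar f(\T,\cdot)}_{L^1}$, so the two error bounds go through verbatim and one obtains the approximants $\bar\BB_\T^{(\delta)}$ in the same regions.
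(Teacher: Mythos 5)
Your proposal is correct and follows essentially the same route as the paper's proof: cutting $f$ off to the fattened velocity cone $\Upsilon_f^{\delta/2}$, replacing $\Bb$ by its almost-wedge-local approximant $\Bb_{r(\T)}$ with $r(\T)=\delta\abs{\T}/2$, controlling both errors via \Cref{lem:KGdecay} and \Cref{lem:HRW}~\itref{it:awl}, and carrying out the same $\delta/2$ region bookkeeping with covariance and isotony. The only cosmetic difference is the order in which the two triangle-inequality errors are split off, which is immaterial.
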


\begin{Cor}[commutators with ordered velocity support] 
  \label{cor:commEst}
  Let \(B\), \(B^\perp\) be 
  as in \Cref{lem:HRW} for a pair of opposite wedges \(\Ww, \Ww^\perp\), respectively,  and
  let \(f, f^\perp\) be 
  ordered by \(\VS_{f^\perp} \prec_{\Ww} \VS_f\).
  Then for any \(\T > 0\),
  \[
    \norm{\left[\BB_\T^{\perp}(f^\perp), \BB_\T(f) \right] }
    \leq \frac{C_N}{1+\abs{\T}^N},
    \label{eq:commEst}\numberthis
  \]
  where \(C_N\) depend on operators and smearing functions as in
  \Cref{lem:wedgeLocalBT}.
  For \(\T < 0\) estimate~\eqref{eq:commEst} holds under the reversed ordering
  assumption \(\VS_f \prec_{\Ww} \VS_{f^\perp}\).
  The commutator estimate extends to the cases that one or both of the operators in
  \eqref{eq:commEst} are replaced by their adjoints or \(\T\)-derivatives.
\end{Cor}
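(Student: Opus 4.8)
The plan is to reduce the commutator estimate to the localization result of \Cref{lem:wedgeLocalBT} together with the locality axiom~\eqref{eq:HK2}. First I would apply \Cref{lem:wedgeLocalBT} separately to $\BB_\T(f)$ and to $\BB_\T^\perp(f^\perp)$: for a given $\delta > 0$ there are approximants $\BB_\T^{(\delta)} \in \Alg(\T\VS_f + \DC_{\delta\abs\T} + \Ww)$ and $(\BB_\T^\perp)^{(\delta)} \in \Alg(\T\VS_{f^\perp} + \DC_{\delta\abs\T} + \Ww^\perp)$, each approximating the corresponding creation operator up to an error bounded by $C_N^\delta(1+\abs\T)^{-N}$. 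Writing $\Ww^\perp = \Ww' + y$ for the fixed translation vector $y$, I would then decompose the commutator $[\BB_\T^\perp(f^\perp),\BB_\T(f)]$ as $[(\BB_\T^\perp)^{(\delta)},\BB_\T^{(\delta)}]$ plus three remainder terms, each of which is a product of one approximation error with one operator whose norm grows at most polynomially by \Cref{prop:otherProp}~\itref{it:nest}; choosing $N$ large enough absorbs these into the right-hand side of~\eqref{eq:commEst}.

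The heart of the argument is then to show that for $\T > 0$ sufficiently large the two localization regions are spacelike separated, so that the leading commutator $[(\BB_\T^\perp)^{(\delta)},\BB_\T^{(\delta)}]$ vanishes identically by~\eqref{eq:HK2}. Concretely I must verify that $\T\VS_{f^\perp} + \DC_{\delta\abs\T} + \Ww^\perp$ lies in the causal complement of $\T\VS_f + \DC_{\delta\abs\T} + \Ww$, i.e.\ inside $\bigl(\T\VS_f + \DC_{\delta\abs\T} + \Ww\bigr)'$. The ordering assumption $\VS_{f^\perp} \prec_\Ww \VS_f$ means $\VS_f - \VS_{f^\perp} \subset \Ww_\C$, so $\T(\VS_f - \VS_{f^\perp}) \subset \Ww_\C$ with room to spare that scales linearly in $\T$; this linear margin dominates the fixed translation $y$ coming from $\Ww^\perp = \Ww' + x$ and, for $\delta$ chosen small relative to the opening of $\Ww_\C$ around the direction $\VS_f - \VS_{f^\perp}$, also dominates the two double cones $\DC_{\delta\abs\T}$ of radius $\delta\T$. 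One then checks the elementary inclusion $(\Reg_1 + \Ww) \subset (\Reg_2 + \Ww)'$ whenever $\Reg_1 - \Reg_2$ is pushed deep enough into $\Ww_\C$, using $\Ww_\C + \Ww_\C \subset \Ww_\C$ and $\Ww' = \Ww_\C' + (\text{translate})$ together with the fact that $\Ww_\C + \Ww'_\C$ is spacelike. Thus there is a threshold $\T_0$ (depending on $f$, $f^\perp$, $\Ww$, $\delta$ and $y$) beyond which the leading term is exactly zero, and for $0 < \T \le \T_0$ the estimate is trivial since the left-hand side is bounded by $2\norm{\BB_\T^\perp(f^\perp)}\,\norm{\BB_\T(f)} \le C(1+\T_0^\s)$, which can be absorbed into $C_N(1+\abs\T)^{-N}$ on the compact parameter range.

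The case $\T < 0$ is handled identically after noting that reversing the sign of $\T$ reverses the role of the two velocity supports, which is precisely why the hypothesis becomes $\VS_f \prec_\Ww \VS_{f^\perp}$; alternatively one passes through $\prec_\Ww$ versus $\succ_\Ww$ as in the statement of \Cref{thm:hrw}. Finally, the extension to adjoints is immediate because $\BB_\T(f)^*$ and $(\BB_\T^\perp(f^\perp))^*$ have approximants in the \emph{same} regions (taking adjoints preserves membership in a von Neumann algebra), and the extension to $\T$-derivatives follows from \Cref{lem:wedgeLocalBT} applied with $f$ replaced by the pointwise product $\bar f = f h$ with $h$ polynomially bounded — here $h(\T,\vec x)$ absorbs the $\partial_\T$ acting on $f(\T,\vec x)$ — which yields approximants $\bar\BB_\T^{(\delta)}$ in the identical localization regions with the same decay, so the geometric argument goes through verbatim. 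I expect the main obstacle to be the bookkeeping in the geometric inclusion: making precise, with explicit constants, that the linearly growing separation margin in $\Ww_\C$ really does swallow both the fixed translation $y$ and the growing double cones $\DC_{\delta\abs\T}$, which requires choosing $\delta$ small enough in terms of the (compact, hence bounded-away-from-the-boundary) ordering $\VS_{f^\perp} \prec_\Ww \VS_f$ — this is a quantitative strengthening of the qualitative precursor relation and is exactly the point where "almost wedge-locality" with its $\delta$-fattening is used.
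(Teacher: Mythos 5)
Your proposal follows essentially the same route as the paper's own proof in \Cref{app:proofs}: approximate both operators via \Cref{lem:wedgeLocalBT}, show that for small \(\delta\) and \(\T\) beyond a threshold (fixed by an \(\epsilon\)-margin from compactness of \(\VS_f-\VS_{f^\perp}\) inside the open convex cone \(\Ww_\C\), which scales linearly in \(\T\) and dominates the fixed translations and the \(\DoubleCone_{\delta\abs{\T}}\)-fattenings) the two localization regions are spacelike separated so the leading commutator vanishes by locality, and absorb the remainder terms using the rapid approximation decay together with the polynomial norm bounds of \Cref{prop:otherProp}. The only loose point is the \(\T\)-derivative case, where \(\partial_\T\BB_\T(f)=\BB_\T(\partial_t f)+\BB'_\T(f)\) with \(\BB'=\Aa(-\partial_0\chi)\), i.e.\ both terms are again of the same form (rather than literally a pointwise product \(fh\)), but this is a bookkeeping detail that does not affect the argument and is not spelled out in the paper either.
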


\begin{proof}[Proof of \Cref{thm:hrw}] 
  {\bf Ad (i)}
  Setting
  \(\Psi_\T : = \Psi_\T^{(n)} :=  \BB_{1\T}(f_1) \BB_{2\T}(f_2) \ldots \BB_{n\T}(f_n)\Omega\) we
  would like to establish convergence for \(\T \to \infty\).
  Due to \Cref{prop:otherProp} \itref{it:diff} and \itref{it:nest}, Cook's method is applicable and
  we can write for \(0 < \T_1 < \T_2\)
  \[
    \norm{\Psi_{\T_2} - \Psi_{\T_1}}  = \norm{\int_{\T_1}^{\T_2} \DInt \T \;
    \partial_\T \Psi_\T} \leq \int_{\T_1}^{\T_2} \DInt \T \; \norm{
    \partial_\T \Psi_\T}.
       \label{eq:cook}\numberthis
  \]
  Convergence will follow from the rapid decay estimate 
  \(\norm{\partial_\T \Psi_\T} \leq C_N \T^{-N}\) for \(\T > 0\).
  
  The latter is obtained by induction with respect to the number of
  particles \(n\), with starting case~\(n = 1\) given by
  \(\partial_\T \Psi_\T = 0\) as seen in \Cref{prop:otherProp} \itref{it:diffBT}.
  For the induction step 
  we write
  \begin{align*}
    \partial_\T \Psi_\T 
    &= \partial_\T (\BB_{1\T}(f_1) \BB_{2\T}(f_2) \ldots
      \BB_{n-1\,\T}(f_{n-1})) \; \BB_{n\T}(f_n)\Omega
      \\ &\qquad +  \BB_{1\T}(f_1)  \ldots \BB_{n-1\,\T}(f_{n-1}) \; \partial_\T\BB_{n\T}(f_n)\Omega
      \\
      &= \partial_\T (\BB_{1\T}(f_1) \BB_{2\T}(f_2) \ldots
      \BB_{n-1\,\T}(f_{n-1})) \; \BB_{n\T}^\perp(f_n)\Omega,
      \label{eq:swapped}\numberthis
  \end{align*}
  where we first used \Cref{prop:otherProp} \itref{it:diffBT} to drop the term
  with  derivative operator acting directly on the vacuum and used that the
  swapping relation~\eqref{eq:swap} implies~\(\BB_{n\T}(f_n)\Omega = \BB_{n\T}^\perp(f_n)\Omega\). 
  Now there are oppositely wedge-localized pairs of HR-operators whose
  commutators can be controlled using \Cref{cor:commEst}, and may estimate for \(\T > 0\)
  \begin{align*}
    \norm{\partial_\T \Psi_\T}
    &
      \leq
      \norm{  \BB_{n\T}^\perp(f_n)} \norm{ \partial_\T \Psi_\T^{(n-1)} } 
      \\&\qquad \qquad
    + 
     \norm{\left[ \partial_\T \BB_{1\T}(f_1) \ldots
     \BB_{n-1\,\T}(f_{n-1}) , \BB_{n\T}^\perp(f_n) \right]} \norm{\Omega}.
       \label{eq:indConv}\numberthis
  \end{align*}
  Here the first summand is rapidly decreasing 
  for \(\T \to \infty\)
  by the induction assumption and 
  \Cref{prop:otherProp}~\itref{it:nest}. The second summand can be 
  generously bounded from above by expanding the derivative and commutator
  as
  \begin{align*}
    \sum_{k=1}^{n-1} \sum_{j=1}^{n-1}
      \BB_{1\T}(f_1) \ldots (\partial_\T\BB_{k\T}(f_k)) \ldots
      \left[\BB_{j\T}(f_{j}) , \BB_{n\T}^\perp(f_n) \right] \ldots
      \BB_{n-1\,\T}(f_{n-1}).
    \label{eq:expComm}\numberthis
  \end{align*}
  Estimating the corresponding operator norm in \eqref{eq:indConv} by expanding in terms
  of
  \( \norm{\BB_{k\T}} \leq C_k (1+\abs{\T}^{s/2})\),
  \( \norm{\partial_\T \BB_{k\T}} \leq C_k' (1+\abs{\T}^{s/2})\), 
  \( \norm{ \left[\BB_{j\T}(f_{j}) , \BB_{n\T}^\perp(f_n) \right]} \leq C_N
  (1+\T)^{-N}\), and 
  \( \norm{ \left[\partial_\T\BB_{k\T}(f_{k}) , \BB_{n\T}^\perp(f_n) \right]} \leq C_N
  (1+\T)^{-N}\) yields an overall rapid decay.
  Here we used that \Cref{cor:commEst} applies due to transitivity of the
  precursor ordering. Together we obtain that \eqref{eq:indConv} decays faster
  than any polynomial, and thus convergence of outgoing scattering states
  follows from \eqref{eq:cook}.
  The existence of incoming states follows analogously for opposite operator ordering.

  {\bf Ad (ii)} 
  Letting \(\Psi^+ := \lim_{\T\to\infty} \BB_{1\T}(f_1) \ldots \BB_{n\T}(f_n) \Omega\)
  and another scattering state 
  \(\Psi'^+ := \lim_{\T\to\infty} \BB_{1\T}'(f_1') \ldots \BB_{n'\T}'(f_{n'}') \Omega\) 
  defined with respect to the same wedge~\(\Ww\), we denote the minimum number
  of particles by \(N:=\min(n,n')\).
  We will assume instead of upright~\(\Ww\) only the following weaker
  technical ordering condition: adjacent pairs of velocity supports are
  precursor-comparable from the rear also across the two families, in the
  sense that 
  \[ \forall \, 0\leq j < N: 
   \VS_{f_{n-j}} \prec_\Ww \VS_{f_{n'-j-1}'} \text{ or }
  \VS_{f_{n'-j}'} \prec_\Ww \VS_{f_{n-j-1}}.
   \label{eq:techOrd}\numberthis\] 
   For upright wedges \eqref{eq:techOrd} follows from \Cref{lem:uprightW}, but
   the argument based on \eqref{eq:techOrd} can be also applied for 
   non-upright~\(\Ww\), e.g.\ to compute \(\norm{\Psi^+}^2 = \langle \Psi^+,
   \Psi^+\rangle\).

  The proof of the Fock relation~\eqref{eq:fock} is now by induction on the
  minimum number of particles~\(N\).
   By
  continuity of the scalar product we may write
  \[
    \left\langle \Psi^+, \Psi'^+ \right\rangle 
    =
    \lim_{\T\to \infty}\left\langle\Omega,
      \BB_{n\T}(f_n)^* \ldots \BB_{1\T}(f_1)^*\, \BB_{1\T}'(f_1') \ldots \BB_{n'\T}'(f_{n'}') \Omega 
      \right\rangle.
      \label{eq:fockind}\numberthis
    \]
  For \(N=0\) the Fock identity~\eqref{eq:fock} follows from \(\norm{\Omega}=1\)
  or \Cref{lem:HRW}~\itref{it:annihil}, in the respective cases 
  vacuum-vacuum or for a non-zero number of creation operators.
  Assuming \eqref{eq:fock} holds for \(N\) particles, we now 
  distinguish the two cases 
  \(\VS_{f_{n}} \prec_\Ww \VS_{f_{n'-1}'}\) or \(\VS_{f_{n'}'} \prec_\Ww
    \VS_{f_{n-1}}\), determining on which side of \eqref{eq:fockind} the swapping 
  should be performed. Let us proceed for the case \( \VS_{f_{n'}'} \prec_\Ww
  \VS_{f_{n-1}}\), by swapping
  \begin{align*}
    \left\langle \Psi_\T, \Psi_\T' \right\rangle &=
      \left\langle\Omega,
      \BB_{n\T}(f_n)^* \ldots \BB_{1\T}(f_1)^*\, \BB_{1\T}'(f_1') \ldots
      \BB_{n'\T}'(f_{n'}') \Omega
      \right\rangle
      \\&=
      \left\langle\Omega,
      \BB_{n\T}(f_n)^* \ldots \BB_{1\T}(f_1)^*\, \BB_{1\T}'(f_1') \ldots
      \BB_{n'\T}'^\perp(f_{n'}') \Omega
      \right\rangle
      \\
      &=
      \left\langle\Omega,
        \BB_{n\T}^* \BB_{n'\T}'^\perp \BB_{n-1\,\T}^*
        \ldots \BB_{1\T}^*\, \BB_{1\T}' \ldots
        \BB_{n'-1\,\T}'
        \Omega
      \right\rangle 
      \\& \qquad +
      \left\langle\Omega,
        \BB_{n\T}^* \left[\BB_{n-1\,\T}^*
        \ldots \BB_{1\T}^*\, \BB_{1\T}' \ldots
      \BB_{n'-1\,\T}', \BB_{n'\T}'^\perp \right] \Omega \right\rangle,
  \end{align*}
  where in the last step and below we suppress obvious wave packet dependences.
  Expanding the commutator gives
  \begin{align*}
    \fl
      \sum_{k=1}^{n-1}
      \BB_{n-1\,\T}^* 
      \ldots \left[ \BB_{k\T}^*,  \BB_{n'\T}'^\perp \right]\ldots
      \BB_{1\T}^*  
\BB_{1\T}' \ldots \BB_{n'-1\,\T}'
\\&\qquad+
    \BB_{n-1\,\T}^* \ldots \BB_{1\T}^*\, 
      \sum_{k=1}^{n'-1}\BB_{1\T}' 
      \ldots \left[ \BB_{k\T}',  \BB_{n'\T}'^\perp \right]\ldots
      \BB_{n'-1\,\T}'.
  \end{align*}
  Here \Cref{cor:commEst} applies due to \( \VS_{f_{n'}'} \prec_\Ww
  \VS_{f_{n-1}}\), the assumed orderings~\eqref{eq:ordPrescN} of the
  velocity supports of \(f_k\) and \(f_k'\) within each family, and transitivity of the
  precursor ordering.
  This yields
  \( \norm{ \left[\BB_{k\T}(f_{k})^* , \BB_{n\T}^\perp(f_n) \right]} \leq C_N (1+\T)^{-N}\) and 
  \( \norm{ \left[\BB_{k\T}'(f_{k}) , \BB_{n\T}^\perp(f_n) \right]} \leq C_N (1+\T)^{-N}\), 
  so that
  together with \( \normm{\BB_{j\T}^*} \leq C_j (1+\abs{\T}^{s/2})\) and \(
  \normm{\BB_{j\T}'} \leq C_j' (1+\abs{\T}^{s/2})\)
  from \Cref{prop:otherProp}~\itref{it:nest},
  we can estimate for~\(\T > 0\) 
  \begin{align*}
    \abs{\left\langle \Psi_\T, \Psi_\T' \right\rangle -
      \left\langle  \Omega, \BB_{n\T}^*\BB_{n'\T}'^\perp \BB_{n-1\,\T}^* \ldots \BB_{1\T}^*\,
        \BB_{1\T}' \ldots \BB_{n'-1\,\T}' \Omega 
      \right\rangle} 
     &\leq C_N \T^{-N}.
            \label{eq:cluEst} \numberthis
  \end{align*}

  As \( \lim_{\T \to \infty} \left\langle \Psi_\T, \Psi_\T' \right\rangle \)
  exists by part (i) of this theorem, which was established above,
  \begin{align*}
     \lim_{\T \to \infty}
     \left\langle \Psi_\T, \Psi_\T' \right\rangle 
   &=
     \lim_{\T \to \infty}
     \left\langle (\BB_{n'\T}'^\perp)^*  \BB_{n\T}\Omega, \BB_{n-1\,\T}^* \ldots \BB_{1\T}^*\,
     \BB_{1\T}' \ldots \BB_{n'-1\,\T}' \Omega \right\rangle
   \\&=
     \lim_{\T\to \infty}\left\langle \Omega, (\BB_{n'\T}'^\perp)^*  \BB_{n\T}\Omega
     \right\rangle \left\langle \Omega, \BB_{n-1\,\T}^* \ldots \BB_{1\T}^*\,
     \BB_{1\T}' \ldots \BB_{n'-1\,\T}' \Omega \right\rangle,
  \end{align*}
  where the right hand side was rewritten using the clustering identity 
  from \Cref{prop:otherProp}~\itref{it:cluBT}.
  The existence of the limit on the right-hand side now follows for the
  one-particle matrix element in the first factor from \Cref{prop:otherProp}
  \itref{it:diffBT}, and for the second factor from the
  induction assumption, respectively. By induction we obtain finally the Fock formula~\eqref{eq:fock}.

  For the complementary ordering \( \VS_{f_{n'}'} \prec_\Ww \VS_{f_{n-1}}\) we swap
  instead on the opposite side of \eqref{eq:fockind}, making use of \(\BB_{n\T}(f_n)\Omega =
  \BB_{n\T}^\perp(f_n)\Omega\). Following otherwise the same chain of arguments
  we obtain the limit~\eqref{eq:fock} also in this case.
\end{proof}

To conclude this section let us recall that in dimension~\(1+1\) all wedges
are upright in a trivial sense. In higher dimension the restriction to upright
wedges seems to be unphysical as it singles out a non-Poincaré-covariant
family of localization wedges. We will later see that the uprightness restriction
is of a technical nature arising due to the a priori Lorentz-frame dependent
formulation of Haag-Ruelle theory.
Consequently it can be lifted by passing to a variant of the Haag-Ruelle
creation operator approximants~\eqref{eq:defHR} adapted to the reference frame of a given
(non-upright) operator localization wedge \(\Ww\).

\section{Localization in General Wedges} \label{sec:genW}
The goal of this section is to remove the assumption of localization of
operators in upright wedges from \Cref{thm:hrw}~\itref{it:hrwfock},
as will be needed for a physically satisfactory discussion of the known
Poincaré-covariant wedge-local models (e.g.\ as in \cite{BLS10}).
We recall that these additional considerations are specific to the case of
spatial dimension~\(s>1\).
The following simple example illustrates the causal restrictions in the
non-upright case which invalidate \Cref{lem:uprightW} and allows to visualize
how these are resolved below.

\begin{Rem}[canonical non-upright wedge] \label{ex:nonuprightw}
A non-upright wedge can be obtained by boosting the right wedge~\(\WwR = \{
x \in \RealNum^d, \; \abs{x^0} < x^1 \}\), \(d \geq 3\), in \(x^2\)-direction with
rapidity~\(\beta \in \RealNum\setminus\{0\}\), 
  yielding
\[
\Ww:=  \Lambda_{\beta}^{(2)} \WwR = \{ x \in \RealNum^d, \; \abs{\cosh(\beta) x^0 -
\sinh(\beta)x^2} < x^1\}. \label{eq:nonupright}\numberthis
\]
\end{Rem}

For concreteness we may take \(d = 3\).
The relevant part determining the precursor ordering of velocity supports
\(\VS_1 \prec_\Ww \VS_2 \Longleftrightarrow \VS_2 - \VS_1 \subset \Ww\) is the
restriction of \(\Ww\) to \(\{x^0 = 0\}\).
For the upright case \(\beta = 0\) this restriction is a half plane,
and the opposite ordering \(\VS_2 \prec_\Ww \VS_1 \) corresponds to inclusion in the
complementary open half-plane.
Exactly this special geometrical situation is necessary
for the validity of \Cref{lem:uprightW}. 
Further this means physically that the scattering states constructed in
\Cref{thm:hrw} cover the entire 2-particle velocity space up to a set of
measure zero.\footnote{Underlying this simple picture is of course the intuition of
conventional (e.g.\ bosonic) particle statistics, which may be misleading in the
general wedge-local setting as illustrated by recent examples of Longo, Tanimoto and
Ueda \cite{LTU17}.}

However for the non-upright case~\(\beta \not = 0\) the restriction of
\(\Ww\) to \(\{x^0 = 0\}\) yields merely a cone \( C :=  \{ \vec x  \in
\RealNum^{d-1},  \abs{\sinh(\beta)x^2} < x^1\} \).
Hence there is a non-trivial region of the two-particle velocity space
which cannot be decomposed into ordered configurations.
For example we may take the corresponding velocity supports concentrated
in sufficiently small neighbourhoods of points \(v_1 \in \VS_1\), \(v_2 \in
\VS_2\) for which
\[
  v_1 \not \prec_{\Ww} v_2 \; \text{and} \;
  v_2 \not \prec_{\Ww} v_1 \;  \quad 
  \Longleftrightarrow \quad
  \vec v_2 - \vec v_1 \in \RealNum^{2} \setminus (C \cup (-C))
  =: \Xi,
\label{eq:inacc}\numberthis
\]
where a ``causally forbidden'' region \(\Xi\) appears, which has vanishing measure only if~\(\beta = 0\).

  \subsection{Haag-Ruelle Theorem with Adapted Lorentz Frame} \label{sec:generalW}
  Difficulties as in \eqref{eq:inacc} result from the implicit Lorentz-frame dependence of
  the Haag-Ruelle operators~\(\BB_\T(f)\). Nevertheless the latter 
  were well suited for the case of upright~\(\Ww\), which motivates us to
  adapt the construction from \Cref{thm:hrw} by passing to a suitable reference
  frame.\footnote{Constructions using Lorentz-covariant creation-operator
    approximants (e.g.\ \cite{Hdg13}) 
    face similar problems as in \eqref{eq:inacc} when applied in a wedge-local setting.}

\begin{Def}[adapted Haag-Ruelle operators]
  For a general (possibly non-upright) wedge~\(\Ww\), \(A \in \Alg(\Ww)\), \(B = A(\chi)\) as before and
  regular positive-energy Klein-Gordon solutions~\(f\), we set for \(\T \in
  \RealNum\)
  \begin{align*} 
    \BB_\T^\Lambda(f) : = \int \DInt[s] x \; f(\Lambda(\T, \vec x))
    \alpha_{(\Lambda(\T, \vec x))} (\Bb),
    \label{eq:defHRgen} \numberthis
  \end{align*} 
  where \( \Lambda \in \LorentzGroup^*(\Ww) := \{ \Lambda \in
  \LorentzGroup_+^\uparrow: \Lambda \WwR = \Ww_\C \} \)
  or more generally  \(\Lambda \in \LorentzGroup^\uparrow_+\).
\end{Def}

In fact, such \(\BB_\T^\Lambda(f)\) appear naturally in the discussion of
Lorentz covariance in standard Haag-Ruelle theory. Here we just introduce
them in an ad-hoc manner, even if the wedge-local net may not be Lorentz covariant.
In the following we will see that they can equally well serve as
creation-operator approximants, which will turn out suitable for our cause.
We should emphasize that no Lorentz transformation is applied to~\(\Bb\) --- only
the hyperplane used for smearing the translates~\(\alpha_x(\Bb)\) is modified.
Fortunately it is not necessary to repeat our arguments from
\Cref{sec:proofhrw}.
We will instead infer the existence of the limits
\[
  \Psi^+_\Lambda :=
  \lim_{\T\to\infty} \BB_{1\T}^\Lambda(f_1) \BB_{2\T}^\Lambda(f_2)
  \ldots \BB_{n\T}^\Lambda(f_n) \Omega
  \label{eq:defOutL}\numberthis
\]
and their Fock structure for suitably ordered wave packets from a redefinition
of the wedge-local net and the results of \Cref{sec:proofhrw}. The basic
observation is that the modification of passing from \(f\) to
\(f^\Lambda(x) := f(\Lambda x)\) and from translation by \(\alpha_x\) to
modified translation automorphisms \(\alpha^\Lambda_x := \alpha_{\Lambda x}\)
entering in \eqref{eq:defHRgen} are both compatible with the underlying
structures in a sense to be made precise now.
  
  \begin{Lem} Let \(\Lambda \in \LorentzGroup^\uparrow_+\). \label{prop:elemcov}
    \begin{enumerate}[(i)]
      \item \(f^\Lambda(x) := f(\Lambda x)\) defines a regular positive-energy Klein-Gordon solution
        iff \(f\) is a regular positive-energy Klein-Gordon solution.
        \label{it:kgCov}
      \item
        Setting \(\alpha^\Lambda_x := \alpha_{\Lambda 
  x}\) and \(\Alg^\Lambda(\Ww) := \Alg(\Lambda \Ww)\), 
        \((\Alg^\Lambda, \alpha^\Lambda, \Omega)\) is a wedge-local
        quantum field theory satisfying \eqref{eq:HK1}--\eqref{eq:HK6}, and
        possibly \eqref{eq:HK2s}, \eqref{eq:HK3s}, iff  the corresponding assumptions
        hold for \((\Alg, \alpha, \Omega)\).  \label{it:netCov}\\
        If \eqref{eq:HK3s} holds, we set further
      \(\alpha_\lambda^\Lambda(A) := \alpha_{(\Lambda x, \Lambda \Lambda_1\Lambda^{-1})}(A)\)
      for
    \(\lambda = (x, \Lambda_1) \in \PoincareGroup^\uparrow_+\).
    \end{enumerate}
    \proof Lorentz invariance of the Klein-Gordon equation~\itref{it:kgCov} is
    standard, so let us only comment that the restriction to orthochronous
    Lorentz transformation is essential for
    preserving the positive-energy property, and that the regularity property can
    be concluded via the representation \eqref{eq:defKG} and standard 
    (non-)stationary phase estimates.

    Statement \itref{it:netCov} follows from elementary computations which we illustrate 
    for the example of \eqref{eq:HK3s}. Letting
    \(\lambda = (x, \Lambda_1) \in \PoincareGroup^\uparrow_+\) we 
    obtain     \[
      \alpha_\lambda^\Lambda \Alg^\Lambda(\Ww) = 
      \alpha_{(\Lambda x, \Lambda \Lambda_1\Lambda^{-1})}\Alg(\Lambda\Ww)
      = 
      \Alg(\Lambda\Lambda_1\Lambda^{-1}\Lambda\Ww+\Lambda x) = 
      \Alg^\Lambda(\Lambda_1\Ww + x) 
    \]
    where we used that \eqref{eq:HK3s} holds for
    the original net \(\Alg\). \qed
  \end{Lem}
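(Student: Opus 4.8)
The plan is to treat the two items separately; each comes down to tracking a single Lorentz transformation~\(\Lambda\) through the relevant definitions, and I do not expect an essential obstacle. For \itref{it:kgCov} I would pass to the momentum representation~\eqref{eq:defKG}. Writing \(f(x)=\int_{H_m}\Ee^{\Ii p\cdot x}\,g(p)\,d\mu_m(p)\), where \(d\mu_m\) is the Lorentz-invariant measure on the forward mass shell \(H_m\) and \(g\) is a smooth compactly supported function on \(H_m\) --- whether \(g\) is such being equivalent to the analogous property of \(\tilde f\), since the two are related by multiplication with a smooth non-vanishing function on \(H_m\) --- the identity \(p\cdot(\Lambda x)=(\Lambda^{-1}p)\cdot x\) together with the substitution \(q=\Lambda^{-1}p\) gives \(f^\Lambda(x)=f(\Lambda x)=\int_{H_m}\Ee^{\Ii q\cdot x}\,g(\Lambda q)\,d\mu_m(q)\), using that \(\Lambda\in\LorentzGroup^\uparrow_+\), being orthochronous, maps \(H_m\) onto itself and that \(d\mu_m\) is Lorentz-invariant. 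Hence \(f^\Lambda\) is again a positive-energy Klein--Gordon solution for mass \(m\) --- orthochronicity being essential here, since a time reversal would flip the sign of the energy --- with momentum density \(q\mapsto g(\Lambda q)\), which is again smooth and compactly supported, so that \(f^\Lambda\) is again regular. The converse follows by applying the same argument to \(\Lambda^{-1}\in\LorentzGroup^\uparrow_+\).

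For \itref{it:netCov} I would first isolate the four elementary geometric facts about \(\Lambda\) that carry the whole argument: the map \(\Ww\mapsto\Lambda\Ww\) is a bijection of the family of wedge regions onto itself (since \(\Lambda(\Lambda_1\WwR+x)=(\Lambda\Lambda_1)\WwR+\Lambda x\)), with inverse \(\Ww\mapsto\Lambda^{-1}\Ww\); \(\Lambda\) commutes with the causal complement, \(\Lambda(\Ww')=(\Lambda\Ww)'\); \(\Lambda\) is linear, so \(\Lambda\Ww+\Lambda x=\Lambda(\Ww+x)\); and, being orthochronous, \(\Lambda\) maps each of \(\{0\}\), \(\bar V^+\), \(H_m\) and \(\bar H_M\) onto itself. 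With these in hand each postulate is a one-line verification: \eqref{eq:HK1} and \eqref{eq:HK2} follow from the first two facts; \eqref{eq:HK2s} from \(\Alg^\Lambda(\Ww')=\Alg(\Lambda\Ww')=\Alg((\Lambda\Ww)')=\Alg(\Lambda\Ww)'=\Alg^\Lambda(\Ww)'\); and \eqref{eq:HK3} from \(\alpha^\Lambda_x(\Alg^\Lambda(\Ww))=\alpha_{\Lambda x}(\Alg(\Lambda\Ww))=\Alg(\Lambda\Ww+\Lambda x)=\Alg^\Lambda(\Ww+x)\). The translations \(\alpha^\Lambda_x\) are implemented by the strongly continuous unitary group \(U^\Lambda(x):=U(\Lambda x)\), which fixes \(\Omega\); writing \(E^\Lambda(\Delta):=E(\Lambda\Delta)\) for its joint spectral measure, one has \(\support E^\Lambda=\Lambda^{-1}\support E\), whence \eqref{eq:HK4} follows from \(E^\Lambda(\{0\})\HilbertSpace=E(\{0\})\HilbertSpace=\ComplexNum\Omega\), \eqref{eq:HK6} from \(H_m\subset\support E^\Lambda\subset\{0\}\cup H_m\cup\bar H_M\), and \eqref{eq:HK5} is immediate because \(\Lambda\Ww\) is again a wedge. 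Each computation runs backwards with \(\Lambda^{-1}\) in place of \(\Lambda\), giving the ``iff''.

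Finally, for the Poincaré-covariant statement one sets \(\alpha^\Lambda_\lambda:=\alpha_{(\Lambda x,\Lambda\Lambda_1\Lambda^{-1})}\) for \(\lambda=(x,\Lambda_1)\); then \(\alpha^\Lambda_\lambda(\Alg^\Lambda(\Ww))=\alpha_{(\Lambda x,\Lambda\Lambda_1\Lambda^{-1})}(\Alg(\Lambda\Ww))=\Alg(\Lambda\Lambda_1\Ww+\Lambda x)=\Alg^\Lambda(\Lambda_1\Ww+x)=\Alg^\Lambda(\lambda\Ww)\) by \eqref{eq:HK3s} for \(\Alg\), and one checks in addition that \(\lambda\mapsto\alpha^\Lambda_\lambda\) is a group homomorphism restricting to \(\alpha^\Lambda_x\) on the translations --- which holds because \((\Lambda x,\Lambda\Lambda_1\Lambda^{-1})(\Lambda y,\Lambda\Lambda_2\Lambda^{-1})=(\Lambda(x+\Lambda_1 y),\Lambda\Lambda_1\Lambda_2\Lambda^{-1})\) coincides with the image of the product \(\lambda\mu=(x+\Lambda_1 y,\Lambda_1\Lambda_2)\). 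The only two points where I expect to need genuine care are the direction of the substitution \(q=\Lambda^{-1}p\) (equivalently, that the transformed spectral measure is \(E\circ\Lambda\) rather than \(E\circ\Lambda^{-1}\)) and the conjugation twist \(\Lambda_1\mapsto\Lambda\Lambda_1\Lambda^{-1}\), which is precisely what turns \(\alpha^\Lambda\) into a genuine Poincaré action rather than merely a reparametrised translation action; both are settled by writing out the relevant group law explicitly. Since the constructions of \Cref{sec:genW} rest only on \itref{it:kgCov} and on the part of \itref{it:netCov} concerning \eqref{eq:HK1}--\eqref{eq:HK6} (together with \eqref{eq:HK2s} where wedge duality is invoked), the Poincaré-covariant addendum may be regarded as an optional refinement.
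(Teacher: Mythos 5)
Your proposal is correct and follows essentially the same route as the paper: part (i) via the momentum-space representation with the Lorentz-invariant measure (exactly the computation the paper later carries out in the proof of \Cref{prop:L1}), and part (ii) by elementary wedge-geometry and spectral-measure bookkeeping, of which the paper only displays the \eqref{eq:HK3s} instance that you also reproduce. The only cosmetic remark is the sign convention in your plane-wave ansatz (the paper's positive-energy solutions carry \(\Ee^{-\Ii p\cdot x}\) with \(p \in H_m\) in the mainly-minus metric), which does not affect the argument.
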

It should be noted that \Cref{prop:elemcov} \itref{it:netCov} applies also to
wedge-local nets which are not Poincaré covariant~\eqref{eq:HK3s}.
In particular the basic definitions \(\alpha^\Lambda_x := \alpha_{\Lambda x}\)
and \(\Alg^\Lambda(\Ww) := \Alg(\Lambda \Ww)\), do not make use of
Lorentz-transformation isomorphisms, they are only a passive redefinition on the
level of the wedge-local net.

To establish the Haag-Ruelle theorem for the adapted scattering state
approximants in \eqref{eq:defOutL} we rewrite the adapted Haag-Ruelle
operators in terms of the boosted net of \Cref{prop:elemcov} as
\begin{align*}
  \BB_\T^\Lambda(f) &= \int \DInt[s] x \; f^\Lambda(\T, \vec x)
  \alpha^\Lambda_{(\T, \vec x)} (\Bb), \text{ and similarly}
  \label{eq:defHRgen2} \numberthis
  \\
  \Bb &= \Aa(\chi) = \int \DInt[d] x' \, \chi(x') \alpha_{x'}(\Aa)
    \\&= \int \DInt[d] x \,\chi(\Lambda x) \alpha_{\Lambda x} (\Aa)
       = \int \DInt[d] x \,\chi^\Lambda( x) \alpha^\Lambda_{x} (\Aa),
\end{align*}
where we used Lorentz invariance of \(\DInt[d] x\). Due to
\(\chi^\Lambda(x):=\chi(\Lambda x) \in \SchwartzSpace(\RealNum^\dd)\) we know
that \(B\) is almost wedge-local also for the redefined net \(\Alg^\Lambda\).
Therefore~\Cref{thm:hrw} may be applied to the rewritten
operators~\eqref{eq:defHRgen2}.
It remains to rephrase the statement of \Cref{thm:hrw} from the boosted
net~\((\Alg^\Lambda,\alpha^\Lambda,\Omega)\) to return to the terminology of
the original theory~\((\Alg, \alpha, \Omega)\). 

Let \(\Ww\) be any wedge, \(\Psi_j = E_m A_j \Omega = E_m A_j^\perp \Omega\),
\(A_j \in \Alg(\Ww)\), \(A_j^\perp \in \Alg(\Ww^\perp)\), and \(\Lambda \in
\LorentzGroup_+^\uparrow\).
Then \(\Psi_j\) are obviously also swappable with respect to the boosted net 
and in particular \(A_j \in \Alg^\Lambda(\Lambda^{-1}\Ww)\).
For~\(\Lambda\in \LorentzGroup^*(\Ww)\) we get \(A_j \in \Alg^\Lambda(\WwR)\),
where \(\Lambda^{-1} \Ww_\C = \WwR\) is upright. Hence assuming uprightness is
redundant for the adapted Haag-Ruelle construction with \(\Lambda\in \LorentzGroup^*(\Ww)\).
Secondly we see from \eqref{eq:defHRgen2} that applying \Cref{thm:hrw}
to outgoing scattering-state approximants interpreted via the boosted net now requires
the ordering
\[
  \VS_{f_n^\Lambda}\prec_\WwR \VS_{f_{n-1}^\Lambda}\prec_\WwR \ldots
  \prec_\WwR \VS_{f_1^\Lambda},
  \label{eq:ordPrescBL}\numberthis
\] 
with \(\VS_{f_j^\Lambda}\) as in \eqref{eq:defVelo}, denoting the
velocity support of \(f_j^\Lambda(x) := f_j(\Lambda x)\). In terms of the
original net, \eqref{eq:ordPrescBL} is by covariance of the ordering relation
(\Cref{prop:ordCov})  equivalent to
\[
  \Lambda\VS_{f_n^\Lambda}\prec_\Ww \Lambda\VS_{f_{n-1}^\Lambda}\prec_\Ww \ldots
  \prec_\Ww \Lambda\VS_{f_1^\Lambda}.
  \label{eq:ordPrescRL}\numberthis
\] 
This is also consistent with a corresponding localization of the adapted
Haag-Ruelle operators~\eqref{eq:defHRgen2} similarly as in
\Cref{lem:wedgeLocalBT}, but with respect to adapted velocity supports
\[
  \VS_{f_j}^\Lambda := \Lambda \VS_{f_j^\Lambda}.
    \label{eq:modVeloSupp}\numberthis
\]
The result of this discussion will be summarized in \Cref{thm:hrwGen}.

\begin{Thm}
  \label{thm:hrwGen}
  Let \(\Lambda \in \LorentzGroup^\uparrow_+\) and \(\Psi_j = E_m A_j \Omega =
   E_m A_j^\perp \Omega\) with \(A_j \in \Alg(\Ww)\), \(A_j^\perp \in
  \Alg(\Ww^\perp)\) (as in \Cref{thm:hrw}).
  \begin{enumerate}[(i)]
    \item \label{it:hrwgconv}
  For regular positive-energy Klein-Gordon solutions \(f_j\) satisfying
\[
  \VS_{f_n}^\Lambda \prec_\Ww \VS_{f_{n-1}}^\Lambda \prec_\Ww \ldots
  \prec_\Ww \VS_{f_1}^\Lambda,
  \label{eq:ordPrescL}\numberthis
\] 
the scattering state approximants
\(
  \Psi_\T^\Lambda := \BB_{1\T}^\Lambda(f_1) \BB_{2\T}^\Lambda(f_2) \ldots
  \BB_{n\T}^\Lambda(f_n)\Omega
\)
converge in norm for \(\T\to \infty\).

\item \label{it:hrwgfock}
  For \(\Lambda \in \LorentzGroup^*(\Ww)\) 
  scalar products of \(\Psi^+_\Lambda :=
    \lim_{\T\to\infty} \BB_{1\T}^\Lambda(f_1) 
    \ldots \BB_{n\T}^\Lambda(f_n) \Omega
  \),
 \(\Psi'^+_\Lambda :=
    \lim_{\T\to\infty} \BB_{1\T}'^\Lambda(f_1') 
    \ldots \BB_{n'\T}'^\Lambda(f_{n'}') \Omega
  \) constructed w.r.t.\ the same wedge \(\Ww\)
  satisfy
  \[
\left\langle \Psi^+_\Lambda, \Psi'^+_\Lambda\right \rangle
= \delta_{nn'} \prod_{j=1}^n \left\langle \BB_{j\T}^\Lambda(f_j)\Omega,
\BB_{j\T}'^\Lambda(f_j')\Omega \right\rangle.
\label{eq:fockGen}\numberthis
\]
\end{enumerate}
Analogous statements hold for incoming scattering states assuming opposite ordering.
\end{Thm}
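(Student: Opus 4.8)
The plan is to deduce both parts directly from \Cref{thm:hrw}, applied not to the original theory but to the boosted net $(\Alg^\Lambda,\alpha^\Lambda,\Omega)$ of \Cref{prop:elemcov}, so that none of the analytic estimates of \Cref{sec:proofhrw} have to be repeated. The starting point is the rewriting \eqref{eq:defHRgen2}: upon replacing $\chi$ by $\chi^\Lambda$, each $f_j$ by $f_j^\Lambda(x):=f_j(\Lambda x)$, and the translations $\alpha$ by $\alpha^\Lambda_x:=\alpha_{\Lambda x}$, the adapted operator $\BB_{j\T}^\Lambda(f_j)$ becomes precisely a standard Haag-Ruelle creation-operator approximant of the form \eqref{eq:defCreat2}--\eqref{eq:defHR2}, but built \emph{inside} $(\Alg^\Lambda,\alpha^\Lambda,\Omega)$ from $A_j\in\Alg(\Ww)=\Alg^\Lambda(\Lambda^{-1}\Ww)$; likewise $A_j^\perp\in\Alg(\Ww^\perp)=\Alg^\Lambda(\Lambda^{-1}\Ww^\perp)$.

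Next I would check that the hypotheses of \Cref{thm:hrw} hold for the boosted net with localization wedge $\Lambda^{-1}\Ww$. By \Cref{prop:elemcov}~\itref{it:netCov} this net obeys \eqref{eq:HK1}--\eqref{eq:HK6} (and \eqref{eq:HK2s} if $\Alg$ does), the spectral assumptions being unaffected since the mass shell $H_m$ is Lorentz invariant. As the causal complement commutes with Lorentz transformations and $\Ww^\perp=\Ww'+x$ is opposite to $\Ww$, the wedge $\Lambda^{-1}\Ww^\perp=(\Lambda^{-1}\Ww)'+\Lambda^{-1}x$ is opposite to $\Lambda^{-1}\Ww$, so $\Psi_j=E_mA_j\Omega=E_mA_j^\perp\Omega$ is literally a swapping relation \eqref{eq:swap} for the boosted net relative to $\Lambda^{-1}\Ww$. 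Moreover, in terms of the modified translations the operator $\Bb_j:=A_j(\chi)$ is smeared with $\chi^\Lambda\in\SchwartzSpace(\RealNum^\dd)$, whose Fourier transform is supported on a Lorentz image of $\support\hat\chi$ and hence on an admissible neighbourhood of the (Lorentz-invariant) mass shell in the sense of \Cref{lem:HRW}; and since $\Lambda^{-1}\DoubleCone_r$ is a bounded region contained in some $\DoubleCone_{c(\Lambda)\,r}$, the operator $\Bb_j$ remains almost wedge-local for $\Alg^\Lambda$ (replacing the double cones of \Cref{lem:HRW}~\itref{it:awl} by their bounded $\Lambda^{-1}$-images and rescaling the constants). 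By \Cref{prop:elemcov}~\itref{it:kgCov} each $f_j^\Lambda$ is again a regular positive-energy Klein-Gordon solution. Finally, applying \Cref{prop:ordCov} with the Poincaré element $(0,\Lambda^{-1})$ to the sets $\VS_{f_j}^\Lambda=\Lambda\VS_{f_j^\Lambda}$, the hypothesis \eqref{eq:ordPrescL} becomes equivalent to $\VS_{f_n^\Lambda}\prec_{\Lambda^{-1}\Ww}\cdots\prec_{\Lambda^{-1}\Ww}\VS_{f_1^\Lambda}$, which is exactly the ordering demanded by \Cref{thm:hrw}~\itref{it:hrwconv} for the boosted net. Invoking that part of \Cref{thm:hrw} then yields \itref{it:hrwgconv}.

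For part \itref{it:hrwgfock} I would restrict to $\Lambda\in\LorentzGroup^*(\Ww)$, so that $(\Lambda^{-1}\Ww)_\C=\Lambda^{-1}\Ww_\C=\WwR$ and $\Lambda^{-1}\Ww$ is a translate of the right wedge, in particular upright. After the rewriting above, $\Psi^+_\Lambda$ and $\Psi'^+_\Lambda$ are scattering states of the boosted net built from operators localizable in one and the same wedge $\Lambda^{-1}\Ww$, so \Cref{thm:hrw}~\itref{it:hrwfock} applies and gives $\langle\Psi^+_\Lambda,\Psi'^+_\Lambda\rangle=\delta_{nn'}\prod_{j}\langle\BB_{j\T}^\Lambda(f_j)\Omega,\BB_{j\T}'^\Lambda(f_j')\Omega\rangle$ with $\T$-independent right-hand side, which is \eqref{eq:fockGen}. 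The statements for incoming scattering states follow in the same way upon reversing the velocity-support ordering and letting $\T\to-\infty$, as in \Cref{thm:hrw}.

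The main obstacle --- and the only \emph{genuinely} non-trivial point --- is to make sure that the passage from $(\Alg,\alpha,\Omega)$ to $(\Alg^\Lambda,\alpha^\Lambda,\Omega)$ preserves \emph{every} ingredient entering the proof of \Cref{thm:hrw}, in particular the admissibility of the smearing function in the sense of \Cref{lem:HRW} and the almost-wedge-local decay of \Cref{lem:HRW}~\itref{it:awl}; this is why those checks are written out above, whereas everything else is bookkeeping through \Cref{prop:ordCov} and \Cref{prop:elemcov}.
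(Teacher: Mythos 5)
Your proposal is correct and takes essentially the same route as the paper: you rewrite the adapted operators \(\BB_{j\T}^\Lambda(f_j)\) as standard Haag-Ruelle operators of the boosted net \((\Alg^\Lambda,\alpha^\Lambda,\Omega)\), transfer all hypotheses (swapping, mass gap, ordering, uprightness of \(\Lambda^{-1}\Ww\) for \(\Lambda\in\LorentzGroup^*(\Ww)\)) via \Cref{prop:elemcov} and \Cref{prop:ordCov}, and then invoke \Cref{thm:hrw}. Your extra checks on the admissibility of \(\chi^\Lambda\) and on almost wedge-locality with rescaled double cones merely spell out points the paper treats briefly.
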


\subsection{Lorentz-Frame Independence and Residual Covariance}
For the adapted creation-operator approximants~\(\BB_{j\T}^\Lambda(f_j)\),
convergence of approximants \(\Psi_\T^\Lambda := \BB_{1\T}^\Lambda(f_1)
\ldots \BB_{n\T}^\Lambda(f_n)\Omega\) has now been established for general
wedges, i.e.\ upright or tilted. The new ordering
restrictions~\eqref{eq:ordPrescL} appear optimal in the context of
\Cref{ex:nonuprightw},
and the Fock structure follows without additional assumptions.
However, as in standard Haag-Ruelle theory, the choice
HR-operators~\(\BB_{j\T}^\Lambda(f_j)\) creating a given one-particle
vector~\(\Psi_j = \BB_{j\T}^\Lambda(f_j)\Omega\) is not unique. 
Fock structure (\Cref{thm:hrwGen} \itref{it:hrwgfock}) implies only for
fixed~\(\Lambda\), that resulting scattering states do not depend on this
freedom of choosing~\(\BB_{j\T}^\Lambda(f_j)\).  In the following we will
exclude also any unphysical dependence on~\(\Lambda\in \LorentzGroup^*(\Ww)\),
for which one has to handle the non-trivial dependence of localization of
\(\BB_{j\T}^\Lambda(f_j)\) on \(\Lambda\).  We begin by considering the
\(\Lambda\)-dependence of one-particle vectors, to be followed by discussing the
influence on ordering conditions and finally on scattering states. 

\begin{Lem} \label{prop:L1} 
  Let \(\Lambda \in \LorentzGroup_+^\uparrow\) and \(f\) a regular
  positive-energy Klein-Gordon solution.
  \begin{enumerate}[(i)]
    \item The wave packet of \(f^\Lambda(x) := f(\Lambda x)\) 
      as defined in \eqref{eq:defKG} 
      is given by
      \[ 
        \tilde f^\Lambda(\vec k) =
        \frac{ \omega_m(\vec \Lambda_m(\vec k))}{\omega_m(\vec k)} \tilde f(\vec \Lambda_m(\vec k)), 
\label{eq:Lftilde}\numberthis
      \]
 where \(\tilde f\) is the wave
 packet of~\(f\) and
 \(\vec \Lambda_m(\vec k)\) denotes the spatial part of \(
 \Lambda \cdot (\omega_m(\vec k) , \vec k)\). 
 In particular, \(\support \tilde f^\Lambda = \vec \Lambda_m^{-1}(\support \tilde
 f)\).
    \item
  The
  \(\Lambda\)-dependence of 
one-particle vectors is 
  \begin{align*}
    \BB_{\T}^\Lambda(f) \Omega = 
    \frac{\omega_m(\Pp)} {\omega_m(\vec \Lambda^{-1}_m(\Pp))}
    \tilde f(\Pp)
    E(H_m) \Bb\Omega. \label{eq:LOne} \numberthis
  \end{align*}
 \end{enumerate}
\end{Lem}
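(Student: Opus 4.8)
The plan is to reduce both statements to the covariant momentum-space form of positive-energy Klein--Gordon solutions, combined with the observation already used above that, via \eqref{eq:defHRgen2}, \(\BB_\T^\Lambda(f)\) is a standard Haag--Ruelle operator built from the boosted net \((\Alg^\Lambda,\alpha^\Lambda,\Omega)\) of \Cref{prop:elemcov}.

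For part (i) I would write the solution covariantly, \(f(x)=\int\ddiff\mu_m(p)\,g(p)\,\Ee^{-\Ii p\cdot x}\), with \(\ddiff\mu_m(p)=\frac{\DInt[\s]k}{(2\pi)^\s\,2\omega_m(\vec k)}\) the Lorentz-invariant measure on the mass shell \(H_m\) and \(g(p)=2\omega_m(\vec k)\tilde f(\vec k)\) for \(p=(\omega_m(\vec k),\vec k)\). Since the Minkowski pairing is Lorentz invariant, \(f^\Lambda(x)=f(\Lambda x)=\int\ddiff\mu_m(p)\,g(p)\,\Ee^{-\Ii(\Lambda^{-1}p)\cdot x}\); the substitution \(q=\Lambda^{-1}p\) together with \(\Lambda\)-invariance of \(\ddiff\mu_m\) then identifies the covariant amplitude of \(f^\Lambda\) as \(q\mapsto g(\Lambda q)\). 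As \(\Lambda\) is orthochronous, \(\Lambda q\) stays on \(H_m\), so its spatial part equals \(\vec\Lambda_m(\vec k)\) and its time component equals \(\omega_m(\vec\Lambda_m(\vec k))\); dividing out the factors \(2\omega_m\) gives \eqref{eq:Lftilde}, and the support identity follows at once, \(\vec\Lambda_m\) being a diffeomorphism of \(\RealNum^\s\) whose inverse \(\vec\Lambda_m^{-1}\) is again the spatial part of the action of \(\Lambda^{-1}\) on \(H_m\).

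For part (ii) I would use that, by \eqref{eq:defHRgen2}, \(\BB_\T^\Lambda(f)\) is the Haag--Ruelle operator attached to the regular positive-energy solution \(f^\Lambda\) (by part (i) and \Cref{prop:elemcov}~\itref{it:kgCov}) and to the boosted net, whose translations \(\alpha^\Lambda_x=\alpha_{\Lambda x}\) are implemented by \(U^\Lambda(x):=U(\Lambda x)\). Reading off the exponent and using Lorentz invariance of the Minkowski pairing, the energy-momentum four-vector \((H^\Lambda,\Pp^\Lambda)\) of the boosted net equals \(\Lambda^{-1}(H,\Pp)\), with joint spectral measure \(E^\Lambda(\Delta)=E(\Lambda\Delta)\); since \(\Lambda H_m=H_m\) for orthochronous \(\Lambda\), this yields \(E^\Lambda(H_m)=E(H_m)\), so the mass-shell property \(\Bb\Omega\in E(H_m)\HilbertSpace\) of \Cref{lem:HRW}~\itref{it:one} holds verbatim also for the boosted net. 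Applying \Cref{prop:otherProp}~(i) there then gives \(\BB_\T^\Lambda(f)\Omega=\tilde f^\Lambda(\Pp^\Lambda)\,\Bb\Omega\), independently of \(\T\). On \(E(H_m)\HilbertSpace\) the four-vector \((H,\Pp)\) acts as \((\omega_m(\Pp),\Pp)\), so \(\Pp^\Lambda\) acts there as \(\vec\Lambda_m^{-1}(\Pp)\); substituting this into \eqref{eq:Lftilde} collapses \(\vec\Lambda_m\circ\vec\Lambda_m^{-1}\) to the identity and produces \eqref{eq:LOne} once the projection \(E(H_m)\) is reinstated.

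I do not expect a genuine obstacle; the delicate points are bookkeeping ones --- identifying the restriction of the boosted spatial momentum to the mass shell with the nonlinear map \(\vec\Lambda_m^{-1}(\Pp)\), and tracking the factor \(\omega_m(\vec\Lambda_m(\vec k))/\omega_m(\vec k)\) that appears when passing between the invariant-measure amplitude \(g\) and the Klein--Gordon amplitude \(\tilde f\) --- both of which are forced by Lorentz invariance of \(\ddiff\mu_m\) and \(\Lambda\)-invariance of \(H_m\).
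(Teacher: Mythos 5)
Your argument is correct. Part (i) is essentially the paper's own computation: rewrite \(f\) with respect to the Lorentz-invariant measure on \(H_m\), use invariance of the Minkowski pairing and substitute \(q=\Lambda^{-1}p\); whether one carries the covariant amplitude \(g(p)=2\omega_m(\vec k)\tilde f(\vec k)\) or, as in the paper, the weight \(\DInt[\s]k/\omega_m(\vec k)\) explicitly, is only a difference of bookkeeping, and the factor \(\omega_m(\vec\Lambda_m(\vec k))/\omega_m(\vec k)\) emerges identically. For part (ii) you take a slightly different route: instead of redoing the spectral-calculus computation \(\BB_\T^\Lambda(f)\Omega=\Ee^{\Ii H_\Lambda\T}\tilde f_\T^\Lambda(\Pp_\Lambda)\Bb\Omega\) and cancelling the \(\T\)-dependent phases on the mass shell (which is what the appendix proof does), you reinterpret \(\BB_\T^\Lambda(f)\) as an ordinary Haag--Ruelle operator for the boosted net \((\Alg^\Lambda,\alpha^\Lambda,\Omega)\) and quote \Cref{prop:otherProp}~(i) there. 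This is legitimate and non-circular: \Cref{prop:elemcov} guarantees the boosted net satisfies \eqref{eq:HK1}--\eqref{eq:HK6}, you correctly identify \((H^\Lambda,\Pp^\Lambda)=\Lambda^{-1}(H,\Pp)\) with \(E^\Lambda(\Delta)=E(\Lambda\Delta)\), and the check \(E^\Lambda(H_m)=E(H_m)\) (using \(\Lambda H_m=H_m\) for orthochronous \(\Lambda\)) is exactly the point needed for the boosted-net version of the mass-shell property and hence for \(\T\)-independence; evaluating \(\tilde f^\Lambda\) at \(\vec\Lambda_m^{-1}(\Pp)\) on \(E(H_m)\HilbertSpace\) then gives \eqref{eq:LOne}. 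What your packaging buys is brevity and consistency with the covariance device the paper itself introduces in \Cref{sec:generalW}; what the paper's direct computation buys is a self-contained appendix that does not lean on \Cref{prop:otherProp} for the redefined net. Either way the underlying cancellation of phases is the same, so the two proofs are equivalent in substance.
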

These one-particle covariance formulas are well-known from the discussion of
Lorentz-covariance in the local case and we will only briefly sketch the
computations in \Cref{app:proofs}.
They are important for the present discussion, as
\eqref{eq:LOne} suggests a non-trivial dependence of \(\lim_{\T\to\infty} \Psi_\T^\Lambda\)
on the auxiliary boost~\(\Lambda\). 
However the dependence can be absorbed by passing to Klein-Gordon
solutions~\(f^{(\Lambda)}_j\) defined via modified wave
packets~\(\tilde f^{(\Lambda)}_j(\vec p) 
  :=  \frac  {\omega_m(\vec \Lambda^{-1}_m(\vec p))} {\omega_m(\vec p)} \tilde f_j(\vec p) \),
which have identical velocity supports and give via \eqref{eq:LOne} that
\[
  \BB_{j\T}^\Lambda(f_j^{(\Lambda)}) \Omega = 
  \tilde f_j(\Pp)
  E(H_m) \Bb_j\Omega, \;
  \text{for any }\Lambda \in \LorentzGroup_+^\uparrow.  \label{eq:LOneInd} \numberthis
\]

While the above argument coincides with the familiar result from local QFT, the
discussion of scattering-state dependence requires additional care in the
wedge-local case due to additional ordering requirements.
For brevity reasons we shall focus on \(\Lambda\)-dependence only within the
preferred class of reference frames for a given localization wedge~\(\Ww\)
defined by \( \LorentzGroup^*(\Ww) 
:= \{ \Lambda \in \LorentzGroup_+^\uparrow: \Lambda \WwR = \Ww_\C \} \)
as in~\Cref{thm:hrwGen}.\footnote{Preliminary computations suggest
  that \Cref{thm:hrwGen} also extends to 
  all~\(\Lambda \in \LorentzGroup_+^\uparrow\) as long as the
  ordering~\eqref{eq:ordPrescL} holds for \(\Lambda \in \LorentzGroup^*(\Ww) \).
}
\begin{Rem}
  \label{rem:prefw}
  Clearly any \(\Lambda, \Lambda' \in  \LorentzGroup^*(\Ww)\) are related by
  an element \(\bar \Lambda := \Lambda^{-1} \Lambda^{'}\) from the 
  stabilizer~\(\mbox{Stab}_{\LorentzGroup_+^\uparrow} \WwR  := \{ \Lambda \in
  \LorentzGroup_+^\uparrow : \Lambda \WwR = \WwR\} \cong 
   \OO(1,1)_+^\uparrow \times \SO(d-2)\), where the first factor is generated by boosts
  \(\Lambda_\beta\) in \(x^1\)-direction (\(\beta \in \RealNum\)), and the
  second by rotations fixing~\(x^1\). 
  In particular we note for later reference that \(\mbox{Stab}_{\LorentzGroup_+^\uparrow} \WwR\) is
  path connected, and that we may smoothly interpolate 
  between any \(\Lambda , \Lambda' \in \LorentzGroup^*(\Ww)\) via arbitrarily
  often differentiable maps \(\Lambda^\gamma : [0,1] \to \LorentzGroup^*(\Ww)\)
  such that \(\Lambda^0 = \Lambda\), \(\Lambda^1 = \Lambda'\).
\end{Rem}
\begin{Prop}[\(\LorentzGroup^*(\Ww)\)-invariance of velocity ordering]
  \label{prop:ordInd}
  For regular Klein-Gordon solutions \(f_1, f_2\) and any \(\Lambda, \Lambda'
  \in \LorentzGroup^*(\Ww)\) we have
  \[
    \VS_{f_1}^\Lambda \prec_\Ww 
    \VS_{f_2}^\Lambda 
    \Longleftrightarrow
    \VS_{f_1}^{\Lambda'} \prec_\Ww 
    \VS_{f_2}^{\Lambda'}
    \label{eq:ordInv}\numberthis
  \]
 \proof By \Cref{prop:ordCov} we have
  \( \VS_{f_1}^\Lambda \prec_\Ww \VS_{f_2}^\Lambda  \Longleftrightarrow 
  \VS_{f_1^\Lambda} \prec_\WwR \VS_{f_2^\Lambda} \)
  and similarly for \(\Lambda'\), allowing us to reduce
  \eqref{eq:ordInv} to the case \(\Ww = \WwR\) up to boosts acting on \(f_j\).
  Thus~\eqref{eq:ordInv} amounts to a property of the relativistic
  velocity transformation law. Let us assume that
  \(\VS_{f_1^{\Lambda'}} \prec_\WwR \VS_{f_2^{\Lambda'}}\).
  By \Cref{rem:prefw} we may write \(\Lambda ' = \Lambda \bar \Lambda\),
  \(\bar \Lambda = \Lambda_\beta R_1\) with a boost~\(\Lambda_\beta\)  in
  \(x^1\)-direction of rapidity \(\beta \in \RealNum\)  and a spatial
  rotation~\(R_1\) preserving \(x^1\). Hence from~\(f^{\Lambda'}_j = f^{\Lambda
    \Lambda_\beta R_1}_j =
(f^{\Lambda\Lambda_\beta}_j)^{R_1} \), (\(j=1,2\)),
  we obtain for the spatial 
  projection~\(\vVS_{f^{\Lambda'}_j}\) of \(\VS_{f^{\Lambda'}_j}\)
  that 
  \begin{align*}
    \vVS_{f^{\Lambda'}_j}
    &=\left \{ \frac{\vec k}{\omega_m(\vec k)}, \; \vec k \in   R_1^{-1}(\support
\tilde f_j^{\Lambda\Lambda_\beta}) \right \}
     = 
     \left  \{ \frac{R_1^{-1} \vec k}{\omega_m(\vec k)}, \; \vec k \in \support
    \tilde f_j^{\Lambda\Lambda_\beta} \right \}
    = R_1^{-1} \vVS_{f^{\Lambda \Lambda_\beta}_j}.
  \end{align*}
  Here we used \Cref{prop:L1} (i),
  that 
  \(R\) from the rotation subgroup of \(\LorentzGroup_+^\uparrow\) 
  act on~\(H_m\) by \(\vec R_m(\vec k) = R \vec k\), and \(\omega_m(R_1^{-1}
  \vec k) = \omega_m(\vec k)\).
  By covariance (\Cref{prop:ordCov})
    \[
      \VS_{f_1^{\Lambda'}} \prec_\WwR \VS_{f_2^{\Lambda'}}
    \Longleftrightarrow 
    R_1^{-1}\VS_{f_1^{\Lambda \Lambda_\beta}} \prec_\WwR R_1^{-1} \VS_{f_2^{\Lambda\Lambda_\beta} }
    \Longleftrightarrow 
    \VS_{f_1^{\Lambda \Lambda_\beta}} \prec_{\WwR}
    \VS_{f_2^{\Lambda\Lambda_\beta} },
  \]
  where we used that \(R_1\WwR = \WwR\), as \(R_1\) is also a rotation preserving \(x_1\).
  The remaining \(x^1\)-boost gives 
  \begin{align*} 
    \vVS_{(f_j^{\Lambda})^{\Lambda_\beta}}
      &=\left \{ \frac{\vec k}{\omega_m(\vec k)}, \; 
        \vec k \in ({\vec  \Lambda}_\beta)_m^{-1}(\support \tilde f_j^{\Lambda})\right \}
        =\left \{ \frac{(\vec \Lambda_{-\beta})_m(\vec k)}
        {\omega_m((\vec \Lambda_{-\beta})_m(\vec k))}, 
          \; \vec k \in \support \tilde f_j^{\Lambda}\right \}
      \\
      &= \left\{ 
      \frac {((\sinh (-\beta) \omega_m(\vec k) + \cosh(-\beta) k^1), k^2, \ldots, k^s)}
      { \cosh (-\beta) \omega_m(\vec k) + \sinh(-\beta) k^1}, \;
      \vec k \in \support
      \tilde f_j^{\Lambda} \right\},
  \end{align*}
  where we used the group action property~\((\vec \Lambda_\beta)^{-1}_m(\vec k) = 
  (\vec \Lambda_\beta^{-1})_m(\vec k) = (\vec \Lambda_{-\beta})_m(\vec k)\). 
  From this we obtain
  \(\VS_{f_1^{\Lambda \Lambda_\beta}} \prec_{\WwR}
  \VS_{f_2^{\Lambda\Lambda_\beta} } \Longleftrightarrow
  \forall \vec k_2 \in \support \tilde f_2^\Lambda,  \;
  \vec k_1 \in \support \tilde f_1^\Lambda :
  \)
  \[
    \frac{-\sinh (\beta) \omega_m(\vec k_2) + \cosh(\beta) k_2^1}
      { \cosh (\beta) \omega_m(\vec k_2) - \sinh(\beta) k_2^1}
      -
    \frac{- \sinh (\beta) \omega_m(\vec k_1) + \cosh(\beta) k_1^1}
      { \cosh (\beta) \omega_m(\vec k_1) - \sinh(\beta) k_1^1} >0.
  \]
  Passing to the common denominator and using~\(\cosh(\beta)^2 - \sinh(\beta)^2 =
  1\), this is equivalent to \(
  {k_2^1}/{\omega_m(\vec k_2)} - 
      {k_1^1}/{\omega_m(\vec k_1)}> 0
  \). As the equivalence holds for all 
  \( \vec k_2 \in \support \tilde f_2^\Lambda,  \;
  \vec k_1 \in \support \tilde f_1^\Lambda \), we have shown that
  \( \VS_{f_1^{\Lambda}} \prec_{\WwR}
  \VS_{f_2^{\Lambda} } \).\qed
\end{Prop}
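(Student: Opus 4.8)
The plan is to reduce the asserted equivalence, via covariance of the precursor ordering, to an elementary property of the relativistic velocity transformation law, and then to track the $\Lambda$-dependence through the factorisation of the stabiliser of~$\WwR$.

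First I would strip off the outer Lorentz factor using \Cref{prop:ordCov}. Since $\Lambda \in \LorentzGroup^*(\Ww)$ means $\Lambda\WwR = \Ww_\C$ and $\prec_\Ww$ depends only on $\Ww_\C$, applying \Cref{prop:ordCov} with $\Lambda^{-1}$ and recalling $\VS_{f_j}^\Lambda = \Lambda\VS_{f_j^\Lambda}$ turns $\VS_{f_1}^\Lambda \prec_\Ww \VS_{f_2}^\Lambda$ into $\VS_{f_1^\Lambda} \prec_\WwR \VS_{f_2^\Lambda}$, and likewise for $\Lambda'$. Hence it suffices to prove $\VS_{f_1^\Lambda} \prec_\WwR \VS_{f_2^\Lambda} \Leftrightarrow \VS_{f_1^{\Lambda'}} \prec_\WwR \VS_{f_2^{\Lambda'}}$. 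Since each $\VS_{f_j^\Lambda}$ has the form $\{1\}\times\vVS_{f_j^\Lambda}$, the computation in the proof of \Cref{lem:uprightW} shows that $\prec_\WwR$ between two such sets is governed entirely by the sign of the $x^1$-component of the velocity difference.

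Next I would use \Cref{rem:prefw} to write $\Lambda' = \Lambda\bar\Lambda$ with $\bar\Lambda$ in the stabiliser of $\WwR$, factored as $\bar\Lambda = \Lambda_\beta R_1$ into an $x^1$-boost $\Lambda_\beta$ and a rotation $R_1$ fixing the $x^1$-axis. The rotation is harmless: from \Cref{prop:L1}\,(i) and $\omega_m(R_1^{-1}\vec k) = \omega_m(\vec k)$ the spatial velocity support of $f_j^{\Lambda'} = (f_j^{\Lambda\Lambda_\beta})^{R_1}$ is $R_1^{-1}$ applied to that of $f_j^{\Lambda\Lambda_\beta}$, and since $R_1\WwR = \WwR$ a second application of \Cref{prop:ordCov} removes it. This leaves the comparison of the $x^1$-boosted case with $\beta = 0$, i.e.\ $\VS_{f_1^{\Lambda\Lambda_\beta}} \prec_\WwR \VS_{f_2^{\Lambda\Lambda_\beta}} \Leftrightarrow \VS_{f_1^\Lambda} \prec_\WwR \VS_{f_2^\Lambda}$. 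Here I would compute $\vVS_{(f_j^\Lambda)^{\Lambda_\beta}}$ explicitly from \Cref{prop:L1}\,(i): the $x^1$-velocity $k^1/\omega_m(\vec k)$ is sent to $(-\sinh\beta\,\omega_m(\vec k) + \cosh\beta\,k^1)/(\cosh\beta\,\omega_m(\vec k) - \sinh\beta\,k^1)$, the denominator being strictly positive because $|k^1| < \omega_m(\vec k)$. The ordering $\VS_{f_1^{\Lambda\Lambda_\beta}} \prec_\WwR \VS_{f_2^{\Lambda\Lambda_\beta}}$ then asserts that this transformed velocity is strictly larger for every $\vec k_2 \in \support\tilde f_2^\Lambda$ than for every $\vec k_1 \in \support\tilde f_1^\Lambda$; clearing the positive common denominator and using $\cosh^2\beta - \sinh^2\beta = 1$ collapses the inequality to $k_2^1/\omega_m(\vec k_2) - k_1^1/\omega_m(\vec k_1) > 0$, which is precisely $\VS_{f_1^\Lambda} \prec_\WwR \VS_{f_2^\Lambda}$. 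Composing the rotation and boost steps then finishes the proof.

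The only genuine obstacle is the boost computation: one must check that the denominators $\cosh\beta\,\omega_m(\vec k) - \sinh\beta\,k^1$ never vanish, so that clearing them preserves the direction of the inequality, and that the resulting expression really reduces to the untransformed velocity inequality without a $\beta$-dependent factor capable of flipping the sign --- this is exactly where $\cosh^2\beta - \sinh^2\beta = 1$ does the work. The remaining ingredients, namely the two covariance reductions and the stabiliser factorisation, are structural and follow directly from \Cref{prop:ordCov}, \Cref{prop:L1} and \Cref{rem:prefw}.
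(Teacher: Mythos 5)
Your proposal follows essentially the same route as the paper: reduction to \(\WwR\) via \Cref{prop:ordCov}, the stabiliser factorisation \(\Lambda' = \Lambda\Lambda_\beta R_1\) from \Cref{rem:prefw}, removal of the rotation via \Cref{prop:L1}\,(i), and the explicit \(x^1\)-boost computation collapsed with \(\cosh^2\beta - \sinh^2\beta = 1\). It is correct, and your explicit remark that the denominators \(\cosh\beta\,\omega_m(\vec k) - \sinh\beta\,k^1\) stay positive (so clearing them preserves the inequality) makes precise a point the paper leaves implicit.
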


  This establishes that all choices \(\Lambda \in \LorentzGroup^*(\Ww)\) are
  equivalent with respect to the ordering restriction. That is  a prerequisite
  for the following commutator estimate, which extends \Cref{cor:commEst}
  and will be required for comparing scattering states defined for distinct
  \(\Lambda\in \LorentzGroup^*(\Ww)\). 

  \begin{Lem}[commutator decay] \label{cor:commDecL}
    Let \(A \in \Alg(\Ww), A^\perp \in \Alg(\Ww^\perp)\), and 
      \(f, f^\perp\)  s.t.\
      \(\VS_{f^\perp}^\Lambda \prec_\Ww \VS_{f}^\Lambda\) for some
    \(\Lambda \in \LorentzGroup^*(\Ww)\). Then
 for any compact continuously differentiable curve
 \(\Lambda^\gamma \in \LorentzGroup^*(\Ww)\), \(\gamma \in [0,1]\),
    and \(\T > 0\),
    we have
    \( \norm{[\partial_\gamma \BB_\T^{\Lambda^\gamma}(f^{(\Lambda^\gamma)}),
        \BB_\T^{\perp\Lambda^\gamma}(f^{\perp(\Lambda^\gamma)})]}
    \leq C_N \T^{-N}\) uniformly in \(\gamma\),
     with \(f^{(\Lambda)}\) is as in
    \eqref{eq:LOneInd}.
   \proof
   \def\LG{{\Lambda^\gamma}}
   \def\BLG{{(\Lambda^\gamma)}}
   As before,
   \(\BB_{\T}^{\perp\LG}(f^{\perp\BLG})\)
   may be understood as a creation operator with \(\Lambda = \Id\) with respect
   to the family of boosted theories
   \((\Alg^{\Lambda^\gamma},\alpha^{\Lambda^\gamma},\Omega)\)
   from \Cref{prop:elemcov}.
   Therefore \Cref{lem:wedgeLocalBT} applies and yields wedge-local
   approximants~\((\BB^{\perp \gamma}_\T)^{(\delta)} := (\BB^{\perp
   \LG}_\T(f^{\perp\BLG}))^{(\delta)}\)
       (\(\delta>0\))
       such that for any \(N \in \NaturalNum\),
   \(\normm{ (\BB^{\perp \gamma }_\T)^{(\delta)} -
   \BB^{\perp \LG}_\T(f^{\perp\BLG}) } < C_N^\gamma/(1+\T^{N})\)
   with \( (\BB^{\perp \gamma}_\T)^{(\delta)} \in \Alg(\Ww^\perp +
   \T\VS_{f^\perp}^{\Lambda^\gamma} + \DoubleCone_{\delta\abs{\T}})\).
   Here we already used that \(\VS_{f^{\perp\BLG}}^{\Lambda^\gamma} =
     \VS_{f^\perp}^{\Lambda^\gamma}\) 
     holds for all \(\gamma\),  as the supports of the packets of \(f^\perp\) and
     \(f^{\perp \BLG}\) coincide
    by definition.
    Additionally due to compactness and continuous \(\gamma\)-dependence of
    \(f^{\perp\BLG}\) 
    we can in fact chose \(C_N= C_N^\gamma \) uniformly in \(\gamma \in [0,1]\). 
   For the second operator we similarly note that 
     \begin{align*}
       \partial_\gamma \BB^{\LG}_\T(f^{\BLG}) 
       &=
       \BB_\T^\LG(\partial_\gamma f^{(\Lambda^\gamma)})
       +
       \int\DInt[\s]x \, \left( \big.
         (\partial_\mu \Bb)(\Lambda^\gamma(\T, \vec x)) f^{(\Lambda^\gamma)}(\Lambda^\gamma(\T,
         \vec x)) \right.
       \\&\qquad \qquad \qquad \qquad \qquad 
         + \left.
       \Bb(\Lambda^\gamma(\T, \vec x)) (\partial_\mu f^{(\Lambda^\gamma)})(\Lambda^\gamma(\T,
     \vec x))\right) w^\mu_{(\T, \vec x, \gamma)},
     \end{align*}
    with implied summation over~\(\mu\), and where
    \(w^\mu_{(\T, \vec x, \gamma)} := (\partial_\gamma \Lambda^\gamma(\T , \vec x))^\mu\) 
    satisfies by continuous differentiability and compactness of \(\gamma \to
    \Lambda^\gamma\) the bound \(\abss{w^\mu_{(\T, \vec x, \gamma)}} \leq C(\abs \T + \abs{\vec x})\). 
    Therefore \Cref{lem:wedgeLocalBT} applies and yields 
    \( (\partial_\gamma\BB^{\gamma}_\T)^{(\delta)} \in \Alg(\Ww +
    \T\VS_{f}^{\Lambda^\gamma} + \DoubleCone_{\delta \abs{\T}})\)
    such that
   \(\norm{ (\partial_\gamma\BB^{\gamma}_\T)^{(\delta)} -
   \partial_\gamma \BB^{ \gamma}_\T(f^{\BLG}) } < C'_N/(1+\abs{\T}^{N})\),
     where~\(C_N'\) are uniform in \(\gamma\).
     Finally, the commutator estimate follows from the proof of
     \Cref{cor:commEst} in \Cref{app:proofs}.
    \qed
   \end{Lem}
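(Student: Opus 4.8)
The plan is to write both entries of the commutator as Haag--Ruelle creation-operator approximants (or $\gamma$-derivatives thereof) relative to the family of redefined nets $(\Alg^{\Lambda^\gamma},\alpha^{\Lambda^\gamma},\Omega)$ of \Cref{prop:elemcov}, to extract genuinely wedge-local approximants from \Cref{lem:wedgeLocalBT} with rapidly decaying error, and then to invoke the causal-separation argument already used for \Cref{cor:commEst}; the one genuinely new ingredient is uniformity of all estimates in the curve parameter $\gamma\in[0,1]$.

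First I would observe that, for each fixed $\gamma$, the operator $\BB_\T^{\perp\Lambda^\gamma}(f^{\perp(\Lambda^\gamma)})$ is an approximant of the form~\eqref{eq:defHR2} with trivial boost relative to the net $\Alg^{\Lambda^\gamma}$, so \Cref{lem:wedgeLocalBT} yields approximants $(\BB^{\perp\gamma}_\T)^{(\delta)}\in\Alg(\Ww^\perp+\T\VS_{f^\perp}^{\Lambda^\gamma}+\DoubleCone_{\delta\abs{\T}})$ with error $\leq C_N^\gamma(1+\T^N)^{-1}$. The pertinent velocity support is $\VS_{f^\perp}^{\Lambda^\gamma}$: the packets of $f^{\perp}$ and $f^{\perp(\Lambda^\gamma)}$ differ only by the smooth strictly positive factor of~\eqref{eq:LOneInd} and thus have equal momentum support. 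Compactness of $[0,1]$ and continuity of $\gamma\mapsto(\Lambda^\gamma,f^{\perp(\Lambda^\gamma)})$ then allow the constants to be chosen uniformly, $C_N^\gamma\leq C_N$.

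Next I would differentiate $\BB_\T^{\Lambda^\gamma}(f^{(\Lambda^\gamma)})$ in $\gamma$ by the chain and product rules, obtaining $\BB_\T^{\Lambda^\gamma}(\partial_\gamma f^{(\Lambda^\gamma)})$ together with integral terms in which $\partial_\mu\Bb$ or $\partial_\mu f^{(\Lambda^\gamma)}$ occurs, each weighted by $w^\mu_{(\T,\vec x,\gamma)}:=(\partial_\gamma\Lambda^\gamma(\T,\vec x))^\mu$. Since $\gamma\mapsto\Lambda^\gamma$ is $C^1$ on the compact interval, $\abs{w^\mu_{(\T,\vec x,\gamma)}}\leq C(\abs{\T}+\abs{\vec x})$ uniformly in $\gamma$, so every one of these terms is a creation-operator approximant smeared against a polynomially bounded measurable multiplier, which is exactly the situation covered by the second part of \Cref{lem:wedgeLocalBT}. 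It therefore supplies wedge-local approximants $(\partial_\gamma\BB^\gamma_\T)^{(\delta)}\in\Alg(\Ww+\T\VS_f^{\Lambda^\gamma}+\DoubleCone_{\delta\abs{\T}})$ with rapidly decaying error, again uniform in $\gamma$.

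Finally, the hypothesis $\VS_{f^\perp}^\Lambda\prec_\Ww\VS_f^\Lambda$ for one $\Lambda\in\LorentzGroup^*(\Ww)$ propagates to every $\Lambda^\gamma$ along the curve by \Cref{prop:ordInd}, since $\Lambda^\gamma\in\LorentzGroup^*(\Ww)$. Consequently, exactly as in the proof of \Cref{cor:commEst} in \Cref{app:proofs}, for $\delta$ small enough the two localization regions $\Ww+\T\VS_f^{\Lambda^\gamma}+\DoubleCone_{\delta\abs{\T}}$ and $\Ww^\perp+\T\VS_{f^\perp}^{\Lambda^\gamma}+\DoubleCone_{\delta\abs{\T}}$ are spacelike separated for all large $\T$, so the commutator of the two wedge-local approximants vanishes identically there by locality~\eqref{eq:HK2}. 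Writing $[\partial_\gamma\BB_\T^{\Lambda^\gamma}(f^{(\Lambda^\gamma)}),\BB_\T^{\perp\Lambda^\gamma}(f^{\perp(\Lambda^\gamma)})]$ as that (eventually vanishing) commutator plus the error-times-operator cross terms, and using the polynomial norm bounds on $\BB_\T^{\perp\Lambda^\gamma}$ and $\partial_\gamma\BB_\T^{\Lambda^\gamma}$ (the adapted analogues of \Cref{prop:otherProp}), gives the asserted bound $C_N\T^{-N}$. The main obstacle is precisely the uniformity bookkeeping: the error constants of \Cref{lem:wedgeLocalBT}, the linear bound on $w^\mu$, and the $\T$-threshold for causal separation must each be chosen independently of $\gamma\in[0,1]$, which is where compactness of the interval together with \Cref{prop:ordInd} --- guaranteeing the ordering for the whole curve rather than just its endpoints --- is indispensable.
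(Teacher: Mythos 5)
Your proposal is correct and follows essentially the same route as the paper's own proof: reinterpretation via the boosted nets of \Cref{prop:elemcov}, wedge-local approximants from \Cref{lem:wedgeLocalBT} (including its polynomially-bounded-multiplier clause for the $\gamma$-derivative with the bound $\abss{w^\mu_{(\T,\vec x,\gamma)}}\leq C(\abs{\T}+\abs{\vec x})$), uniformity in $\gamma$ from compactness, and the causal-separation argument of \Cref{cor:commEst}. Your explicit invocation of \Cref{prop:ordInd} to propagate the ordering along the curve only makes precise what the paper uses implicitly, so there is nothing to correct.
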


 \begin{Thm}[\(\Lambda\)-independence of scattering states]
   \label{prop:LIndep}
   Assume that for some \(\Lambda_0 \in \LorentzGroup^*(\Ww)\)
   \(
    \VS_{f_n}^{\Lambda_0} \prec_\Ww \VS_{f_{n-1}}^{\Lambda_0} \prec_\Ww \ldots
    \prec_\Ww \VS_{f_1}^{\Lambda_0}.
   \)
    Then for any \(\Lambda' \in \LorentzGroup^*(\Ww)\) the scattering states 
    \[
      \Psi^+_{(\Lambda')} := \lim_{\T \to \infty}
      \BB_{1\T}^{\Lambda'}(f_1^{(\Lambda')})
      \ldots
      \BB_{1\T}^{\Lambda'}(f_n^{(\Lambda')}) \Omega
      \label{eq:scattL}\numberthis
    \]
    are well-defined and the limit is independent of  \( \Lambda'\).
   \proof 
   Convergence follows from \Cref{prop:ordInd} and \Cref{thm:hrwGen}.
   Using the above preparations we can establish \(\Lambda\)-independence by
   generalizing the arguments familiar from the local case. Due to \Cref{rem:prefw} we can interpolate
   between the two reference frames specified by~\(\Lambda^0=\Lambda_0\) and \(\Lambda^1=\Lambda'\)
 with a differentiable curve \(\Lambda^\gamma \in
   \LorentzGroup^*(\Ww)\), \(\gamma \in [0,1]\).
    Now we estimate for \(\T > 0\) inductively  with respect to the particle
    number~\(n\) that 
   \begin{align*}
    \norm{\Psi^{(\Lambda)}_\T - \Psi^{(\Lambda')}_\T}
    \leq \int
    _0^1 \DInt \gamma \; \norm{\partial_\gamma
    \Psi^{(\Lambda^\gamma)}_\T } \leq C_N \T^{-N}.
   \end{align*}
   For~\(n=1\) this follows from \eqref{eq:LOneInd}  with \(C_N
   = 0\).
   The induction step is established 
   by
   expanding~\(\normm{\partial_\gamma \Psi^{(\Lambda^\gamma)}_\T} \leq
     \norm{(\partial_\gamma (\BB_{1\T}^\gamma \ldots \BB_{n-1\,\T}^\gamma))
     \BB_{n\T}^\gamma \Omega} +
     \norm{\BB_{1\T}^\gamma \ldots \BB_{n-1\,\T}^\gamma \partial_\gamma \BB_{n\T}^\gamma \Omega} 
     \) where we abbreviated \(\BB_{j\T}^\gamma :=
   \BB_{j\T}^{\Lambda^\gamma}(f_j^{(\Lambda^{(\gamma)})})\).
   Here the second term vanishes due to \eqref{eq:LOneInd} and the first term
   may be estimated by swapping 
   \begin{align*}
     \norm{(\partial_\gamma (\BB_{1\T}^\gamma \ldots \BB_{n-1\,\T}^\gamma)) \BB_{n\T}^\gamma \Omega}
       &=
     \norm{(\partial_\gamma (\BB_{1\T}^\gamma \ldots \BB_{n-1\,\T}^\gamma))
     \BB_{n\T}^{\perp\gamma} \Omega}
     \\&
     \leq\norm{\BB_{n\T}^{\perp\gamma} } \norm{(\partial_\gamma
       (\BB_{1\T}^\gamma \ldots \BB_{n-1\,\T}^\gamma))
       \Omega}
       \\ & \qquad\quad+  \norm{[ 
         \partial_\gamma (\BB_{1\T}^\gamma \ldots \BB_{n-1\,\T}^\gamma),
\BB_{n\T}^{\perp\gamma}]}
     \end{align*}
     where both terms are rapidly decreasing in \(\T\). For the first term 
     this is obtained from the induction assumption and 
     \(\normm{\BB^{\perp \gamma}_{j\T}} \leq C (1+\abs{\T}^{s/2})\) (uniformly in
     \(\gamma \in  [0,1]\)). The second term is estimated
     by expansion of the commutator similarly as in 
     \eqref{eq:expComm}, using \Cref{cor:commDecL} and polynomial bounds
     including \(\normm{\partial_\gamma \BB^{\perp \gamma}_{j\T}} \leq C
     (1+\abs{\T}^{s/2+1})\). \qed
  \end{Thm}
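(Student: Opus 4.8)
The plan is to handle the two assertions of the theorem in turn: convergence of the approximants $\Psi^{(\Lambda')}_\T$ comes almost for free, while $\Lambda$-independence of the limit calls for a Cook-type estimate, carried out this time in the boost parameter rather than in $\T$. For the first assertion, note that by \Cref{prop:ordInd} the ordering hypothesis imposed for $\Lambda_0$ is equivalent to $\VS_{f_n}^{\Lambda'} \prec_\Ww \cdots \prec_\Ww \VS_{f_1}^{\Lambda'}$ for \emph{every} $\Lambda' \in \LorentzGroup^*(\Ww)$; since the rescaled packets $f_j^{(\Lambda')}$ have the same momentum supports as $f_j$ and hence the same adapted velocity supports, \Cref{thm:hrwGen}~\itref{it:hrwgconv} applies and yields norm convergence of $\Psi^{(\Lambda')}_\T := \BB_{1\T}^{\Lambda'}(f_1^{(\Lambda')}) \cdots \BB_{n\T}^{\Lambda'}(f_n^{(\Lambda')})\Omega$ for each such $\Lambda'$. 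It therefore suffices to prove $\Psi^+_{(\Lambda_0)} = \Psi^+_{(\Lambda')}$.

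For that, by \Cref{rem:prefw} I would join the two frames by a compact, arbitrarily often differentiable curve $\Lambda^\gamma \in \LorentzGroup^*(\Ww)$, $\gamma \in [0,1]$, with $\Lambda^0 = \Lambda_0$ and $\Lambda^1 = \Lambda'$ (possible because the stabilizer of $\WwR$ is path connected). Abbreviating $\BB_{j\T}^\gamma := \BB_{j\T}^{\Lambda^\gamma}(f_j^{(\Lambda^\gamma)})$ and $\Psi^{(\Lambda^\gamma)}_\T := \BB_{1\T}^\gamma\cdots\BB_{n\T}^\gamma\Omega$, the map $\gamma \mapsto \Psi^{(\Lambda^\gamma)}_\T$ is $C^1$ for fixed $\T$, so that
\[
  \norm{\Psi^{(\Lambda_0)}_\T - \Psi^{(\Lambda')}_\T} \leq \int_0^1 \DInt\gamma \; \norm{\partial_\gamma \Psi^{(\Lambda^\gamma)}_\T} ,
\]
and the whole matter reduces to the bound $\norm{\partial_\gamma \Psi^{(\Lambda^\gamma)}_\T} \leq C_N \T^{-N}$ for $\T > 0$, uniformly in $\gamma$. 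I would establish this by induction on the particle number $n$. The base case $n = 1$ is the point of the rescaled packets: the covariance identity \eqref{eq:LOneInd} gives $\BB_{1\T}^\gamma\Omega = \tilde f_1(\Pp) E(H_m) \Bb_1 \Omega$ with no $\gamma$-dependence at all, hence $\partial_\gamma \Psi^{(\Lambda^\gamma)}_\T = 0$ and $C_N = 0$.

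For the induction step I would expand by the product rule,
\[
  \partial_\gamma \Psi^{(\Lambda^\gamma)}_\T = \bigl(\partial_\gamma(\BB_{1\T}^\gamma\cdots\BB_{n-1\,\T}^\gamma)\bigr)\BB_{n\T}^\gamma\Omega + \BB_{1\T}^\gamma\cdots\BB_{n-1\,\T}^\gamma(\partial_\gamma\BB_{n\T}^\gamma)\Omega ,
\]
the last summand vanishing again by \eqref{eq:LOneInd}. In the first summand I would insert the swapping relation $\BB_{n\T}^\gamma\Omega = \BB_{n\T}^{\perp\gamma}\Omega$ (built from $A_n^\perp$, valid because $E_m A_n\Omega = E_m A_n^\perp\Omega$ together with the spectral support of $\Bb_n\Omega$) and commute $\BB_{n\T}^{\perp\gamma}$ to the far left. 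This leaves a main term $\BB_{n\T}^{\perp\gamma}\,\partial_\gamma(\BB_{1\T}^\gamma\cdots\BB_{n-1\,\T}^\gamma)\Omega$, handled by the induction hypothesis and the uniform bound $\normm{\BB_{n\T}^{\perp\gamma}} \leq C(1+\abs{\T}^{s/2})$, together with a remainder $[\partial_\gamma(\BB_{1\T}^\gamma\cdots\BB_{n-1\,\T}^\gamma),\BB_{n\T}^{\perp\gamma}]\Omega$. Expanding the derivative and distributing the commutator across the product as in \eqref{eq:expComm}, the remainder becomes a finite sum of terms each carrying exactly one factor $[\BB_{j\T}^\gamma,\BB_{n\T}^{\perp\gamma}]$ or $[\partial_\gamma\BB_{j\T}^\gamma,\BB_{n\T}^{\perp\gamma}]$ with $j \leq n-1$; by transitivity of $\prec_\Ww$ the required ordering $\VS_{f_n}^{\Lambda^\gamma}\prec_\Ww\VS_{f_j}^{\Lambda^\gamma}$ holds along the whole path, so \Cref{cor:commDecL} (and its $\partial_\gamma$-variants) bounds these commutators by $C_N\T^{-N}$ uniformly in $\gamma$, while the remaining factors obey $\normm{\BB_{k\T}^\gamma} \leq C(1+\abs{\T}^{s/2})$ and $\normm{\partial_\gamma\BB_{k\T}^\gamma} \leq C(1+\abs{\T}^{s/2+1})$, again uniformly. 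Summing, $\norm{\partial_\gamma\Psi^{(\Lambda^\gamma)}_\T}$ decays faster than any power of $\T$; integrating over $\gamma$ and letting $\T\to\infty$ then gives $\Psi^+_{(\Lambda_0)} = \Psi^+_{(\Lambda')}$.

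The step I expect to be the main obstacle is keeping every constant uniform in $\gamma\in[0,1]$, in particular the bound on $\partial_\gamma\BB_{k\T}^\gamma$: differentiating the curve $\Lambda^\gamma$ inside the smearing integral \eqref{eq:defHRgen} produces a weight $w^\mu_{(\T,\vec x,\gamma)} = (\partial_\gamma\Lambda^\gamma(\T,\vec x))^\mu$ with $\abss{w^\mu_{(\T,\vec x,\gamma)}} \leq C(\abs{\T}+\abs{\vec x})$, costing one extra power of $\T$, and one must simultaneously verify that the localization region of this derivative operator is still $\Ww + \T\VS_{f}^{\Lambda^\gamma} + \DoubleCone_{\delta\abs{\T}}$ up to rapidly vanishing remainders, so that \Cref{lem:wedgeLocalBT} --- and through it \Cref{cor:commDecL} --- genuinely applies along the entire path. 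This is precisely what \Cref{cor:commDecL} was set up to deliver, so the residual work should be careful assembly rather than a new idea.
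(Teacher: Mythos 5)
Your proposal is correct and follows essentially the same route as the paper's proof: convergence via \Cref{prop:ordInd} and \Cref{thm:hrwGen}, interpolation along a differentiable curve in \(\LorentzGroup^*(\Ww)\) from \Cref{rem:prefw}, a Cook-type estimate in \(\gamma\) proved by induction on \(n\) with base case from \eqref{eq:LOneInd}, and the induction step handled by swapping \(\BB_{n\T}^\gamma\Omega = \BB_{n\T}^{\perp\gamma}\Omega\), expanding the commutator as in \eqref{eq:expComm}, and invoking \Cref{cor:commDecL} together with the uniform polynomial bounds including \(\normm{\partial_\gamma \BB^{\perp\gamma}_{j\T}} \leq C(1+\abs{\T}^{s/2+1})\). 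No substantive differences from the paper's argument.
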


\section{Wave Operators, S-Matrix, and Wedge Transitions} \label{sec:wave}

We have now sufficient understanding of the construction from \Cref{sec:generalW}
to begin with a general and model-independent analysis of the
multi-particle scattering data in wedge-local models.
In particular we propose a formalism for wave operators and S-matrices,
which emphasizes the potential physical peculiarities of multi-particle
scattering in the wedge-local setting.
These considerations will provide the foundation for the study of the
multi-particle structure of the Grosse-Lechner model and related
wedge-local theories in subsequent work.

Guided by conventional Haag-Ruelle theory we additionally need to address restrictions of
our construction regarding swapping and ordering conditions.
Regarding the former it will be convenient to introduce in addition to
the one-particle space~\(\HilbertSpace_1 := E_m \HilbertSpace\) 
the (non-closed) subspaces
\begin{align*}
  \HilbertSpace_1^\Ww &:= \{ 
\Psi_1 \in \HilbertSpace_1, \; \Psi_1 \; \text{swappable w.r.t.}\;
\Ww+x\;\text{for some} \; x \in \RealNum^d
  \},
  \\
  \HilbertSpace_{1c}^\Ww &:= \{ \tilde f(\Pp) \Psi_1,  \;
  \Psi_1 \in   \HilbertSpace_1^\Ww, \;
\tilde f \in \ContinuousFuncs^\infty_c(\RealNum^\s)
\}.
\label{eq:defh1c}\numberthis
\end{align*}
It is clear that \(\HilbertSpace_1^\Ww = \HilbertSpace_1^{\Ww+y} =
  U(y)\HilbertSpace_1^{\Ww} =
\HilbertSpace_1^{\Ww' + y'} \) for any \(y, y' \in \RealNum^d\) by symmetry of
the definition, and 
if covariance~\eqref{eq:HK3s} applies \(U(\Lambda)
\HilbertSpace_1^\Ww = \HilbertSpace_1^{\Lambda\Ww}\).
Lastly \Cref{lem:bu} shows that wedge-duality~\eqref{eq:HK2s}
yields~\(\closure{\HilbertSpace_1^\Ww} = E(H_m)\HilbertSpace\) for any
wedge~\(\Ww\).
Further independent of duality \(\HilbertSpace_{1c}^\Ww \subset 
\closure{\HilbertSpace_1^\Ww}\) is dense by spectral calculus, but one should
not expect \(\HilbertSpace_{1c}^\Ww\) to be a subspace of
\({\HilbertSpace_1^\Ww}\), cf. \cite{BBS01}~Lemma~3.4.
It is clear by definition that for any one particle vector~\(\Psi_k \in
\HilbertSpace_{1c}^\Ww\) we can find associated creation operators such that 
\(\Psi_k = \BB_{k\T}^\Lambda(f_k) \Omega =\BB_{k\T}^{\perp\Lambda}(f_k)
\Omega\), so that we can proceed to the corresponding ordered scattering
states.
The basic conceptual issue to be addressed in the passage from the Haag-Ruelle construction
to the wave operators and the S-matrix concerns the potential implicit
dependence of scattering states on the choice of creation-operator
approximants~\(\BB_{k\T}^\Lambda(f_k)\).
\begin{Lem}
  \label{lem:exch}
  Let \(A_k, A'_k \in \Alg(\Ww)\) together with KG-solutions~\(f_k, f'_k\) 
  and auxiliary functions~\(\chi\), \(\chi'\in\SchwartzSpace(\RealNum^\dd)\)
  (cf.\ \Cref{lem:HRW})
  such that
  \( \BB_{k\T}^\Lambda(f_k) \Omega = \BB_{k\T}'^{\Lambda'}(f_k') \Omega \)
  with
  \(\VS_{n} \prec_\Ww \VS_{n-1} \prec_\Ww \ldots \prec_\Ww \VS_{1}\)
  where \(\VS_k := \VS^\Lambda_{f_k}\) and analogously for 
  \(\VS_k' := \VS^{\Lambda'}_{f_k'}\),
  \(\Lambda,\Lambda' \in \LorentzGroup^*(\Ww)\).
  Then the outgoing limits \(\Psi^+\), \(\Psi'^+\) of
  \(\Psi_\T := \BB_{1\T}^\Lambda(f_1) \ldots \BB_{n\T}^\Lambda(f_n)\Omega\) and
  \(\Psi_\T' := \BB'^{\Lambda'}_{1\T}(f_1')\ldots
  \BB'^{\Lambda'}_{n\T}(f_n')\Omega\) coincide.  
  The same holds for incoming limits with ordering assumptions replaced by
  \(\VS_{1} \prec_\Ww \VS_{2} \prec_\Ww \ldots \prec_\Ww \VS_{n}\).
 \proof For \(\Lambda = \Lambda'\) we find directly
  \(\norm{\Psi^+-\Psi'^+}^2 
   = \norm{\Psi^+}^2 - 2\Re\langle \Psi^+,\Psi'^+\rangle
   + \norm{\Psi'^+}^2 \). This vanishes, as due to Fock structure
   (\Cref{thm:hrwGen} (ii)) and coinciding one-particle vectors we obtain
  \(\langle \Psi^+,\Psi'^+\rangle = \norm{\Psi^+}^2  = \norm{\Psi'^+}^2\).
  The case of general~\(\Lambda,\Lambda' \in \LorentzGroup^*(\Ww)\) follows from
  the above via \Cref{prop:LIndep}. \qed
\end{Lem}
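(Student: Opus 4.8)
The plan is to split the proof into the case $\Lambda=\Lambda'$, handled by a polarisation argument built on the Fock formula \eqref{eq:fockGen}, and the general case, reduced to the first by means of the frame‑independence of \Cref{prop:LIndep}. Existence of the two outgoing limits $\Psi^+,\Psi'^+$ is in either case guaranteed by \Cref{thm:hrwGen}~\itref{it:hrwgconv}, whose hypothesis is exactly the ordering assumed here (with respect to $\Lambda$, resp.\ $\Lambda'$).

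\emph{Case $\Lambda=\Lambda'$.} By continuity of the inner product,
\[
  \norm{\Psi^+-\Psi'^+}^2=\norm{\Psi^+}^2-2\Re\langle\Psi^+,\Psi'^+\rangle+\norm{\Psi'^+}^2 ,
\]
and I would compute each of the three scalar products on the right by \Cref{thm:hrwGen}~\itref{it:hrwgfock}. This is applicable since all states are built with the common $\Lambda\in\LorentzGroup^*(\Ww)$ and with respect to the same wedge $\Ww$ --- it is precisely the restriction $\Lambda\in\LorentzGroup^*(\Ww)$ that renders the relevant localisation wedge upright in the boosted frame of \Cref{prop:elemcov}, so that \Cref{lem:uprightW} supplies the internal quasi‑totality \eqref{eq:techOrd} needed for the mixed scalar product. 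Since the one‑particle vectors $\Psi_k:=\BB_{k\T}^\Lambda(f_k)\Omega=\BB_{k\T}'^{\Lambda}(f_k')\Omega$ coincide and both states carry $n$ particles, all three scalar products equal $\prod_{k=1}^n\norm{\Psi_k}^2$, a nonnegative real, whence $\norm{\Psi^+-\Psi'^+}^2=0$. The auxiliary functions $\chi,\chi'$ enter only through these one‑particle vectors, so distinct choices are harmless, and the incoming case follows verbatim after reversing all orderings.

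\emph{General $\Lambda,\Lambda'\in\LorentzGroup^*(\Ww)$.} The idea is to put $\Psi'^+$ into the normalised form to which \Cref{prop:LIndep} applies. Let $h_k'$ be the regular positive‑energy Klein--Gordon solutions with wave packets $\tilde h_k'(\vec p):=\bigl(\omega_m(\vec p)/\omega_m(\vec{\Lambda'}^{-1}_m(\vec p))\bigr)\,\tilde f_k'(\vec p)$; then $(h_k')^{(\Lambda')}=f_k'$ by the defining relation of the rescaled packets preceding \eqref{eq:LOneInd}, so that $\Psi'^+$ coincides with the state $\Psi'^+_{(\Lambda')}$ of \Cref{prop:LIndep} formed from the data $(A_k',\chi')$ and $h_k'$. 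The rescaling factor is smooth and strictly positive, hence $\support\tilde h_k'=\support\tilde f_k'$, and by \Cref{prop:ordInd} the orderings used --- with respect to any frame in $\LorentzGroup^*(\Ww)$, in particular $\Lambda$ --- are insensitive to this rescaling. \Cref{prop:LIndep} then gives $\Psi'^+=\Psi'^+_{(\Lambda')}=\Psi'^+_{(\Lambda)}=\lim_{\T\to\infty}\BB_{1\T}^{\Lambda}((h_1')^{(\Lambda)})\cdots\BB_{n\T}^{\Lambda}((h_n')^{(\Lambda)})\Omega$, which by \eqref{eq:LOneInd} again has one‑particle vectors $\Psi_k$; applying the same manoeuvre to $\Psi^+$ (or simply noting that it already uses $\Lambda$) reduces the claim to the common‑frame case treated above. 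The one place that requires care is exactly this reduction: keeping the three notions of velocity support ($\VS_f$, $\VS_{f^\Lambda}$, $\VS_f^\Lambda$) and the two rescalings $f\mapsto f^\Lambda$ and $f\mapsto f^{(\Lambda)}$ apart, and verifying that none of them disturbs the $\prec_\Ww$‑orderings; once this is organised the whole argument is a direct chain of appeals to \Cref{thm:hrwGen}, \Cref{prop:LIndep}, \Cref{prop:ordInd} and \Cref{lem:uprightW}.
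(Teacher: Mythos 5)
Your proposal is correct and follows essentially the same route as the paper: the case \(\Lambda=\Lambda'\) by polarisation together with the Fock formula of \Cref{thm:hrwGen}~(ii), and the general case by reduction to a common frame via \Cref{prop:LIndep}. The only difference is that you spell out the frame-change step (rewriting \(f_k'=(h_k')^{(\Lambda')}\), invoking \eqref{eq:LOneInd} and \Cref{prop:ordInd}) which the paper leaves implicit; this elaboration is accurate and consistent with the paper's intent.
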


Further one can make sense of velocity supports and the corresponding 
ordering assumptions without reference to Klein-Gordon solutions. 
For a single-particle state~\(\Psi_1 \in \HilbertSpace_1\) 
the classical propagation region and the corresponding 
\(\Lambda\)-velocity support (\(\Lambda \in \LorentzGroup_+^\uparrow\))
are given in terms of the energy-momentum spectral measure
\( E_{(H, \Pp)}(\Delta)\) (\(\Delta \subset \RealNum^{\s+1}\) Borel) 
by
\[
  \Upsilon_{\Psi_1}  := \{ t\cdot(\omega, \vec k), \; t \in \RealNum, \; (\omega,\vec k) \in \support
  (E_{(H,\Pp)} \Psi_1) \},
\]
\[
  \VS^\Lambda_{\Psi_1} := \Upsilon_{\Psi_1} \cap \Lambda T_1, \qquad
  T_1 := \{(1,\vec x), \; \vec x \in \RealNum^{\s}\}.
  \label{eq:defvelo1}\numberthis
\]
The precursor ordering is
lifted to a relation on one-particle vectors~\(\Psi_1, \Psi_2 \in \HilbertSpace_{1}\)
by setting for \(\Lambda \in \LorentzGroup^*(\Ww)\)
\[\Psi_2 \prec_\Ww \Psi_1 :\Longleftrightarrow
  \VS_{\Psi_2}^\Lambda \prec_\Ww \VS_{\Psi_1}^\Lambda
  \label{eq:defprec1}\numberthis
\]
which is well-defined as a consequence of \Cref{prop:ordInd}.

The multi-particle configurations accessible via our wedge-local Haag-Ruelle
construction can be conveniently expressed by the following notion of ordered
Fock spaces replacing the conventional definition based on bosonic or fermionic
statistics.

\begin{Def}
  The ordered tensor products over one-particle Hilbert
  space~\(\HilbertSpace_1\) with respect to a partial order \(\prec\) on
  \(\HilbertSpace_1\) are defined as closure
  \( {\otimes_\prec^n} \HilbertSpace_1
    := \closure{ {\hat \otimes_\prec^n} \HilbertSpace_1} \)
  of the finite linear spans
  \begin{align*}
    {\hat \otimes_\prec^n} \HilbertSpace_1
    &:= {\Span\{ \Psi_1 \otimes \ldots \otimes \Psi_n, \; \Psi_k \in \HilbertSpace_1,
      \Psi_1 \prec  \Psi_2 \prec \ldots \prec \Psi_n\}}.
    \label{eq:nordtp}\numberthis
  \end{align*}
  Using the conventions~\( {\hat \otimes_\prec^0}\HilbertSpace_1 := \ComplexNum \Omega\), 
  \({\hat \otimes_\prec^1}\HilbertSpace_1 := \HilbertSpace_1\) we obtain
  corresponding ordered Fock spaces 
  \(\FS^\prec(\HilbertSpace_1) := \bigoplus_{n=0}^\infty {\otimes_\prec^n}\HilbertSpace_1\).
  The subspace of finite linear combinations of ordered tensor product vectors
  with \(\Psi_k \in \HilbertSpace_1' \subset \HilbertSpace_1\) will be denoted by
  \(\FS^\prec_0(\HilbertSpace_1') := \hat\bigoplus_{n=0}^\infty{\hat\otimes_\prec^n} \HilbertSpace_1'\).
\end{Def}

To proceed to the scattering data note that
\(\FS_0^{\prec_\Ww}(\HilbertSpace_{1c}^\Ww) \subset \FS^{\prec_\Ww}(\HilbertSpace_{1})\)
is dense and the wave operators are defined by linear extension of the
isometries obtained from the wedge-local Haag-Ruelle construction of
\Cref{thm:hrwGen}. Just as for ordinary bosonic- and fermionic statistics,
unsymmetrized
Fock space~\(\FSu(\HilbertSpace_1) := \bigoplus_{n=0}^\infty \HilbertSpace_1^{\otimes n}\)
provides a common enveloping space into which ordered tensor products and
Fock spaces embed naturally. The possible dependence of scattering states on a
given wedge of reference \(\Ww\), noted by Grosse and Lechner \cite{GL07}, extends
also to multi-particle scattering states and is most consequently expressed
on the level of wave operators.
\begin{Def}[wave operators] \label{def:wave}
  For any given centered wedge \(\Ww\) 
  we set
  \begin{align*}
    \WO_\Ww^+ &: 
    \left\{
      \begin{aligned}
      \FS^{\succ_\Ww}_0(\HilbertSpace_{1c}^\Ww)  &\longrightarrow \HilbertSpace,\\
        \Psi_1 \otimes \ldots \otimes \Psi_n  &\longmapsto 
     \lim_{\T \to\infty} \BB_{1\T}^\Lambda(f_1) \ldots \BB_{n\T}^\Lambda(f_n) \Omega,
    \end{aligned}
    \right.
    \label{eq:defwout}\numberthis
    \\
    \WO_\Ww^- &: 
    \left\{
      \begin{aligned}
        \FS^{\prec_\Ww}_0(\HilbertSpace_{1c}^\Ww)  &\longrightarrow \HilbertSpace,\\
        \Psi_1 \otimes \ldots \otimes \Psi_n  &\longmapsto 
     \lim_{\T \to-\infty} \BB_{1\T}^\Lambda(f_1) \ldots \BB_{n\T}^\Lambda(f_n) \Omega,
    \end{aligned}
    \right.
    \label{eq:defwin}\numberthis
  \end{align*}
  where for \(\Lambda\in\LorentzGroup^*(\Ww)\) suitable
  \(\BB_{k\T}^\Lambda(f_k) \Omega = \Psi_k\) with \(\Bb_k\) swappable and almost
  wedge-local w.r.t.\ the given wedge \(\Ww\) can be obtained for \(\Psi_k \in
  \HilbertSpace_{1c}^\Ww\) via \eqref{eq:LOne}.
\end{Def}

\begin{Prop} \label{prop:woext}
  Assuming wedge-duality~\eqref{eq:HK2s},
  the wave operators \eqref{eq:defwout}, \eqref{eq:defwin} are well-defined and extend to bounded
  linear isometries
\(
\WO_\Ww^+ : \FS^{\succ_\Ww}(\HilbertSpace_1) \longrightarrow \HilbertSpace\),
and
\( \WO_\Ww^- : \FS^{\prec_\Ww}(\HilbertSpace_1) \longrightarrow \HilbertSpace
\). 
\proof Well-definedness of \(\WO_\Ww^+\)  on \(\FS_0^{\succ_\Ww}(\HilbertSpace_{1c}^\Ww) \) 
follows by noting that the computation from the proof of  \Cref{lem:exch}
extends to linear combinations of \(\Psi^+\).
As the Fock structure also implies
isometry of \(\WO^{+}_\Ww\) the wave operators further extend to
the closures \(\FS^{\succ_\Ww}(\HilbertSpace_{1})  =
\closure{\FS_0^{\succ_\Ww}(\HilbertSpace_{1c}^\Ww)}\) by continuity 
and using that \(\closure{\HilbertSpace_{1c}^\Ww} = \HilbertSpace_1\)
(\Cref{lem:bu}). The construction of \(\WO_\Ww^-\) is analogous on the
oppositely ordered spaces. \qed
\end{Prop}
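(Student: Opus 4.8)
The plan is to build $\WO_\Ww^+$ first on the algebraic core $\FS_0^{\succ_\Ww}(\HilbertSpace_{1c}^\Ww)$, check it is a well-defined linear isometry there, and then extend it by continuity to all of $\FS^{\succ_\Ww}(\HilbertSpace_1)$; wedge-duality \eqref{eq:HK2s} enters only at the last step. First I would pin down the realisation of the map on ordered pure tensors. Given $\Psi_k \in \HilbertSpace_{1c}^\Ww$, write $\Psi_k = \tilde g_k(\Pp)\Psi_k'$ with $\Psi_k' \in \HilbertSpace_1^\Ww$ swappable and $\tilde g_k \in \ContinuousFuncs^\infty_c(\RealNum^\s)$; pick almost-wedge-local $A_k \in \Alg(\Ww)$, $A_k^\perp \in \Alg(\Ww^\perp)$ with $E_m A_k\Omega = E_m A_k^\perp\Omega = \Psi_k'$, and, using the one-particle formula \eqref{eq:LOne} together with the reweighting \eqref{eq:LOneInd}, choose for a fixed $\Lambda \in \LorentzGroup^*(\Ww)$ a regular positive-energy Klein-Gordon solution $f_k$ with $\BB_{k\T}^\Lambda(f_k)\Omega = \Psi_k$ and $\VS_{f_k}^\Lambda$ inside an arbitrarily small neighbourhood of $\VS^\Lambda_{\Psi_k}$. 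On a tensor $\Psi_1 \otimes \ldots \otimes \Psi_n$ with $\Psi_1 \succ_\Ww \ldots \succ_\Ww \Psi_n$ these neighbourhoods can be made small enough that $\VS_{f_n}^\Lambda \prec_\Ww \ldots \prec_\Ww \VS_{f_1}^\Lambda$, so \Cref{thm:hrwGen}~\itref{it:hrwgconv} yields convergence of the approximants in \eqref{eq:defwout}, and \Cref{lem:exch} shows the limit is independent of all choices of $A_k$, $f_k$, $\chi$ and $\Lambda \in \LorentzGroup^*(\Ww)$. Since $(\tilde f_1,\ldots,\tilde f_n) \mapsto \BB_{1\T}^\Lambda(f_1)\cdots \BB_{n\T}^\Lambda(f_n)\Omega$ is multilinear and finitely many vectors can be realised with a common $\Lambda$, the assignment is multilinear in $(\Psi_1,\ldots,\Psi_n)$ on ordered configurations; it therefore descends to a linear map on $\hat\otimes^n_{\succ_\Ww}\HilbertSpace_{1c}^\Ww$, and summing over $n$ to a linear $\WO_\Ww^+$ on $\FS_0^{\succ_\Ww}(\HilbertSpace_{1c}^\Ww)$. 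The only point needing attention is that a linear combination of ordered pure tensors representing the zero vector is sent to zero, which follows from the multilinearity just noted applied slot by slot, together with \Cref{lem:exch}.

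The isometry property on the core is read off from the Fock formula. For two ordered pure tensors realised with a common $\Lambda \in \LorentzGroup^*(\Ww)$, \Cref{thm:hrwGen}~\itref{it:hrwgfock} gives $\langle \Psi^+, \Psi'^+\rangle = \delta_{nn'}\prod_{k=1}^n \langle \BB_{k\T}^\Lambda(f_k)\Omega, \BB_{k\T}'^\Lambda(f_k')\Omega\rangle = \delta_{nn'}\prod_{k=1}^n \langle \Psi_k, \Psi_k'\rangle$, which is exactly the inner product of $\Psi_1 \otimes \ldots \otimes \Psi_n$ and $\Psi_1' \otimes \ldots \otimes \Psi_n'$ in $\FS^{\succ_\Ww}(\HilbertSpace_1)$. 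By sesquilinear extension, $\langle \WO_\Ww^+\Phi, \WO_\Ww^+\Psi\rangle = \langle \Phi, \Psi\rangle$ for all $\Phi, \Psi \in \FS_0^{\succ_\Ww}(\HilbertSpace_{1c}^\Ww)$; in particular $\WO_\Ww^+$ is norm-preserving, hence bounded of norm one.

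It remains to extend by continuity. Here wedge-duality is used: by \Cref{lem:bu}, \eqref{eq:HK2s} makes the swappable one-particle vectors dense in $\HilbertSpace_1 = E_m\HilbertSpace$, and since arbitrarily sharp momentum cut-offs are admitted, $\HilbertSpace_{1c}^\Ww$ is then dense in $\HilbertSpace_1$ as well. Approximating each factor of an ordered tensor by a cut-off from $\HilbertSpace_{1c}^\Ww$ concentrated near its energy-momentum support preserves the strict precursor orderings, so $\hat\otimes^n_{\succ_\Ww}\HilbertSpace_{1c}^\Ww$ is dense in $\otimes^n_{\succ_\Ww}\HilbertSpace_1$, and summing over $n$, $\FS_0^{\succ_\Ww}(\HilbertSpace_{1c}^\Ww)$ is dense in $\FS^{\succ_\Ww}(\HilbertSpace_1)$. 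Completeness of $\HilbertSpace$ then lets the isometry $\WO_\Ww^+$ extend uniquely to a bounded linear isometry $\FS^{\succ_\Ww}(\HilbertSpace_1) \to \HilbertSpace$. The incoming operator $\WO_\Ww^-$ is handled verbatim, replacing $\T \to \infty$ by $\T \to -\infty$ and $\succ_\Ww$ by $\prec_\Ww$ throughout, invoking the incoming halves of \Cref{thm:hrwGen} and \Cref{lem:exch}.

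The step I expect to be the main obstacle is the density $\closure{\FS_0^{\succ_\Ww}(\HilbertSpace_{1c}^\Ww)} = \FS^{\succ_\Ww}(\HilbertSpace_1)$. It chains Buchholz's result \Cref{lem:bu} — precisely where \eqref{eq:HK2s} is essential and cannot be dropped, since without it one obtains $\WO_\Ww^+$ only on the closed subspace generated by $\closure{\HilbertSpace_1^\Ww}$ — with a cut-off approximation that must be carried out so as to respect the \emph{strict} velocity-support orderings and to cope with one-particle states whose energy-momentum support is not compact. The multilinearity and consistency bookkeeping in the first step is conceptually routine, but it is the place where the unsymmetrised, order-restricted Fock structure must be used carefully, since the symmetrisation identities available for bosons or fermions are not at one's disposal.
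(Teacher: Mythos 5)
Your proposal follows essentially the same route as the paper: well-definedness on \(\FS_0^{\succ_\Ww}(\HilbertSpace_{1c}^\Ww)\) via \Cref{lem:exch} (with \(\Lambda\)-independence from \Cref{prop:LIndep}), isometry from the Fock formula of \Cref{thm:hrwGen}, and extension by continuity using the density \(\closure{\HilbertSpace_{1c}^\Ww}=\HilbertSpace_1\) supplied by \Cref{lem:bu} under \eqref{eq:HK2s}. One small adjustment: the claim that a vanishing linear combination of ordered pure tensors is sent to zero should not rest on ``multilinearity applied slot by slot'' (relations in the tensor product need not decompose through ordered configurations), but on your own Gram-matrix computation \(\normm{\sum_i c_i\Psi_i^+}^2=\sum_{i,j}\bar c_i c_j\langle \Phi_i,\Phi_j\rangle=\normm{\sum_i c_i\Phi_i}^2\) for ordered pure tensors \(\Phi_i\), which gives well-definedness and isometry in one stroke --- this is exactly what the paper means by extending the computation of \Cref{lem:exch} to linear combinations.
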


Due to translation covariance it is sufficient to consider~\(\WO^\pm_\Ww\) for
centered wedges \(\Ww = \Lambda \WwR\). In other words we will now see that the
wave operators in fact depend on the wedge~\(\Ww\) only modulo translations.
Given \eqref{eq:HK3s}
this symmetry consideration in fact extends to the full Poincaré group,
whose action~\(U_0(\lambda)\) on \(\FSu(\HilbertSpace_1)\)
is defined by
  \begin{align*} 
     U_0(\lambda) \left(\Psi_1 \otimes
    \Psi_2 \otimes \ldots \otimes \Psi_n\right) 
    &:= ( U(\lambda) \Psi_1) \otimes (U(\lambda) \Psi_2) \otimes \ldots \otimes
(U(\lambda) \Psi_n).
\label{eq:defu0}\numberthis
\end{align*} 
While \(U_0(x)\) preserves velocity-ordered Fock spaces, boosts act in general
non-trivially. Explicitly, \Cref{prop:ordCov} shows that \(U_0(\Lambda)
\FS^{\prec_\Ww}(\HilbertSpace_1) = \FS^{\prec_{\Lambda\Ww}}(\HilbertSpace_1)\),
\(U_0(\Lambda)\FS^{\succ_\Ww}(\HilbertSpace_1) = \FS^{\succ_{\Lambda\Ww}}(\HilbertSpace_1)\), and analogously for
the subspaces \(\FS^{\prec_\Ww}_0(\HilbertSpace_{1c}^\Ww)\).

\begin{Thm} \label{prop:cov}
  For \(\lambda = (a,\Lambda) \in \PoincareGroup_+^\uparrow\)
  we have
  \( 
    \WO_{\Ww+a}^\pm =  
    \WO_{\Ww}^\pm\) and \(
    U(\lambda)\WO_\Ww^\pm 
    = \WO_{\Lambda \Ww}^\pm U_0(\lambda).
\) 
\proof
The first statement follows trivially from translation symmetry of 
\Cref{def:wave}.
For the second statement let us consider only the outgoing case, and note that it is
sufficient to establish the identities for special \(\Psi^+\) of ordered tensor product
form
\begin{align*}
  \Psi^+ &= 
\WO^+_\Ww (\BB_{1\T}^{\Lambda'}(f_1)\Omega \otimes  \ldots \otimes
\BB_{n\T}^{\Lambda'}(f_n) \Omega)
\\&=
\lim_{\T \to\infty}  \BB_{1\T}^{\Lambda'}(f_1) \ldots \BB_{n\T}^{\Lambda'}(f_n) \Omega.
\end{align*}
with auxiliary boost~\(\Lambda' \in \LorentzGroup^*(\Ww)\) 
and velocity supports ordered correspondingly, that is by \(\VS_{f_1}^{\Lambda'} \succ_\Ww
\VS_{f_2}^{\Lambda'} \succ_\Ww \ldots \succ_\Ww \VS_{f_n}^{\Lambda'}\).
From continuity of \(U(\lambda)\), we obtain
\begin{align*}
  U(\lambda)\Psi^+
&= \lim_{\T \to\infty} U(\lambda) \BB_{1\T}^{\Lambda'}(f_1)
  U(\lambda)^* U(\lambda) \ldots
    U(\lambda)^* U(\lambda) \BB_{n\T}^{\Lambda'}(f_n) \Omega.
  \label{eq:covstart}\numberthis
\end{align*} 
 Using \(U(\lambda)U(x) = U(\Lambda x) U(\lambda)\), 
 the adjoint action of \(U(\lambda)\) yields due to 
\begin{align*}
  U(\lambda) \BB_{j\T}^{\Lambda'}(f_j) U(\lambda)^*
  &= \int \DInt[s]x f_j(\Lambda'(\T, \vec x)) \, U(\lambda) \alpha_{\Lambda'(\T,
\vec x)}(\Bb_j) U(\lambda)^*
  \\
  &= \int \DInt[s]x f_j'(\Lambda\Lambda'(\T, \vec x)) \,
  \alpha_{\Lambda\Lambda'(\T, \vec x)}( \Bb_j') =
  \Bb_{j\T}'^{\Lambda\Lambda'}(f_j')
  \label{eq:covBT}\numberthis
\end{align*} 
again a Haag-Ruelle operator 
with  \(\Bb_j' := U(\Lambda,a) \Bb_j U(\Lambda,a)^*\)
from the class of almost-wedge local operators considered in \Cref{lem:HRW}
(with respect to the transformed wedge \(\Lambda \Ww\)) and \(f_j'(x) := f_j(\Lambda^{-1} x)\).
Starting from \eqref{eq:covstart} covariance of \(\WO^+_\Ww\) is now obtained via
\begin{align*} 
  U(\lambda)\Psi^+
 &= 
  \lim_{\T\to\infty}  
  \BB_{1\T}'^{\Lambda\Lambda'}(f_1') \BB_{2\T}'^{\Lambda\Lambda'}(f_2')
  \ldots \BB_{n\T}'^{\Lambda\Lambda'} (f_n') \Omega  
  \\
   &= 
  \WO_{\Lambda'\Ww}^+ ( (\BB_{1\T}'^{\Lambda\Lambda'}(f_1')\Omega)
    \otimes \ldots \otimes ( \BB_{n\T}'^{\Lambda\Lambda'} (f_n') \Omega) )
  \\
  &= \WO_{\Lambda\Ww}^+((U(\Lambda) \BB_{1\T}^{\Lambda'}(f_1) \Omega) \otimes
\ldots \otimes (U(\Lambda)\BB_{n\T}^{\Lambda'}(f_n) \Omega))
  \\&
  = \WO_{\Lambda\Ww}^+ U_0(\Lambda) (\BB_{1\T}^{\Lambda'}(f_1) \Omega \otimes
  \ldots \otimes \BB_{n\T}^{\Lambda'}(f_n) \Omega).
\end{align*}
Here we first used~\eqref{eq:covBT}, well-definedness of the wave-operators
(\Cref{prop:woext}), then again~\eqref{eq:covBT}, and lastly~\eqref{eq:defu0}.
Finally we extend by linearity and continuity to all of
\(\FS^{\succ_\Ww}(\HilbertSpace_1)\), whereby we obtain the covariance identity.
\qed
\end{Thm}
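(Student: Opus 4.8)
The plan is to treat the two asserted identities in turn; the translation part is essentially a tautology, and the Lorentz part carries the only real content.

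\textbf{Translation covariance.} For $\WO_{\Ww+a}^\pm = \WO_\Ww^\pm$ I would observe that \Cref{def:wave} refers to the localization wedge only through data that are insensitive to a finite translation: swappability of a one-particle vector is translation invariant, so $\HilbertSpace_{1c}^{\Ww+a} = \HilbertSpace_{1c}^\Ww$; the centered wedge satisfies $(\Ww+a)_\C = \Ww_\C$, so $\LorentzGroup^*(\Ww+a) = \LorentzGroup^*(\Ww)$; and almost wedge-locality with respect to $\Ww$ and with respect to $\Ww+a$ coincide, since a translation by $a$ can be absorbed into the radius of the approximating double cones at the cost of only adjusting the rapid-decay constants in \Cref{lem:HRW}\itref{it:awl}. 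Hence the two prescriptions define the same isometry.

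\textbf{Poincaré covariance, reduction.} By \Cref{prop:woext} the wave operators are bounded and obtained by linear, continuous extension from ordered simple tensors, so it suffices to verify $U(\lambda)\WO_\Ww^+ v = \WO_{\Lambda\Ww}^+ U_0(\lambda)\,v$ for $v = \Psi_1 \otimes \cdots \otimes \Psi_n$ with $\Psi_k = \BB_{k\T}^{\Lambda'}(f_k)\Omega$, $\Lambda' \in \LorentzGroup^*(\Ww)$, and $\VS_{f_1}^{\Lambda'} \succ_\Ww \cdots \succ_\Ww \VS_{f_n}^{\Lambda'}$. Writing $\WO_\Ww^+ v$ as the norm limit of $\BB_{1\T}^{\Lambda'}(f_1)\cdots\BB_{n\T}^{\Lambda'}(f_n)\Omega$, using strong continuity of $U(\lambda)$, inserting factors $U(\lambda)^*U(\lambda)$ between consecutive creation operators together with $U(\lambda)\Omega = \Omega$, everything reduces to the adjoint action on a single factor. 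With $U(\lambda)U(x) = U(\Lambda x)U(\lambda)$ and $\Bb_j' := U(\lambda)\Bb_j U(\lambda)^*$ one gets
\[
  U(\lambda)\,\BB_{j\T}^{\Lambda'}(f_j)\,U(\lambda)^*
  = \int \DInt[\s] x\; f_j(\Lambda'(\T,\vec x))\, \alpha_{\Lambda\Lambda'(\T,\vec x)}(\Bb_j')
  = \BB_{j\T}'^{\Lambda\Lambda'}(f_j'), \qquad f_j'(x) := f_j(\Lambda^{-1}x).
\]

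\textbf{Poincaré covariance, verification and conclusion.} Then I would check that the transformed data fit \Cref{def:wave} and \Cref{thm:hrwGen} for the wedge $\Lambda\Ww$: $\Bb_j' = \Aa_j'(\chi^{\Lambda^{-1}})$ with $\Aa_j' := U(\lambda)\Aa_j U(\lambda)^* \in \Alg(\Lambda\Ww)$ by Poincaré covariance \eqref{eq:HK3s} and $\chi^{\Lambda^{-1}} \in \SchwartzSpace(\RealNum^\dd)$, so $\Bb_j'$ is again swappable and almost wedge-local with respect to $\Lambda\Ww$; $f_j'$ is again a regular positive-energy Klein-Gordon solution by \Cref{prop:elemcov}\itref{it:kgCov} (using that $\Lambda^{-1}$ is orthochronous); $\Lambda\Lambda' \in \LorentzGroup^*(\Lambda\Ww)$ since $\Lambda'\WwR = \Ww_\C$ yields $\Lambda\Lambda'\WwR = \Lambda\Ww_\C = (\Lambda\Ww)_\C$; and, because $(f_j')^{\Lambda\Lambda'} = f_j^{\Lambda'}$, the $\Lambda\Ww$-velocity supports satisfy $\VS_{f_j'}^{\Lambda\Lambda'} = \Lambda\,\VS_{f_j}^{\Lambda'}$, whence $\VS_{f_1'}^{\Lambda\Lambda'} \succ_{\Lambda\Ww} \cdots \succ_{\Lambda\Ww} \VS_{f_n'}^{\Lambda\Lambda'}$ follows from \Cref{prop:ordCov}. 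Consequently $U(\lambda)\WO_\Ww^+ v = \lim_{\T\to\infty} \BB_{1\T}'^{\Lambda\Lambda'}(f_1')\cdots\BB_{n\T}'^{\Lambda\Lambda'}(f_n')\Omega = \WO_{\Lambda\Ww}^+\bigl((\BB_{1\T}'^{\Lambda\Lambda'}(f_1')\Omega) \otimes \cdots \otimes (\BB_{n\T}'^{\Lambda\Lambda'}(f_n')\Omega)\bigr)$, and since $\BB_{j\T}'^{\Lambda\Lambda'}(f_j')\Omega = U(\lambda)\BB_{j\T}^{\Lambda'}(f_j)\Omega = U(\lambda)\Psi_j$ this equals $\WO_{\Lambda\Ww}^+ U_0(\lambda)\,v$ by \eqref{eq:defu0}; this computation also records $U(\lambda)\HilbertSpace_{1c}^\Ww \subset \HilbertSpace_{1c}^{\Lambda\Ww}$, so $\WO_{\Lambda\Ww}^+$ is indeed applicable. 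Extending by linearity and continuity to $\FS^{\succ_\Ww}(\HilbertSpace_1)$ completes the outgoing case, and the incoming case is verbatim with all velocity orderings reversed.

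\textbf{Main obstacle.} There is no deep step here — the one thing to get right is the bookkeeping in the verification paragraph: after transporting $\Bb_j$, $f_j$ and $\Lambda'$ by $\lambda$, each structural hypothesis needed to apply $\WO_{\Lambda\Ww}^+$ (almost wedge-locality and swappability with respect to $\Lambda\Ww$, membership of $\Lambda\Lambda'$ in $\LorentzGroup^*(\Lambda\Ww)$, and especially compatibility of the transformed velocity supports with the ordering required in \Cref{def:wave}) must be re-derived, each from the appropriate earlier result (\eqref{eq:HK3s}, \Cref{prop:elemcov}, \Cref{prop:ordCov}, and the symmetry properties of $\HilbertSpace_1^\Ww$ recorded before \Cref{def:wave}). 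The proof is thus mainly an exercise in assembling these ingredients in the correct order.
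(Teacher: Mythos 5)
Your proposal is correct and follows essentially the same route as the paper: reduce to ordered simple tensors, conjugate each $\BB_{j\T}^{\Lambda'}(f_j)$ by $U(\lambda)$ to recognize an adapted Haag--Ruelle operator $\BB_{j\T}'^{\Lambda\Lambda'}(f_j')$ for the wedge $\Lambda\Ww$, invoke well-definedness of the wave operators, and identify the resulting tensor with $U_0(\lambda)$ acting on the original one. Your verification paragraph merely spells out checks (ordering transport via \Cref{prop:ordCov}, $\Lambda\Lambda'\in\LorentzGroup^*(\Lambda\Ww)$, swappability and almost wedge-locality of the transformed data) that the paper leaves implicit, which is a welcome but not essentially different elaboration.
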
 

For local theories \(\WO_\Ww^\pm\) are equivalent 
to the conventional Haag-Ruelle wave operators as a consequence of
\Cref{lem:exch}. Therefore in local theories they must be \(\Ww\)-independent
and Lorentz-covariant up to suitable identification of ordered Fock spaces by
standard arguments.
In the general wedge-local setting on the other hand, a non-trivial dependence
of \(\WO^\pm_\Ww\) on the wedge~\(\Ww\) should be expected, as noticed in
\cite{GL07}. The resulting asymptotic breaking of Lorentz symmetry in higher dimensions
will be strongly model dependent, so that it is beyond the scope of our present
general analysis.
The lesson to be learned is that there must be a residual Lorentz covariance
with respect to the stabilizer of \(\Ww_\C\) in any wedge-local theory.

Finally let us note that also the \(S\)-matrix in wedge-local theories, as
accessible via our construction with suitable ordering restrictions, will
inherit the wedge-dependence of the wave operators.
\begin{Def}[S-matrix and wedge-transition maps]
  The \(S\)-matrices and wedge-transition maps between final and initial states
  are defined as
  \[
    S_{\fin\,\ini}^{\Ww_\fin, \Ww_\ini}:= (\WO_{\Ww_\fin}^+)^* \WO_{\Ww_\ini}^-,
    \quad
    S_{\fin\,\fin}^{\Ww', \Ww}:= (\WO_{\Ww'}^+)^* \WO_{\Ww}^+,
    \quad
    S_{\ini\,\ini}^{\Ww', \Ww}:= (\WO_{\Ww'}^-)^* \WO_{\Ww}^-.
    \label{eq:defsmatrix}\numberthis
  \]
  depending on centered wedges~\( \Ww_\fin, \Ww_\ini, \Ww, \Ww'\) entering in the
Haag-Ruelle construction.
\end{Def}

\begin{Thm} S-matrices and wedge transition maps~\eqref{eq:defsmatrix} are Poincaré-covariant in the
  sense that for \(\lambda = (a,\Lambda) \in \PoincareGroup_+^\uparrow\) we have
  \begin{align*}
    U_0(\lambda)S^{\Ww_\fin, \Ww_\ini}_{fi} U_0(\lambda)^*&= 
    S^{\Lambda \Ww_\fin, \Lambda \Ww_\ini}_{fi},
   \\ 
    U_0(\lambda)S^{\Ww, \Ww'}_{ff} U_0(\lambda)^*&= 
    S^{\Lambda \Ww, \Lambda \Ww'}_{ff},
    &
    U_0(\lambda)S^{\Ww, \Ww'}_{ii} U_0(\lambda)^*&= 
    S^{\Lambda \Ww, \Lambda \Ww'}_{ii}.
  \end{align*}
  If the wave operators are asymptotically complete (i.e.\ have dense range in
    \(\HilbertSpace\)) we have additional transition identities such as
\(
S_{\fin\,\ini}^{\Ww_\fin, \Ww_\ini} =
  S_{\fin\,\fin}^{\Ww_\fin, \Ww_\fin'}  
S_{\fin\,\ini}^{\Ww_\fin', \Ww_\ini'}   
  S_{\ini\,\ini}^{\Ww_\ini', \Ww_\ini}
\).
\proof Covariance identities follow from \Cref{prop:cov}. The wedge-transition 
formula is a consequence of \eqref{eq:defsmatrix} using that asymptotic
completeness and isometry of \(\WO_{\Ww_\fin'}^+\) imply \(\WO_{\Ww_\fin'}^+
(\WO_{\Ww_\fin'}^+)^* = \Id\) and analogously for \(\WO_{\Ww_\ini'}^-\). \qed

\end{Thm}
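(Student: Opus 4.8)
The plan is to obtain every assertion as a formal consequence of two facts already in hand: the wave-operator intertwining relation \(U(\lambda)\WO_\Ww^\pm = \WO_{\Lambda\Ww}^\pm U_0(\lambda)\) of \Cref{prop:cov}, and the isometry of \(\WO_\Ww^\pm\) from \Cref{prop:woext}. No new estimates or limits enter; the work is purely algebraic, together with a little bookkeeping of the ordered-Fock-space domains on which the various operators live.

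I would first dispose of the three covariance identities, which reduce to the same manipulation; I spell out the \(\fin\,\ini\) case. Taking the adjoint in the relation of \Cref{prop:cov}, and using unitarity of \(U(\lambda)\) and \(U_0(\lambda)\), gives \(U_0(\lambda)(\WO_{\Ww_\fin}^+)^* = (\WO_{\Lambda\Ww_\fin}^+)^*\,U(\lambda)\), while \(U(\lambda)\WO_{\Ww_\ini}^- = \WO_{\Lambda\Ww_\ini}^-\,U_0(\lambda)\) holds directly. Substituting both into the definition \eqref{eq:defsmatrix} yields
\begin{align*}
  U_0(\lambda)\,S_{\fin\,\ini}^{\Ww_\fin,\Ww_\ini}\,U_0(\lambda)^*
  &= (\WO_{\Lambda\Ww_\fin}^+)^*\,U(\lambda)\,\WO_{\Ww_\ini}^-\,U_0(\lambda)^*
  \\
  &= (\WO_{\Lambda\Ww_\fin}^+)^*\,\WO_{\Lambda\Ww_\ini}^- = S_{\fin\,\ini}^{\Lambda\Ww_\fin,\Lambda\Ww_\ini},
\end{align*}
where in the second line I used again \(U(\lambda)\WO_{\Ww_\ini}^- = \WO_{\Lambda\Ww_\ini}^-\,U_0(\lambda)\) and then \(U_0(\lambda)U_0(\lambda)^* = \Id\). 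The \(ff\)- and \(ii\)-identities come out verbatim, using only the \(+\) (respectively \(-\)) wave operators on both sides.

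For the wedge-transition formula the only extra input beyond \eqref{eq:defsmatrix} is the elementary observation that an isometry of Hilbert spaces with dense range is onto, hence unitary: applied to \(\WO_{\Ww_\fin'}^+\) and \(\WO_{\Ww_\ini'}^-\) — isometric by \Cref{prop:woext}, of dense range by the asymptotic-completeness hypothesis — this gives \(\WO_{\Ww_\fin'}^+(\WO_{\Ww_\fin'}^+)^* = \Id\) and \(\WO_{\Ww_\ini'}^-(\WO_{\Ww_\ini'}^-)^* = \Id\). Expanding the right-hand side of the claimed identity through \eqref{eq:defsmatrix} and regrouping, the two cancellations fall at consecutive slots:
\begin{align*}
  S_{\fin\,\fin}^{\Ww_\fin,\Ww_\fin'}\,S_{\fin\,\ini}^{\Ww_\fin',\Ww_\ini'}\,S_{\ini\,\ini}^{\Ww_\ini',\Ww_\ini}
  &= (\WO_{\Ww_\fin}^+)^*\bigl[\WO_{\Ww_\fin'}^+(\WO_{\Ww_\fin'}^+)^*\bigr]\bigl[\WO_{\Ww_\ini'}^-(\WO_{\Ww_\ini'}^-)^*\bigr]\WO_{\Ww_\ini}^-
  \\
  &= (\WO_{\Ww_\fin}^+)^*\,\WO_{\Ww_\ini}^- = S_{\fin\,\ini}^{\Ww_\fin,\Ww_\ini}.
\end{align*}

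I do not expect a genuine obstacle here; once \Cref{prop:cov} and \Cref{prop:woext} are in place the argument is entirely formal. The only point that deserves care is the domain bookkeeping: the operators \(\WO_\Ww^\pm\) and the \(S\)-maps act between distinct ordered Fock spaces \(\FS^{\prec_\Ww}(\HilbertSpace_1)\) and \(\FS^{\succ_\Ww}(\HilbertSpace_1)\), and \(U_0(\lambda)\) moves these around, carrying \(\FS^{\prec_\Ww}(\HilbertSpace_1)\) onto \(\FS^{\prec_{\Lambda\Ww}}(\HilbertSpace_1)\) and \(\FS^{\succ_\Ww}(\HilbertSpace_1)\) onto \(\FS^{\succ_{\Lambda\Ww}}(\HilbertSpace_1)\) (as recorded just before \Cref{prop:cov}). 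One therefore has to check that in the covariance identities the conjugated operator has source \(\FS^{\prec_{\Lambda\Ww_\ini}}(\HilbertSpace_1)\) and target \(\FS^{\succ_{\Lambda\Ww_\fin}}(\HilbertSpace_1)\), matching \(S_{\fin\,\ini}^{\Lambda\Ww_\fin,\Lambda\Ww_\ini}\), and that in the transition formula the three factors chain \(\FS^{\prec_{\Ww_\ini}}(\HilbertSpace_1) \to \FS^{\prec_{\Ww_\ini'}}(\HilbertSpace_1) \to \FS^{\succ_{\Ww_\fin'}}(\HilbertSpace_1) \to \FS^{\succ_{\Ww_\fin}}(\HilbertSpace_1)\), so that the composite agrees with \(S_{\fin\,\ini}^{\Ww_\fin,\Ww_\ini}\) as a map, domain included. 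All of this is immediate from the mapping properties stated in \Cref{def:wave} and \Cref{prop:woext}.
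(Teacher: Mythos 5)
Your argument is correct and is exactly the route the paper takes: the covariance identities follow by conjugating the definition~\eqref{eq:defsmatrix} with the intertwining relation of \Cref{prop:cov} (and its adjoint), and the transition formula follows from isometry plus dense range giving \(\WO_{\Ww_\fin'}^+(\WO_{\Ww_\fin'}^+)^*=\Id\) and \(\WO_{\Ww_\ini'}^-(\WO_{\Ww_\ini'}^-)^*=\Id\). You have merely written out the algebra and the domain bookkeeping that the paper's one-line proof leaves implicit.
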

  
  It is important to highlight that in our construction the 
  localization wedge~\(\Ww\) must agree among all creation operators used to
  define a scattering state. Additionally even if there is a non-trivial overlap 
  between two distinct ordered Fock spaces, for non-vanishing \(\Psi \in
    \FS^{\succ_{\Ww}}(\HilbertSpace_1)
    \cap\FS^{\succ_{\Ww'}}(\HilbertSpace_1)\) one will in general have
    \(\WO^+_\Ww \Psi \not = \WO^+_{\Ww'} \Psi \).
  The analysis of this localization-dependence can be carried much
  further in models with stronger (e.g.\ string-like) localization. In this case
  also scattering states can be constructed for mixed
  string-directions and the dependence on these directions can
  be taken into account on the level of the asymptotic Fock spaces \cite{FGR96}.

  \section{Concluding Remarks}
  
We developed \(N\)-particle scattering theory for general wedge-local quantum
field theories with isolated mass shells. In particular we constructed scattering states 
for arbitrarily many particles, even with reduced localization
information available from wedge-locality.
This implies also that the asymptotic particle structure of wedge-local models with
isolated mass shells must be as rich as for strictly local theories.

This brings us to the problem of asymptotic completeness (AC) which, in spite of
recent progress \cite{Le06t, DT10, DG13}, is largely open both in the local and
wedge-local setting.
Using our construction of \(N\)-particle scattering states, we intend to
establish AC in the wedge-local model of Grosse and Lechner \cite{GL07}. 
This will give the first example of a relativistic theory in 4-dimensional
space-time, which is interacting and asymptotically complete. Furthermore we
expect that the non-trivial \(S\)-matrix of this model will be factorizing,
which is an unusual feature in higher dimensions.
On the other hand also interesting counterexamples to two-particle
asymptotic completeness have recently been constructed in wedge-local setting
\cite{LTU17}, which ought to be instructive also at the
multi-particle level.

It is not known whether the existence of an interpolating wedge-local net
has any consequences on the properties of an \(S\)-matrix beyond the basic
symmetry principles discussed in \Cref{sec:wave}. 
As a first step one may ask whether there is any meaningful generalization of
the LSZ reduction formula for the wedge-local setting, which is the conventional
point of departure for investigating analyticity properties of the S-matrix.
Phrased differently, one may ask in which generality the inverse scattering
problem is solvable within the class of wedge-local models. Here some positive
related results are known for non-local models~\cite{BW84}, or for a certain
class of field theories formulated on Krein spaces~\cite{AG01}. 

Lastly let us point out that a general scattering theory for
massless particles in the wedge-local setting curiously appears to require new
ideas. In particular many of the conventional technical tools may fail without
mass gaps, including energy bounds and clustering estimates which are
indispensable in all previous constructions of scattering states in the local
setting without mass gaps, see e.g.~\cite{Bu77, Dy05, AD15, MD16}.

\appendix
\section{Some Technical Arguments} \label{app:proofs}
For the convenience of the reader we will briefly explain how the standard proof
of commutator estimates for Haag-Ruelle operators also yields the corresponding
results in the wedge-local setting. Due to the covariance arguments from
\Cref{sec:generalW} it is sufficient to consider the case of non-adapted
HR-operators corresponding to \(\BB_{\T}^\Lambda(f)\) with \(\Lambda = \Id\).

\begin{Lem} \label{lem:KGdecay}
  Let \(f\) be a regular Klein-Gordon solution of mass \(m >0\). 
  \begin{enumerate}[(i)]
    \item \( \abs{f(t, \vec x)} \leq C/(1+\abs{t}^{s/2})\) for any \((t, \vec x)
      \in \RealNum^{s+1}\),
    \item \( \abs{f(t, \vec x)} \leq C_{\epsilon, N}/(1+\abs{t}^N + \abs{\vec x}^N)\) for \((t, \vec x) \in
      \RealNum^d \setminus \Upsilon_f^\epsilon\),
    \item \(\norm{f_t}_{\LSpace^1(\RealNum^s)} \leq C(1+\abs{t}^{s/2})\), where
        \(f_t(\vec x) := f(t, \vec x)\),
    \end{enumerate}
    where \(\epsilon > 0\), and \(N \in \NaturalNum\) are arbitrary, \(C>0\),  \(C_{\epsilon,N} >0\) are suitable constants
    depending on \(f\), and \(\Upsilon_f^\epsilon := \RealNum \VS_f^\epsilon\)
    is the cone generated by the \(\epsilon\)-enlarged velocity support
    \(\VS_f^\epsilon := \{ (1, \vec v) \in \RealNum^d, \; \exists \vec (1,\vec v') \in \VS_f, \;
    \abs{\vec v - \vec v'} < \epsilon\}\).
  \proof See \cite[Thm.\ 5.3]{ArQFT99}.\qed
\end{Lem}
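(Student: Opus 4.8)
The plan is to derive all three estimates from standard (non-)stationary-phase analysis of the oscillatory integral~\eqref{eq:defKG}. Writing, for \(t \neq 0\),
\[
  f(t, \vec x) = \int \frac{\DInt[\s]k}{(2\pi)^\s}\, \eexp^{\Ii\, \psi(\vec k)}\, \tilde f(\vec k), \qquad
  \psi(\vec k) := \vec k\cdot\vec x - \omega_m(\vec k)\,t,
\]
with \(\tilde f \in \ContinuousFuncs^\infty_c(\RealNum^\s)\), the governing geometric fact is that \(\nabla_{\vec k}\psi = \vec x - t\,\vec k/\omega_m(\vec k)\) vanishes at some \(\vec k \in \support\tilde f\) exactly when \((t,\vec x) \in \RealNum\,\VS_f\), and then necessarily \(\abs{\vec x} < \abs{t}\) since \(\abs{\vec k/\omega_m(\vec k)} < 1\). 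First I would record the elementary computation \(\partial_{k_i}\partial_{k_j}\omega_m(\vec k) = \omega_m^{-1}\bigl(\delta_{ij} - k_ik_j\,\omega_m^{-2}\bigr)\), whose determinant equals \(m^2\,\omega_m^{-(\s+2)} > 0\); this yields a lower bound on the Hessian determinant of the phase that is uniform over the compact set \(\support\tilde f\).

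For part~(ii), if \((t,\vec x) \notin \Upsilon_f^\epsilon\) then \(\abs{\vec x/t - \vec k/\omega_m(\vec k)} \geq \epsilon\) for all \(\vec k \in \support\tilde f\), which together with \(\abs{\vec k/\omega_m(\vec k)} < 1\) gives \(\abs{\nabla_{\vec k}\psi} = \abs{t}\,\abs{\vec x/t - \vec k/\omega_m(\vec k)} \geq c\,(\abs{t} + \abs{\vec x})\) on \(\support\tilde f\) for some \(c = c(\epsilon) > 0\). I would then apply \(N\) times the transpose of the first-order operator \(L := -\Ii\,\abs{\nabla_{\vec k}\psi}^{-2}\,\nabla_{\vec k}\psi\cdot\nabla_{\vec k}\), which satisfies \(L\eexp^{\Ii\psi} = \eexp^{\Ii\psi}\); since the derivatives of \(\omega_m\) and \(\tilde f\) are bounded on \(\support\tilde f\), each integration by parts produces a factor \((\abs{t}+\abs{\vec x})^{-1}\), giving \(\abs{f(t,\vec x)} \leq C_{\epsilon,N}(1 + \abs{t} + \abs{\vec x})^{-N}\) for every \(N\). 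For part~(i), I would insert a smooth cut-off in \(\vec k\) separating the case that \(\vec x/t\) lies in a slightly enlarged velocity support from the complementary case: on the complement the non-stationary-phase estimate just described (with \(t\) fixed, uniformly in \(\vec x/t\)) gives \(\Ord(\abs{t}^{-\infty})\), while on the enlarged velocity support the stationary-phase lemma applied to the phase \(\psi/t\) with large parameter \(t\) gives \(\abs{f(t,\vec x)} \leq C\,\abs{t}^{-\s/2}\), the constant controlled by finitely many \(\ContinuousFuncs^k\)-norms of \(\tilde f\) together with the uniform Hessian lower bound; for \(\abs{t} \leq 1\) the trivial estimate \(\abs{f(t,\vec x)} \leq (2\pi)^{-\s}\norm{\tilde f}_{\LSpace^1}\) suffices.

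For part~(iii), I would take \(\abs{t} \geq 1\) and split \(\RealNum^\s = B_t \cup (\RealNum^\s \setminus B_t)\) with \(B_t := \{\vec x : (t,\vec x) \in \Upsilon_f^\epsilon\}\), which is contained in a ball of radius \(\leq (1+\epsilon)\abs{t}\) (again by \(\abs{\vec k/\omega_m(\vec k)} < 1\)); on \(B_t\) the bound~(i) contributes \(\leq C\,\abs{t}^{-\s/2}\,\mathrm{vol}(B_t) \leq C'\,\abs{t}^{\s/2}\), and on \(\RealNum^\s \setminus B_t\) the bound~(ii) with \(N = \s+1\) contributes \(\leq C\int_{\RealNum^\s}(1 + \abs{\vec x})^{-(\s+1)}\DInt[\s]x < \infty\); for \(\abs{t} \leq 1\) one bounds \(\norm{f_t}_{\LSpace^1}\) crudely, and altogether \(\norm{f_t}_{\LSpace^1} \leq C(1 + \abs{t}^{\s/2})\).

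The \emph{main obstacle} is the uniformity in the velocity parameter \(\vec x/t\) in part~(i): one must keep the stationary-phase constant from degenerating as a critical point drifts toward the boundary of \(\support\tilde f\), or as \(\vec x/t\) approaches the edge of the attainable velocity range. This is precisely what the slight enlargement of the velocity support in the cut-off buys, together with the fact that the controlling data — the Hessian determinant lower bound \(m^2\,\omega_m^{-(\s+2)}\) and the \(\ContinuousFuncs^k\)-norms of \(\tilde f\) — are uniform over the compact set \(\support\tilde f\); parts~(ii) and~(iii) are then routine consequences. A fully detailed treatment along these lines is carried out in \cite[Thm.\ 5.3]{ArQFT99}.
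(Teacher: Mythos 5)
Your proposal is correct and follows exactly the standard (non-)stationary-phase argument for velocity-support estimates of regular Klein--Gordon wave packets, which is precisely the proof the paper delegates to the cited reference \cite[Thm.\ 5.3]{ArQFT99} (the paper itself gives no independent argument). The key points --- the uniform lower bound \(m^2\omega_m^{-(\s+2)}\) on the Hessian determinant, the gradient lower bound \(c_\epsilon(\abs{t}+\abs{\vec x})\) off the cone \(\Upsilon_f^\epsilon\), and the splitting of the \(\LSpace^1\)-norm into the ball of radius \(O(\abs{t})\) and its complement --- are exactly the ingredients of that standard proof.
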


\begin{proof}[Proof of \Cref{lem:wedgeLocalBT}] Let \(\delta > 0\) be given and
  \(\Bb_r \in \Alg(\Ww+\DoubleCone_r)\), \(\norm{\Bb - \Bb_r} \leq C_N
  /(1 + r^{N})\)  as in \Cref{lem:HRW}.
  Suitable wedge-local approximants may then be obtained by restricting the
  integration in the definition of \(\BB_\T(f)\) to the asymptotically dominant
  part \(f^\uparrow(x) := f(x) \CharFct_{\Upsilon_f^{\delta/2}}(x)\) 
  (\Cref{lem:KGdecay}) and 
  setting \(r(\T):= \delta \abs{\T}/2\) to obtain
  \begin{align*}
    \BB_\T^{(\delta)} := (\Bb_{r(\T)})_\T(f^\uparrow) = \int \DInt[s]  x \; f^\uparrow(\T, \vec x) 
    \alpha_{(\T, \vec x)} (\Bb_{r(\T)}) \in \Alg(\Ww + \DoubleCone_{\delta \abs{\T}/2}
    + \T \VS_f^{\delta/2}) ,
  \end{align*} 
  where the localization was computed for given \(\T \in
    \RealNum\) by covariance, isotony and  noting that \( \Upsilon_f^{\delta/2} \cap \{ x \in \RealNum^d, \;
    x^0 = \T\} = \T \VS_f^{\delta/2} \subset \T \VS_f +
  \DoubleCone_{\delta \abs{\T}/2}\) and \(\DoubleCone_{\delta \abs{\T}/2}
+ \DoubleCone_{\delta \abs{\T}/2} \subset \DoubleCone_{\delta \abs{\T}}\).
The approximation in norm is established
by \(\normm{\BB_\T(f) - \BB_\T^{(\delta)} }
  = \norm{\Bb_\T(f) -  (\Bb_{r(\T)})_\T(f^\uparrow)}
  \leq \norm{(\Bb - \Bb_{r(\T)})_\T(f)}
  + \norm{(\Bb_{r(\T)})_\T(f-f^\uparrow)}
  \leq \norm{\Bb - \Bb_{r(\T)}} \norm{f_\T}_{\LSpace^1(\RealNum^\s)} +
   \normm{\Bb_{r(\T)}} \normm{f_\T- f^\uparrow_\T}_{\LSpace^1(\RealNum^\s)} 
  \)
  using \(\norm{\Bb_r} \leq \norm{\Bb}+C_1\),
  \( \normm{f_\T- f^\uparrow_\T}_{\LSpace^1(\RealNum^\s)} \leq
  C_N'/(1+\abs{\T}^{N}) \) due to \Cref{lem:KGdecay} and that
  \(\norm{\Bb-\Bb_{r(\T)}} \leq C_N/(1+\delta^N \abs{\T}^N)\) is sufficient to
  compensate the polynomial growth in \Cref{lem:KGdecay} (iii) and obtain
  overall \(\normm{\BB_\T(f) - \BB_\T^{(\delta)} } \leq C_{\delta,N}/(1+\abs{\T}^N)\).
\end{proof}

\begin{proof}[Proof of \Cref{cor:commEst}]
  To estimate \(\norm{[\BB^\perp_\T(f^\perp), \BB_\T(f)]}\), let
  \(\delta > 0\) and \(\BB_\T^{(\delta)}\), \( \BB_\T^{\perp(\delta)}\) corresponding
  approximants as from \Cref{lem:wedgeLocalBT},
  i.e.\ \(\BB_\T^{(\delta)} \in \Alg(\T \VS_f +
  \DoubleCone_{\delta\abs{\T}} + \Ww)\), s.t.\
  \(
    \norm{\BB_{\T}^{(\delta)} - \BB_\T(f)} \leq {
    C_N^\delta}/(1+\abs{\T}^N)\), and let analogously
    \(\BB_\T^{\perp(\delta)} \in \Alg(\T \VS_{f^\perp} +
  \DoubleCone_{\delta\abs{\T}} + \Ww^\perp)\), s.t.
  \(
    \normm{\BB_{\T}^{\perp(\delta)} - \BB_\T^\perp(f^\perp)} \leq {
  C'^\delta_N}/(1+\abs{\T}^N)\).

  Choosing \(\delta > 0\) sufficiently small the localization regions of
  \(\BB_\T^{(\delta)}\) and \(\BB_\T^{\perp(\delta)}\) will
  be space-like separated for any large enough \(\T > 0\):
  By assumption we have \(\VS_{f} - \VS_{f^\perp} \subset \Ww_\C\) with
  \(\VS_{f} - \VS_{f^\perp}\) compact and \(\Ww_\C\) open. Thus there exists \(\epsilon > 0\) such that 
\( \VS_{f} - \VS_{f^\perp} + \DoubleCone_\epsilon \subset \Ww_\C\), where
\(\DoubleCone_\epsilon : = \{ x \in \RealNum^d, \; \abs{x}_c = \abss{x^0} + \abs{\vec x} <
\epsilon \}\) and as \(\Ww_\C\) is a convex cone this implies also
\( \T( \VS_{f} - \VS_{f^\perp} + \DoubleCone_\epsilon) \subset \Ww_\C\) for any
\(\T > 0\).
To obtain space-like separation recall that \(\Ww = \Ww_\C + x_1\), \(\Ww^\perp
  = \Ww_\C' + x_2\), for \(x_1, x_2
  \in \RealNum^d\). Thus we get for \(\delta < \epsilon/3\) and \(\T > 3(\abs{x_1}_c +
  \abs{x_2}_c)/\epsilon\) and any \(\abs{x_1'}_c < \delta\), \(\abs{x_2'}_c
< \delta\) that
\begin{align*}
  \fl
  \T ( \VS_f - \VS_{f^\perp} + \frac{x_1 - x_2}{\T} + x_1' - x_2' ) + \Ww_\C \subset \Ww_\C = (\Ww_\C^\perp)'
  \\&
  \Longrightarrow
  \T \VS_f + x_1 + \T x_1' + \Ww_\C \subset
  (\Ww_\C^\perp + \T \VS_{f^\perp} + x_2 + \T x_2')',
\end{align*}
where we used \(\Ww_\C + \Ww_\C = \Ww_\C\) and that \(\Reg_1 + \Reg_2
\subset \Reg_3' \Longleftrightarrow \Reg_1 \subset (\Reg_3 - \Reg_2)'\) for
arbitrary \(\Reg_k \subset \RealNum^d\).
Due to \(\T \DoubleCone_{\delta} = \DoubleCone_{\delta \T}\) this is equivalent to
\(
  \Ww + \T \VS_f + \DoubleCone_{\delta \T}  \subset (\Ww^\perp + \T
  \VS_{f^\perp} + \DoubleCone_{\delta \T})'
\) for \(\delta < \epsilon/3\) and \(\T > 3(\abs{x_1}_c +
  \abs{x_2}_c)/\epsilon\), as claimed.

  For such \(\T, \delta\) we now obtain from locality that \([
  \BB_\T^{\perp(\delta)}, \BB_\T^{(\delta)}] = 0\), which implies the commutator estimate by
  expanding
  \begin{align*}
  \norm{[\BB^\perp_\T(f^\perp), \BB_\T(f)]}
  & = 
  \norm{[\BB^\perp_\T(f^\perp) - \BB^{\perp(\delta)}_\T + \BB^{\perp(\delta)}_\T  , \BB_\T(f) - \BB^{(\delta)}_\T + \BB^{(\delta)}_\T  ]}
  \\&
  \leq
  \norm{[\BB^\perp_\T(f^\perp) - \BB^{\perp(\delta)}_\T, \BB_\T(f) - \BB^{(\delta)}_\T + \BB^{(\delta)}_\T  ]}
  \\& \qquad\qquad
  + \norm{[ \BB^{\perp(\delta)}_\T  , \BB_\T(f) - \BB^{(\delta)}_\T   ]}
  + \norm{[ \BB^{\perp(\delta)}_\T  ,  \BB^{(\delta)}_\T  ]},
  \end{align*}
  where \(\normm{[\BB^\perp_\T(f^\perp) -
    \BB^{\perp(\delta)}_\T, \BB_\T(f)  ]} \leq 2 \normm{\BB^\perp_\T(f^\perp) -
  \BB^{\perp(\delta)}_\T} \norm{\BB_\T(f)}  \leq 2 C_{N'}^\delta
    C/(1+\abs{\T}^{N'}) \cdot
    (1+\abs{\T})^{s/2} \leq C_N' \T^{-N}
\)
due to \Cref{lem:wedgeLocalBT} and  \Cref{prop:otherProp} \itref{it:nest} and
analogously for the second non-vanishing commutator.
\end{proof}
\begin{proof}[Proof of \Cref{prop:L1}] 
  {\bf Ad (i)}
The wave packet \(\tilde f^\Lambda_\T\) of \(f^\Lambda_\T\)
     can be computed via Fourier inversion theorem by noting that
     \begin{align*}
       f_\T^\Lambda (\vec x) &= 
       \int \frac{\DInt[s]k}{(2\pi)^\s} \; \Ee^{-\Ii (\omega_m(\vec k),
   \vec k)^\mu (\Lambda(\T, \vec x))_\mu} \tilde f(\vec k)
   \\&
        = 
        \int \frac{\DInt[s]k}{(2\pi)^\s\omega_m(\vec k)} \; \Ee^{-\Ii (\Lambda^{-1}(\omega_m(\vec k),
        \vec k))^\mu (\T, \vec x)_\mu} \tilde f(\vec k) \, \omega_m(\vec k)
        \\&=
        \int \frac{\DInt[s]k'}{ (2\pi)^\s\omega_m(\vec k')} \; \Ee^{-\Ii (\omega_m(\vec k'),
          \vec k')^\mu (\T, \vec x)_\mu} \tilde f(\vec \Lambda_m(\vec k')) \,
          \omega_m(\vec \Lambda_m(\vec k')),
     \end{align*}
     where we substituted \(\vec k' := \vec \Lambda_m^{-1}(\vec k)\) after rewriting with
     respect to the standard Lorentz-invariant measure~\(\DInt^{\s}
       k/\omega_m(\vec k)\) (more precisely \(\vec\Lambda_m\)-invariant, see e.g.\ \cite{RS2} Thm.~IX.37)
       and used that \((\Lambda (\omega_m(\vec k), \vec k))^0 = \omega_m(\vec
       \Lambda_m(\vec k))\) due to \((\omega_m(\vec k), \vec k) \in H_m\) and
     Lorentz-invariance of the mass hyperboloid~\(H_m\). 
     
  {\bf Ad (ii)} We obtain
     \begin{align*}
       \BB_{\T}^\Lambda(f) \Omega &= 
      \int \DInt[\s] x \; f^\Lambda(\T, \vec x) \; \Ee^{\Ii (\Lambda^{-1} P)^\mu (\T, \vec
      x)_\mu} \Bb \Omega
      =
      \Ee^{\Ii H_\Lambda \T}
      \int \DInt[\s] x \; f^\Lambda(\T, \vec x) \; \Ee^{- \Ii \Pp_\Lambda \cdot \vec x} \Bb \Omega
      \\&=
      \Ee^{\Ii H_\Lambda \T}
      \int \DInt[\s] x \, \DInt E_{\Pp_\Lambda}(\vec p)\; f^\Lambda_\T( \vec x) \; \Ee^{- \Ii \vec p \cdot \vec x} \Bb \Omega
      = 
      \Ee^{\Ii H_\Lambda \T}
      \tilde f^\Lambda_\T (\Pp_\Lambda) \Bb \Omega.
      \label{eq:LOneComp}\numberthis
     \end{align*}
     Here we first used translation-invariance of \(\Omega\), 
     \(
       P^\mu (\Lambda x)_\mu =
       (\Lambda^{-1} P)^\mu x_\mu
       \), and then we abbreviated \((H_\Lambda, \Pp_\Lambda) := \Lambda^{-1}(H,
     \Pp)\), \(f^\Lambda_\T(\vec x) := f(\Lambda(\T, \vec x))\). Further 
 due to  \eqref{eq:Lftilde},
     \( \tilde f_\T^\Lambda(\vec k) = 
        \frac{ \omega_m(\vec \Lambda_m(\vec k))}{\omega_m(\vec k)} \tilde f(\vec \Lambda_m(\vec k)) 
        \Ee^{- \Ii \omega_m(\vec k) t}
     \), and therefore 
     \( 
      \Ee^{- \Ii \omega_m(\vec P_\Lambda) t}
  \Bb\Omega 
  = \Ee^{- \Ii \omega_m(\vec P_\Lambda) t} E(H_m)\Bb\Omega
  = \Ee^{- \Ii H_\Lambda t} E(H_m)\Bb\Omega
\), so that \(\T\)-dependent terms cancel in \eqref{eq:LOneComp}. 
Finally \eqref{eq:LOne} is obtained by inserting 
\(\vec P_\Lambda = \vec \Lambda_m^{-1}(\vec P) \).
   \end{proof}


\end{document}